\keywords{Mutual exclusion, safe registers, overlapping reads and writes,
  atomicity, speed independence, reactive temporal logic, Kripke structures,
  progress, justness, fairness, safety properties, blocking, fair schedulers,
  process algebra, CCS, time-outs, labelled transition systems, Petri nets,
  asymmetric concurrency relations, Peterson's protocol}
\definecolor[named]{Green}{cmyk}{1,0,0.7,0.3}
\definecolor[named]{DarkBlue}{rgb}{0,0,.7}
\newfont{\bbb}{bbm10 scaled 1100}                       
\newcommand{\IN}{\mbox{\bbb N}}                         
\newcommand{\IP}{\mbox{\bbb P}}                         
\newcommand{\IT}{\mbox{\bbb T}}                         
\DeclareSymbolFont{frenchscript}{OMS}{ztmcm}{m}{n}      
\DeclareMathSymbol{\Pow}{\mathord}{frenchscript}{80}    
\DeclareMathSymbol{\A}{\mathord}{frenchscript}{65}      
\DeclareMathSymbol{\Ce}{\mathord}{frenchscript}{67}     
\DeclareMathSymbol{\D}{\mathord}{frenchscript}{68}      
\DeclareMathSymbol{\E}{\mathord}{frenchscript}{69}      
\DeclareMathSymbol{\F}{\mathord}{frenchscript}{70}      
\DeclareMathSymbol{\K}{\mathord}{frenchscript}{75}      
\DeclareMathSymbol{\Q}{\mathord}{frenchscript}{81}      
\DeclareMathSymbol{\Tsk}{\mathord}{frenchscript}{84}    
\DeclareMathSymbol{\Z}{\mathord}{frenchscript}{90}      
\renewcommand{\phi}{\varphi}
\newcommand{\dcup}{\mathbin{\mbox{\href{https://www.physicsread.com/latex-union-symbol/\#latex-disjoint-union-symbol}%
                  {$\stackrel{\mbox{\huge .}}{\cup}$}}}}     
\newtheorem{exam}[thm]{Example}
\newtheorem{obse}[thm]{Observation}
\newenvironment{definition}[1]{\begin{defi} \rm \label{df:#1} }{\end{defi}}
\newenvironment{definitionq}[2]{\begin{defi}[#2] \rm \label{df:#1} }{\end{defi}}
\newenvironment{proposition}[1]{\begin{prop} \rm \label{pr:#1} }{\end{prop}}
\newenvironment{example}[1]{\begin{exam} \rm \label{ex:#1} }{\end{exam}}
\newenvironment{observation}[1]{\begin{obse} \rm \label{obs:#1} }{\end{obse}}
\newcommand{\df}[1]{Definition~\ref{df:#1}}
\newcommand{\ex}[1]{Example~\ref{ex:#1}}
\newcommand{\Sec}[1]{Section~\ref{sec:#1}}
\def\comesfrom{\@transition\leftarrowfill}
\def\goesto{\@transition\rightarrowfill}
\def\ngoesto{\@transition\nrightarrowfill}
\def\Goesto{\@transition\Rightarrowfill}
\def\nGoesto{\@transition\nRightarrowfill}
\def\xmapsto{\@transition\mapstofill}
\def\nxmapsto{\@transition\nmapstofill}
\def\@transition#1{\@@transition{#1}}
\newbox\@transbox
\newbox\@arrowbox
\def\@@transition#1#2%
\wd\@transbox{#1}
\@transbox\hbox{$\mathop{\box\@arrowbox}\limits^{\box\@transbox}$}
\def\nrightarrowfill{$\m@th\mathord-\mkern-6mu%
  \cleaders\hbox{$\mkern-2mu\mathord-\mkern-2mu$}\hfill
  \mkern-6mu\mathord\not\mkern-2mu\mathord\rightarrow$}
\def\Rightarrowfill{$\m@th\mathord=\mkern-6mu%
  \cleaders\hbox{$\mkern-2mu\mathord=\mkern-2mu$}\hfill
  \mkern-6mu\mathord\Rightarrow$}
\def\nRightarrowfill{$\m@th\mathord=\mkern-6mu%
  \cleaders\hbox{$\mkern-2mu\mathord=\mkern-2mu$}\hfill
  \mkern-6mu\mathord\not\mathord\Rightarrow$}
\def\mapstofill{$\m@th\mathord\mapstochar\mathord-\mkern-6mu%
  \cleaders\hbox{$\mkern-2mu\mathord-\mkern-2mu$}\hfill
  \mkern-6mu\mathord\rightarrow$}
\def\nmapstofill{$\m@th\mathord\mapstochar\mathord-\mkern-6mu%
  \cleaders\hbox{$\mkern-2mu\mathord-\mkern-2mu$}\hfill
  \mkern-6mu\mathord\not\mkern-2mu\mathord\rightarrow$}
\newcommand{\goto}[2][]{\mathrel{\goesto{~#2{\color{black}\;,\;#1}~}}}        
\newcommand{\weg}[1]{}                                  
\newcommand{\plat}[1]{\raisebox{0pt}[0pt][0pt]{#1}}     
\newcommand{\Tr}{\textit{Tr}}                           
\newcommand{\source}{\textit{src\/}}                    
\newcommand{\target}{\textit{trg\/}}                    
\newcommand{\aconc}{\mathrel{\mbox{$\smile\hspace{-.95ex}\raisebox{3.1pt}{$\scriptscriptstyle\bullet$}$}}}
\newcommand{\conc}{\smile}                              
\newcommand{\nconc}{\,\not\!\smile}                     
\newcommand{\naconc}{\mathrel{\mbox{$\,\not\!\smile\hspace{-.95ex}\raisebox{3.1pt}{$\scriptscriptstyle\bullet$}$}}}
\def\powermultiset#1{\IN^{#1}}
\newcommand{\monus}{\mathrel{\raisebox{-0pt}[0pt][0pt]{$
                      \stackrel{\raisebox{-5pt}[0pt][0pt]{\huge$\cdot$}}
                               {\raisebox{0pt}[0pt][0pt]{$-$}}$}}}
\def\precond#1{{\vphantom{#1}}^\bullet #1}               
\def\postcond#1{{#1}^\bullet}                            
\newcommand{\cT}{{\rm T}}                               
\newcommand{\Left}{\textsc{l}}                          
\newcommand{\R}{\textsc{r}}                             
\newcommand{\lni}[1][i]{\mbox{\color{blue}\it ln$_{#1}$}}
\newcommand{\en}[1][i]{\mbox{\color{blue}\it en$_{#1}$}}
\newcommand{\lc}[1][i]{\mbox{\color{red}\it lc$_{#1}$}}
\newcommand{\ec}[1][i]{\mbox{\color{red}\it ec$_{#1}$}}
\newcommand{\lnB}{\mbox{\color{magenta}\it ln$_B$}}
\newcommand{\enB}{\mbox{\color{magenta}\it en$_B$}}
\newcommand{\lnA}{\mbox{\color{blue}\it ln$_A$}}
\newcommand{\enA}{\mbox{\color{blue}\it en$_A$}}
\newcommand{\lcB}{\lc[B]}
\newcommand{\ecB}{\ec[B]}
\newcommand{\lcA}{\lc[A]}
\newcommand{\ecA}{\ec[A]}
\newcommand{\MEA}{{\color{Green}ORD}}
\newcommand{\MEB}{{\color{Green}ME}}
\newcommand{\MEC}[1][CC]{{\color{Green}EC$^{\it #1}$}}
\newcommand{\MED}[1][CC]{{\color{Green}LC$^{\it #1}$}}
\newcommand{\MEE}[1][CC]{{\color{Green}EN$^{\it #1}$}}
\newcommand{\MEF}[1][CC]{{\color{Green}LN$^{\it #1}$}}
\newcommand{\rA}{{\it readyA}}
\newcommand{\rB}{{\it readyB}}
\newcommand{\tu}{{\it turn}}
\newcommand{\tr}{{\it true}}
\newcommand{\fa}{{\it false}}
\newcommand{\procA}{{\rm A}}
\newcommand{\procB}{{\rm B}}
\newcommand{\lm}[3]{$\begin{array}{c}#3\\[-3pt]\ell_{#1} m_{#2}\end{array}$}
\newcommand{\Tu}[1]{{\it Turn}^{#1}}
\newcommand{\RA}[1]{{\it ReadyA}^{\,#1}}
\newcommand{\RB}[1]{{\it ReadyB}^{\,#1}}
\newcommand{\ass}[2]{{\it asgn}_{#1}^{\,#2}}
\newcommand{\noti}[2]{{\it n}_{#1}^{\,#2}}
\newcommand{\rt}{{\rm t}}                               
\begin{document}

\title[Modelling Mutual Exclusion in a Process Algebra with Time-outs]
    {Modelling Mutual Exclusion\texorpdfstring{\\}{ }in a Process Algebra with Time-outs}
\author[R.J. van Glabbeek]{Rob van Glabbeek}
\address{Data61, CSIRO, Sydney, Australia\hfill School of Informatics, University of Edinburgh, UK}
\thanks{Supported by Royal Society Wolfson Fellowship RSWF\textbackslash R1\textbackslash 221008}
\address{School of Computer Science and Engineering, University of New South Wales, Sydney, Australia}
\email{rvg@cs.stanford.edu}

\begin{abstract}
I show that in a standard process algebra extended with time-outs one can correctly model mutual
exclusion in such a way that starvation-freedom holds without assuming fairness or justness, even
when one makes the problem more challenging by assuming memory accesses to be atomic.
This can be achieved only when dropping the requirement of speed independence.
\end{abstract}

\maketitle


\section{Introduction}

A \emph{mutual exclusion protocol} mediates between competing processes to make sure that at any time at most
one of them visits a so-called \emph{critical section} in its code.  Such a protocol is
\emph{starvation-free} when each process that intends to enter its critical section will eventually
be allowed to do so.

As shown in \cite{KW97,Vogler02,GH15b}, it is fundamentally impossible to correctly model a mutual
exclusion protocol as a Petri net or in standard process algebras, such as CCS \cite{Mi90ccs}, CSP
\cite{BHR84,Ho85} or ACP \cite{BW90,Fok00}, unless starvation-freedom hinges on a fairness
assumption. The latter, in the view of \cite{GH15b}, does not provide an adequate solution, as
fairness assumptions are in many situations unwarranted and lead to false conclusions.

In \cite{EPTCS255.2} a correct process-algebraic rendering of mutual exclusion is
given, but only after making two important modifications to standard process algebra.  The first
involves making a justness assumption. Here \emph{justness} \cite{GH15a,GH19} is an alternative to
fairness, in some sense a much weaker form of fairness---meaning weaker than weak fairness.%
\footnote{Justness is the assumption that when a certain activity \emph{can} occur in a distributed system,
  eventually either it \emph{will} occur, or one of the resources needed to perform this activity is
  used for some other purpose. This is illustrated by the forthcoming Examples
  \ref{ex:beer}--\ref{ex:Bart separated}; justness is a strong enough assumption to conclude that
  Bart gets a beer in \ex{Bart separated}, but to ensure this even for \ex{beer} one needs the
  stronger assumption of weak fairness.}
Unlike (strong or weak) fairness, its use typically \emph{is} warranted and does not lead to false conclusions.
The second modification is the addition of a nonstandard construct---\emph{signals}---to CCS, or any other
standard process algebra.  Interestingly, both modifications are necessary; merely assuming justness, or
merely adding signals, is insufficient.

A similar process-algebraic rendering of mutual exclusion was given earlier in \cite{CDV09},
using a fairness assumption proposed in \cite{CDV06a} under the name \emph{fairness of actions}. In
\cite{GH19} fairness of actions (there called \emph{fairness of events}) was seen to coincide with
justness.

Bouwman \cite{Bou18,BLW20} points out that it is possible to correctly model mutual exclusion
without adding signals to the language at all, instead reformulating the justness requirement in
such a way that it effectively turns some actions into signals. Since the justness assumption was
fairly new, and needed to be carefully defined to describe its interaction with signals anyway,
redefining it to better capture read actions in mutual exclusion protocols is a plausible solution.

Yet justness is essential in all the above approaches.
This may be seen as problematic, because large parts of the foundations of process algebra are
incompatible with justness, and hence need to be thoroughly reformulated in a justness-friendly
way. This is pointed out in~\cite{vG19c}.\footnote{This problem has however been mitigated in
  \cite{BLW20}, where a standard process algebra is used in a way that is compatible with justness.
  This led to the successful verification of Peterson's mutual exclusion protocol under the
  assumption of justness, using the mCRL2 toolset \cite{BGKLNVWWW19}. The price to be paid for this
  is that more information needs to be encoded in the actions that are used as transition labels,
  and that many transitions that in classical models would be labelled with the hidden action
  $\tau$, need now have a visible label. The latter inhibits state-space reduction techniques that
  abstract from such actions.}

In \cite{GH15b}, the inability to correctly capture mutual exclusion in CCS and related process
algebras was seen as a sign that these process algebras lack some degree of universal expressiveness,
rather than as a statement about the impossibility of mutual exclusion. The repairs in
\cite{CDV09,EPTCS255.2,Bou18,BLW20} seek to rectify this lack of expressiveness, either by considering
language extensions, or by changing the definition of justness for the language.
My presentation \cite{vG18c} took a different perspective, and claimed that the impossibility results
of \cite{KW97,Vogler02,GH15b} can be seen as saying something about the real world, rather than
about formalisms we use to model it. The argument rests on two crucial features of mutual exclusion
protocols that I call \emph{atomicity} and \emph{speed independence}. Instead of protocol features
they can also be seen as assumptions on the hardware on which the mutual exclusion protocol will be running.

Atomicity is the assumption that \emph{memory accesses such as reads and writes take a positive amount of time,
yet two such accesses to the same store or register cannot overlap in time, so that a second memory
access can take place only after a first access is completed}.\footnote{This appears to be a
  consequence of seeing reads and writes as \emph{atomic actions}. In this paper ``atomicity''
  refers to the above assumption; it will not include the case where one memory access may abort
  or interrupt another, even when this entails that memory accesses cannot overlap in time.
  Subtly different notions of atomicity are that of an \emph{atomic register}, which merely behaves
  \emph{as if} its reads and writes are atomic, and that of an \emph{atomic transaction} in database
  systems, which needs to either complete, or be rolled back completely.}
Speed independence says that nothing may be assumed about the relative speed of processes competing
for access to the critical section, or for read/write access to some register. In particular, if two
processes are engaged in a race, and one of them has nothing else to do but performing the action
that wins the race, whilst the other has a long list of tasks that must be done first, it may still
happen that the other process wins.

When rejecting solutions to the mutual exclusion problem that are merely probabilistically correct,
or where starvation-freedom hinges on a fairness assumption, \cite{vG18c} claims, although without
written evidence, that when assuming atomicity as well as speed independence, mutual exclusion is impossible.
\Sec{impossible} of the present paper illustrates and substantiates this claim for Peterson's mutual exclusion protocol.

In \cite{GH19} the notion of justness from \cite{GH15a} was reformulated in terms of a concurrency
relation $\aconc$ between the transitions in a labelled transition system. This relation may be
inherited from a similar relation between the transitions of a Petri net or the instructions in the
pseudocode of protocol descriptions. Here $t \naconc u$ means that transition or instruction $u$
uses (takes away) a resource that is needed to perform transition or instruction $t$,
so that if $u$ occurs prior to, or instead of, $t$, it is not possible for $t$ to commence before $u$
is finished.\footnote{In standard Petri nets, standard process algebras, and many other models of
  concurrency, the concurrency relation is symmetric, in the sense that  $t \naconc u \Rightarrow u \naconc t$.
  Exceptions to symmetry are rare, but they will play a vital r\^ole in parts of this paper.
  They can occur when a transition needs a resource (like sunshine) without blocking it for others, or when a
  transition uses a resource when available, without actually needing it.}
The definitions of justness from  \cite{GH15a} and \cite{GH19} were shown equivalent in \cite{vG19}.

The assumption of atomicity can be formulated directly in terms of the concurrency relation $\aconc$.
It says about read or write instructions or transitions $\ell$ and $m$,
\begin{center}
  if $\ell$ and $m$ access the same register then $\ell \naconc m$ and $m \naconc \ell$.
\end{center}
In other words, in case $\ell$ and $m$ try to access the same register in parallel, and
$m$ wins the race for access to this register, $\ell$ cannot take place until $m$ is completed.
The case that is relevant for the mutual exclusion problem is where $\ell$ writes and $m$ reads.

I see only two alternatives to $\ell \naconc m \wedge m \naconc \ell$. One is that the memory accesses $\ell$ and $m$
overlap in time. This possibility has been investigated by Lamport~\cite{bakery}, who assumes that a
read action that overlaps with a write on the same register can return any possible value of that register.
Since the return of an unexpected value increases the set of possible behaviours of a mutual exclusion protocol,
Lamport implicitly takes the position that assuming overlap of actions makes the mutual exclusion
problem more challenging than assuming atomicity. Yet, he shows that his bakery algorithm~\cite{bakery}
constitutes an entirely correct solution. It moreover trivially is speed independent.
However, according to \cite{vG18c} atomicity is the more challenging assumption, as when assuming
overlap a correct speed-independent solution exists, and when assuming atomicity it does not.

The second alternative to $\ell \naconc m \wedge m \naconc \ell$ retains the assumption that memory
accesses to the same register cannot overlap in time, but assumes write actions to have priority
over reads. A write simply aborts a read that happens to be in progress, which can restart after the write is over.
Following \cite{CDV09}, I refer to this assumption as \emph{non-blocking reading}.
When $\ell$ is a write action and $m$ a read of the same register, it stipulates that $\ell \aconc m$,
yet $m \naconc \ell$. This yields an asymmetric concurrency relation, which was not foreseen in 
classical treatments of concurrency \cite{NPW81,GM84,BC87,Wi87a,GV87,DDM87,Ol87}.

The assumption of speed independence is built in in CCS and Petri nets, in the sense
that any correct mutual exclusion protocol formalised therein is automatically speed independent.
This is because these models lack the expressiveness to make anything dependent on speed.
In \Sec{LTSC}, following \cite{vG19c}, I define a (symmetric) concurrency relation between Petri net
transitions and between CCS transitions that is consistent with the work in \cite{NPW81,GM84,BC87,Wi87a,GV87,DDM87,Ol87}.
It always yields $\ell \naconc m$ when $\ell$ and $m$ both access the same register.
When taking this concurrency relation as an integral part of semantics of CCS or Petri nets,
it follows that also the assumption of atomicity is built in in these frameworks.
This makes the impossibility results of \cite{KW97,Vogler02,GH15b} special cases of the
impossibility claim from \cite{vG18c}. The latter can be seen as a generalisation of the former that
is not dependent on a particular modelling framework.

The process algebras of \cite{CDV09} and \cite{EPTCS255.2} model the possibility of\pagebreak[3]
non-blocking reading. This enables a correct rendering of speed independent mutual exclusion
without resorting to a fairness assumption. The first such correct model of exclusion occurs in
Vogler \cite{Vogler02} in terms of Petri nets extended with read arcs; the latter enable the modelling of
non-blocking reading. 
The correct modelling of mutual exclusion within CCS as
proposed by Bouwman \cite{Bou18} also exploits non-blocking reading. The justness assumption as
formulated by Bouwman can in retrospect be seen as an instance of justness as defined in \cite{GH19},
but based on a concurrency relation $\aconc$ between CCS transitions that essentially differs from
the one in \Sec{LTSC}, and that is not consistent with the work in
\cite{NPW81,GM84,BC87,Wi87a,GV87,DDM87,Ol87}, although it is entirely consistent with the
interleaving semantics of CCS given by Milner \cite{Mi90ccs}.
The claim in \cite{GH15b,EPTCS255.2} that mutual exclusion cannot be rendered satisfactory in CCS
holds only when seeing the concurrency relation of \Sec{LTSC} (or the resulting notion of justness)
as an integral part of this language, and hence is not in contradiction with the findings of Bouwman \cite{Bou18}.

In \cite{vG21} I extended standard process algebra with a time-out operator, thereby increasing its
absolute expressiveness, while remaining within the realm of untimed process algebra, in the sense
that the progress of time is not quantified. The present paper shows that the addition of time-outs
to standard process algebra makes it possible to correctly model mutual exclusion under the
assumption of atomicity, such that starvation-freedom holds without assuming fairness.
My witness for this claim will be a model of Peterson's mutual exclusion protocol.
In view of the above, this model will not be speed independent.

Moreover, starvation-freedom can be shown to hold, not only without assuming fairness, but even
without assuming justness.  Instead, one should make the assumption called \emph{progress} in
\cite{GH19}, which is weaker than justness, uncontroversial, unproblematic, and made (explicitly or
implicitly) in virtually all papers dealing with issues like mutual exclusion.
In contrast, \cite{vG18c} claims that even when dropping atomicity it is not possible to correctly
model mutual exclusion in a speed-independent way without at least assuming justness to obtain
starvation-freedom. \Sec{Peterson}/\ref{sec:EG} of the present paper illustrates and substantiates
also that claim for Peterson's mutual exclusion protocol.
\bigskip

\paragraph{Reading Guide}
Part~\ref{timeouts} of this paper shows how Peterson's mutual exclusion protocol can be modelled in
an extension of CCS with time-outs. This process algebra assumes atomicity, as one has $\ell\naconc
m$ whenever $m$ and $\ell$ are read and write transitions on the same register.  The model satisfies
all basic requirements for mutual exclusion protocols, and in addition achieves starvation-freedom
without assuming more than progress.


Part~\ref{impossibility results} recalls all impossibility claims discussed above, and illustrates
or substantiates them for Peterson's mutual exclusion protocol.
To make the impossibility claims precise, I have to define unambiguously what does and what does not
constitute a correct mutual exclusion protocol. This happens in Part~\ref{requirements}.
That part also covers \emph{fair schedulers} \cite{GH15b}, which are akin to mutual exclusion
protocols, and were used in \cite{GH15b} as a stepping stone to prove the impossibility result for
mutual exclusion in CCS\@.
I formalise four requirements on fair schedulers and six on mutual exclusion protocols that in
combination determine their correctness. Some of these requirements, including
starvation-freedom for mutual exclusion protocols, are parametrised with an assumption such as
progress, justness or fairness, that needs to be made to fulfil this requirement.
This leads to a hierarchy of quality criteria for fair schedulers and mutual exclusion protocols,
where the quality of such a protocol is higher when it depends on a weaker assumption.
I also propose two related mutual exclusion protocols, the \emph{gatekeeper} and the \emph{encapsulated gatekeeper},
that meet all correctness criteria when allowing (weak) fairness as parameter in some of the requirements.
As I expect most researchers in the area of mutual exclusion to agree with me that the (encapsulated)
gatekeeper is not an acceptable protocol, this underpins the verdict of \cite{GH15b} that assuming
(weak) fairness does not yield an acceptable solution.


The requirements on fair schedulers and  mutual exclusion protocols in Part~\ref{requirements}
are formulated in the language of \emph{linear-time temporal logic} \cite{Pnueli77,HuthRyan04}.
However, standard treatments of temporal logic turned out to be inadequate to formalise these requirements.
For this reason, Part~\ref{RTL} presents a form of temporal logic that is adapted for the study of
reactive systems, interacting with their environments through synchronisation of actions.
(Reactive) temporal logic primarily applies to distributed systems formalised as states in a Kripke structure.
However, it smoothly lifts to distributed systems formalised, for instance, as states in labelled
transition systems, as Petri nets, or as expressions in a process algebra like CCS\@. As explained
in \Sec{Models}, this is achieved through canonical translations from (states in) labelled transition
systems to (states in) Kripke structures, and from Petri nets or process algebra expressions to states in 
labelled transition systems. Assumptions such as progress, justness and fairness are gathered under
the heading \emph{completeness criteria}, as in essence they say which execution paths are regarded
as complete runs of a represented system. These criteria are incorporated in reactive temporal logic.
To capture justness, \Sec{LTSC} defines a concurrency relation $\naconc$ on the labelled transition systems
that occur in the translation steps from CCS or Petri nets to Kripke structures.



As a reading guide, I offer a table of contents and the diagram of Figure~\ref{dependence}.
\begin{figure}[ht]
\newcommand{\ds}[1]{\raisebox{-1.2pt}{\ref{sec:#1}}}
\input{dependence}
\centerline{\raisebox{1ex}{\box\graph}}
\color{black}
\caption{\it Dependence relations between the sections of this paper}\label{dependence}
\end{figure}
Here a dashed arrow indicates that although a concept introduced in the source section returns in
the target, in spite of this the target section can be read independently of the source.
The different colours mark the four parts of which this paper consists.
Sections \ref{sec:motivation}--\ref{sec:blocking} and \ref{sec:history}--\ref{sec:formalising mutex}
are taken from \cite{EPTCS322.6}; the only added novelty is the treatment of the next-state operator
{\bf X} in reactive linear-time temporal logic, and the corresponding mild simplification of the
requirements on fair schedulers and mutual exclusion protocols in Sections~\ref{sec:formalising FS}
and~\ref{sec:formalising mutex}.
\Sec{translating}, on translating reactive temporal logic into standard temporal logic, is also new here,
as well as \Sec{safety}, characterising a fragment of linear-time temporal logic that denotes safety formulas.
On that fragment there is no difference between reactive and standard temporal logic.
\advance\textheight 27pt

\newpage

\tableofcontents
\advance\textheight -27pt

\addtocontents{toc}{\protect\vspace{-4pt}}
\part{Reactive Temporal Logic}\label{RTL}

Whereas standard treatments of temporal logic are adequate for \emph{closed systems}, having no
run-time interactions with their environment, they fall short for \emph{reactive systems},
interacting with their environments through synchronisation of actions.
Here I present \emph{reactive temporal logic} \cite{EPTCS322.6}, a form of temporal logic adapted
for the study of reactive systems.

\section{Motivation}\label{sec:motivation}

\emph{Labelled transition systems} are a common model of distributed systems.  They consist of sets
of states, also called \emph{processes}, and transitions---each transition going from a source state
to a target state. A given distributed system $\D$ corresponds to a state $P$ in a transition system
$\IT$---the initial state of $\D$.  The other states of $\D$ are the processes in $\IT$
that are reachable from $P$ by following the transitions. The transitions are labelled by
\emph{actions}, either visible ones or the invisible action $\tau$. Whereas a $\tau$-labelled
transition represents a state-change that can be made spontaneously by the represented system,
$a$-labelled transitions, for $a\neq \tau$, merely represent potential activities of $\D$,
for they require cooperation from the \emph{environment} in which $\D$ will be running, sometimes
identified with the \emph{user} of system $\D$. A typical example is the acceptance of a
coin by a vending machine. For this transition to occur, the vending machine should be in a state
where it is enabled, i.e., the opening for inserting coins should not be closed off,
but also the user of the system should partake by inserting the coin.

{\makeatletter
\let\par\@@par
\par\parshape0
\everypar{}\begin{wrapfigure}[4]{r}{0.2\textwidth}
 \vspace{-2ex}
 \input{pretzel}
  \centerline{\raisebox{1ex}{\box\graph}}
 \end{wrapfigure}
Consider a vending machine that alternatingly accepts a coin ($c$) and dispenses a pretzel ($p$).
Its labelled transition system is depicted on the right. In standard temporal logic one can express
that each action $c$ is followed by $p$: whenever a coin is inserted, a pretzel will be dispensed.
Aligned with intuition, this formula is valid for the depicted system.
However, by symmetry one obtains the validity of a formula saying that each $p$ is followed by a
$c$: whenever a pretzel is dispensed, eventually a new coin will be inserted. But that clashes with intuition.
\par}
In \cite{EPTCS322.6} I enriched temporal logic judgements $P \models \phi$, saying that system $P$ satisfies
formula $\phi$, with a third argument $B$, telling which actions can be blocked by the environment
(by failing to act as a synchronisation partner) and which cannot. When stipulating that the coin
needs cooperation from a user, but producing the pretzel does not, the two temporal judgements can be
distinguished, and only one of them holds. I also introduced a fourth argument $CC$---a completeness
criterion---that incorporates progress, justness and fairness assumptions employed when making a
temporal judgement. This yields statements of the form $P \models^{CC}_B \phi$.\footnote{The
  technical development introduces ternary judgements $P \models^{CC} \phi$ as a primitive,
  and obtains the quaternary judgements $P \models^{CC}_B \phi$ by employing a completeness
  criterion $CC(B)$ that itself is parametrised by a set $B$ of blocking actions.}

The work in \cite{EPTCS322.6} builds on an earlier approach from \cite{GH15a}, where
judgements $P \models^{CC}_B \phi$ were effectively introduced. However, there they were written
$P \models \phi$, based on the assumption that for a given application a completeness criterion $CC$
and a set of blocking actions $B$ would be fixed. The idea was that at the beginning of a paper
employing temporal logic, a given $CC$ and $B$ would be declared, after which all forthcoming judgements
$P \models \phi$ would be interpreted as $P \models^{CC}_B \phi$. The novelty of the approach in
\cite{EPTCS322.6} is to make $CC$ and $B$ as variable as $P$ and $\phi$, so that in the description
of a single system, temporal judgements with different values of $CC$ and $B$ can be combined.

Suppose that $P$ is the initial state of the example above,
and ${\bf G}(a \Rightarrow {\bf F}b)$ is a formula that says that each action $a$ is followed by a $b$.
Abstracting from the completeness criterion for the moment, one has
\[ P \models_{\{c\}} {\bf G}(c \Rightarrow {\bf F}p) \qquad
   P \not\models_{\{c\}} {\bf G}(p \Rightarrow {\bf F}c) \qquad
   P \models_\emptyset {\bf G}(p \Rightarrow {\bf F}c)\;. \]
The first judgement says that whenever a coin is inserted, a pretzel will be dispensed,
even if we operate in an environment that may never insert a coin. By taking $B=\{c\} \not\ni p$,
the judgement also assumes that the environment will never block the production of a pretzel.

The second judgement says that in the same environment there is no guarantee that each production
of a pretzel is followed by the insertion of another coin.

The third judgement says that if we happen to run our vending machine in an environment where
the user is perpetually eager to insert a new coin, after each pretzel, acceptance of the next coin is guaranteed.
This is an important correctness property of the vending machine. Without such a property the
machine is rather unsatisfactory. Hence a specification of the kind of vending machine one would
like to have could be a combination of the first and third judgement above. This kind of
specification was not foreseen in \cite{GH15a}.

\section{Kripke Structures and Linear-time Temporal Logic}\label{sec:Kripke}

\begin{definition}{Kripke}
Let $AP$ be a set of \emph{atomic predicates}.
A \emph{Kripke structure} over $AP$ is tuple $(S, \rightarrow, \models)$ with $S$ a set (of \emph{states}),
${\rightarrow} \subseteq S \times S$, the \emph{transition relation}, and ${\models} \subseteq S \times AP$.
$s \models p$ says that predicate $p\in AP$ \emph{holds} in state $s \in S$.
\end{definition}

\noindent
Here I generalise the standard definition (see for instance \cite{HuthRyan04}) by dropping the condition of \emph{totality},
requiring that for each state $s\in S$ there is a transition $(s,s')\in {\rightarrow}$.
A \emph{path} in a Kripke structure is a nonempty finite or infinite sequence $s_0,s_1,\dots$ of states, such
that $(s_i,s_{i+1}) \in {\rightarrow}$ for each adjacent pair of states $s_i,s_{i+1}$ in that sequence.
Write $\rho \leq \pi$ if path $\rho$ is a prefix of path $\pi$.
If $\rho \leq \pi$ and $\rho$ is finite, then $\pi{\upharpoonright}\rho$ denotes the suffix of $\pi$ that remains after removing
the prefix $\rho$, but not the last state of $\rho$.
The \emph{length} of a path $\pi$, denoted $|\pi|\in\IN\cup\{\infty\}$, is the number of
transitions in $\pi$; for instance $l(s_0 s_1 s_2 s_3)=3$.

A distributed system $\D$ can be modelled as a state $s$ in a Kripke structure $K$.
A run of $\D$ then corresponds with a path in $K$ starting in $s$.
Whereas each finite path in $K$ starting from $s$ models a \emph{partial run} of $\D$,
i.e., an initial segment of a (complete) run, typically not each path models a run.
Therefore a Kripke structure constitutes a good model of distributed systems
only in combination with a \emph{completeness criterion} \cite{vG19}: a selection of a
set of paths as \emph{complete paths}, modelling runs of the represented system.

The default completeness criterion, implicitly used in almost all work on temporal logic, classifies
a path as complete iff it is infinite. In other words, only the infinite paths, and all of them,
model (complete) runs of the represented system. This applies when adopting the condition of
totality, so that each finite path is a prefix of an infinite path.  Naturally, in this setting
there is no reason to use the word ``complete'', as ``infinite'' will do.  As I plan to discuss
alternative completeness criteria in \Sec{completeness criteria}, I here already refer to paths
satisfying a completeness criterion as ``complete'' rather than ``infinite''.
Moreover, when dropping totality, the default completeness criterion is adapted to declare a path
complete iff it either is infinite or ends in a state without outgoing transitions \cite{DV95}.

\emph{Linear-time temporal logic} (LTL) \cite{Pnueli77,HuthRyan04} is a formalism explicitly designed to formulate
properties such as the safety and liveness requirements of mutual exclusion protocols. Its syntax is
\[\phi,\psi ::= p \mid \neg \phi \mid \phi \wedge \psi  \mid {\bf X}\phi \mid  {\bf F}\phi \mid
           {\bf G}\phi \mid  \psi {\bf U} \phi \]
with $p \in AP$ an atomic predicate. The propositional connectives $\Rightarrow$ and $\vee$ can be
added as syntactic sugar. It is interpreted on the paths in a Kripke structure.
The relation $\models$ between paths and LTL formulae, with $\pi\models \phi$ saying that the path
$\pi$ \emph{satisfies} the formula $\phi$, or that $\phi$ is \emph{valid} on $\pi$, is inductively
defined by
\begin{itemize}
\item $\pi \models p$, with $p\in AP$, iff $s\models p$, where $s$ is the first state of $\pi$,
\item $\pi \models \neg\phi$ iff $\pi \not\models \phi$, 
\item $\pi \models \phi \wedge \psi$ iff $\pi \models \phi$ and $\pi \models \psi$, 
\item $\pi \models {\bf X} \phi$ iff $|\pi|\mathbin>0$ and
  $\pi_{+1}\models\phi$, where $\pi_{+1}$ is obtained from $\pi$ by omitting its first state,
\item $\pi \models {\bf F} \phi$ iff $\pi{\upharpoonright}\rho\models\phi$ for some finite prefix $\rho$ of $\pi$,
\item $\pi \models {\bf G} \phi$ iff $\pi{\upharpoonright}\rho\models\phi$ for each finite prefix $\rho$ of $\pi$, and
\item $\pi \models \psi {\bf U} \phi$ iff $\pi{\upharpoonright}\rho\models\phi$ for some finite prefix $\rho$ of $\pi$,
  and $\pi{\upharpoonright}\zeta \models \psi$ for each $\zeta<\rho$.
\end{itemize}
In the standard treatment of LTL \cite{Pnueli77,HuthRyan04}, judgements $\pi\models \phi$ are
pronounced only for infinite paths $\pi$. Here I apply the same definitions verbatim to
finite paths as well. Extra care is needed only in the definition of the \emph{next-state}
operator ${\bf X} \phi$; here the condition $|\pi|\mathbin>0$ is redundant when $\pi$ is infinite.
One can define a \emph{weak next-state} operator ${\bf Y} \phi$ by
\begin{itemize}
\item $\pi \models {\bf Y} \phi$ iff $|\pi|=0$ or
  $\pi_{+1}\models\phi$, where $\pi_{+1}$ is obtained from $\pi$ by omitting the first state.
\end{itemize}
Now ${\bf Y}$ is the \emph{dual} of ${\bf X}$, in the sense ${\bf Y} \phi \equiv \neg {\bf X} \neg \phi$
and ${\bf X} \phi \equiv \neg {\bf Y} \neg \phi$, just like ${\bf F}$ is the dual of ${\bf G}$.
Here $\phi \equiv \psi$ means that $(\pi \models \phi) \Leftrightarrow (\pi \models \psi)$ for all
paths $\pi$ in all Kripke structures.
The distinction between strong and weak next-state operators stems from \cite{LPZ85},
where ${\bf F}$, ${\bf G}$, ${\bf X}$ and ${\bf Y}$ are written {\Large $\diamond$}, $\square$,
\raisebox{1pt}{\scriptsize $\bigcirc$} and {\large $\odot$}.
When only infinite paths are considered, there is no difference between ${\bf X}$ and ${\bf Y}$.

Having given meaning to judgements $\pi \models \phi$, as a derived concept one defines when an
{LTL} formula $\phi$ holds for a state $s$ in a Kripke structure, modelling a distributed system
$\D$, notation $s \models \phi$ or $\D \models \phi$. This is the case iff $\phi$
holds for all runs of $\D$.
\begin{definition}{validity}
$s \models \phi$ iff $\pi \models \phi$ for all complete paths $\pi$ starting in state $s$.
\end{definition}

\noindent
This definition depends on the underlying completeness criterion, telling which paths
model actual system runs. In situations where I consider different completeness criteria, I make this
explicit by writing $s \models^{CC} \phi$, with $CC$ the name of the completeness criterion used.
When leaving out the superscript $CC$ I refer to the default completeness criterion, defined above.

\begin{example}{beer}
Alice, Bart and Cameron stand behind a bar, continuously ordering and drinking beer.
Assume they do not know each other and order individually.
As there is only one bartender, they are served sequentially.
Also assume that none of them is served twice in a row, but as it takes no longer to drink a beer
than to pour it, each of them is ready for the next beer as soon as another person is served.
{\makeatletter
\let\par\@@par
\par\parshape0
\everypar{}\begin{wrapfigure}[6]{r}{0.25\textwidth}
 \vspace{-1ex}
 \input{Bart}
  \centerline{\raisebox{1ex}{\box\graph}}
 \end{wrapfigure}

A Kripke structure of this distributed system $\D$ is drawn on the right.
The initial state of $\D$ is indicated by a short arrow. The other three states are
labelled with the atomic predicates $A$, $B$ and $C$, indicating that Alice, Bart or Cameron,
respectively, has just acquired a beer. When assuming the default completeness criterion, valid
{LTL} formulae are ${\bf F}(A \vee C)$, saying that eventually either Alice or Cameron will get a
beer, or ${\bf G}(A \Rightarrow {\bf F}\neg A)$, saying that each time Alice got a beer
is followed eventually by someone else getting one. However, it is not guaranteed that Bart will
ever get a beer: $\D \not\models {\bf F}B$. A counterexample for this formula is the infinite run in
which Alice and Cameron get a beer alternatingly.
\par}
\end{example}

\begin{example}{Bart alone}
Bart is the only customer in a bar in London, with a single bartender.
He only wants one beer.
A Kripke structure of this system $\E$ is drawn on\hspace{10pt}\hspace{0.25\textwidth}\mbox{}\linebreak\vspace{-13pt}
{\makeatletter
\let\par\@@par
\par\parshape0
\everypar{}\begin{wrapfigure}[6]{r}{0.25\textwidth}
 \vspace{-3ex}
 \input{Bart2}
  \centerline{\raisebox{1ex}{\box\graph}}
 \end{wrapfigure}
\noindent
 the right. When assuming the default completeness criterion, this time Bart gets his beer:
$\E \models {\bf F}B$.
\par}
\end{example}

\begin{example}{Bart separated}
Bart is the only customer in a bar in London, with a single bartender.
He only wants one beer.
At the same time, Alice and Cameron are in a bar in Tokyo.
They drink a lot of beer. Bart is not in contact with Alice and Cameron, nor%
\hspace{10pt}\hspace{0.25\textwidth}\mbox{}\linebreak\vspace{-13pt}
{\makeatletter
\let\par\@@par
\par\parshape0
\everypar{}\begin{wrapfigure}[4]{r}{0.25\textwidth}
 \vspace{-2ex}
 \input{Bart3}
  \centerline{\raisebox{1ex}{\box\graph}}
 \end{wrapfigure}
\noindent
is there any connection between the two bars.
Yet, one may choose to model the drinking in these two bars as a single distributed system.
A Kripke structure of this system $\F$ is drawn on the right, collapsing the orders of
Alice and Cameron, which can occur before or after Bart gets a beer, into self-loops.
When assuming the default completeness criterion, Bart cannot count on a beer:
$\F\not\models {\bf F}B$.
\par}
\end{example}

\section{Labelled Transition Systems, Process Algebra and Petri Nets}\label{sec:Models}

The most common formalisms in which to present reactive distributed systems are pseudocode,
process algebra and Petri nets. The semantics of these formalisms is often given through translations into
labelled transition systems (LTSs), and these in turn can be translated into Kripke structures, on
which temporal formulae from languages such as LTL are interpreted. These translations make the
validity relation $\models$ for temporal formulae applicable to all these formalisms. A state in
an LTS, for example, is defined to satisfy an {LTL} formula $\phi$ iff its translation into a
state in a Kripke structure satisfies this formula.

\begin{figure}[ht]
\input{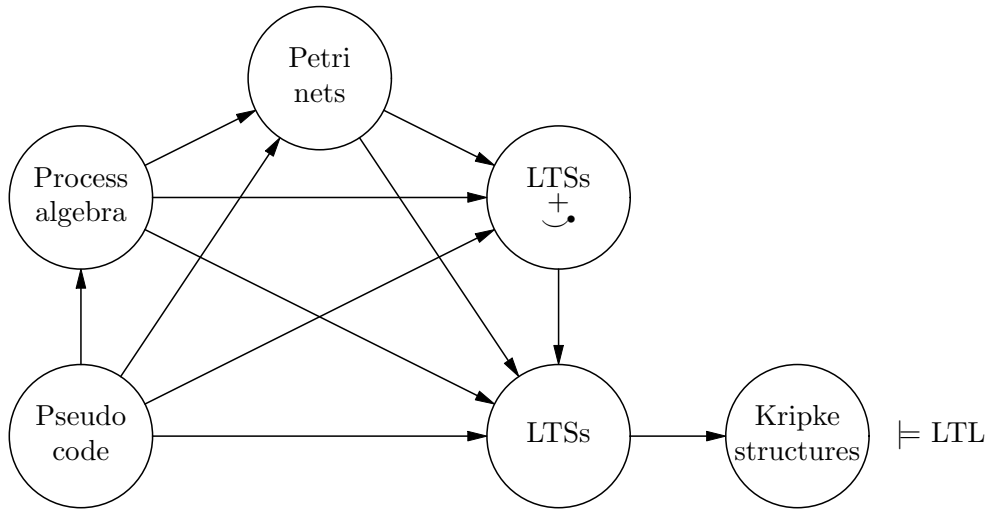}
\centerline{\raisebox{1ex}{\box\graph}}
\caption{\it Formalisms for modelling mutual exclusion protocols}\label{models}
\end{figure}

\noindent
Figure~\ref{models} shows a commutative diagram of semantic translations found in the literature, from
pseudocode, process algebra and Petri nets via LTSs to Kripke structures. Each step in the
translation abstracts from certain features of the formalism at its source.
Some useful requirements on distributed systems can be adequately formalised in
process algebra or Petri nets, and informally described for pseudocode, whereas LTSs and Kripke
structures have already abstracted from the relevant information. An example will be FS\hspace{1pt}1 on page \pageref{FS1}.
I also consider LTSs upgraded with a concurrency relation $\aconc$ between transitions; these will be
expressive enough to formalise some of these requirements.

\subsection{Labelled Transition Systems}\label{sec:LTS}

\begin{definition}{LTS}
Let $A$ be a set of \emph{observable actions}, and let $Act := A \dcup\{\tau\}$, with $\tau\notin A$
the \emph{hidden action}.
A \emph{labelled transition system} (LTS) over $Act$ is a tuple $(\IP,  \Tr, \source,\target,\ell)$
with $\IP$ a set (of \emph{states} or \emph{processes}),
$\Tr$ a set (of \emph{transitions}), $\source,\target:\Tr\rightarrow \IP$ and $\ell:\Tr\rightarrow Act$.
\end{definition}
\noindent
Write $s \goesto{\alpha} s'$ if there is a $t\in\Tr$ with $\source(t)=s \in\IP$,
$\ell(t)=\alpha\in Act$ and $\target(t)=s'\in\IP$. In this case $t$ \emph{goes} from $s$ to $s'$,
and is an \emph{outgoing transition} of $s$. States $s$ and $s'$ are the \emph{source} and \emph{target} of $t$.
A \emph{path} in an LTS is a finite or infinite alternating sequence of states and transitions,
starting with a state, such that each transition goes from the state before it to the state after it
(if any). A \emph{completeness criterion} on an LTS is a set of its paths.

As for Kripke structures, a distributed system $\D$ can be modelled as a state $s$ in an
LTS upgraded with a completeness criterion. A (complete) run of $\D$ is then modelled 
by a complete path starting in $s$. As for Kripke structures, the default completeness criterion
deems a path complete iff it either is infinite or ends in a \emph{deadlock}, a state without
outgoing transitions. An alternative completeness criterion could declare some infinite paths
incomplete, saying that they do not model runs that can actually occur, and/or declare some
finite paths that do not end in deadlock complete. A complete path $\pi$ ending in a state
models a run of the represented system that follows the path until its last state, and then stays
in that state forever, without taking any of its outgoing transitions. A complete path that ends in
a transition models a run in which the action represented by this last transition starts occurring but
never finishes. It is often assumed that transitions are instantaneous, or at least of finite duration.
This assumption is formalised through the adoption of a completeness criterion that holds all paths
ending in a transition to be incomplete.

The most prominent translation from LTSs to Kripke structures stems from De Nicola \& Vaandrager \cite{DV95}.
Its purpose is merely to efficiently lift the validity relation $\models$ from Kripke structures to LTSs.
It simply creates a new state halfway along any transition labelled by a visible action, and moves the
transition label to that state.
\begin{definition}{DV translation}
Let $(\IP,  \Tr, \source,\target,\ell)$ be an LTS over $Act = A \cup\{\tau\}$. The associated Kripke structure
$(S,\rightarrow,\models)$ over $A$ is given by\vspace{-1pt}
\begin{itemize}
\item $S := \IP \dcup \{t \in\Tr\mid \ell(t)\neq\tau\}$,
\item ${\rightarrow} := \{(\source(t),t),(t,\target(t)) \mid t \mathbin\in\Tr\wedge \ell(t)\mathbin{\neq}\tau\} \cup 
      \{(\source(t),\target(t))\mid t\mathbin\in\Tr \wedge \ell(t)=\tau\}$
\item and ${\models} := \{(t,\ell(t)) \mid t \in \Tr \wedge \ell(t)\mathbin{\neq}\tau\}$.
\end{itemize}
\end{definition}
\noindent
Ignoring paths ending within a $\tau$-transition, which are never deemed complete anyway,
this translation yields a bijective correspondence between the paths in an LTS and those in its
associated Kripke structure. Consequently, any completeness criterion on the LTS induces a
completeness criterion on the Kripke structure.
Hence it is now well-defined when $s \models^{CC} \phi$, with $s$ a state in an LTS, $CC$ a completeness
criterion on this LTS and $\phi$ an {LTL} formula.

\subsection{Petri Nets}\label{sec:nets}

\begin{definition}{net}
  A \emph{(labelled) Petri net} over $Act$ is a tuple
  $N = (S, T, F, M_0, \ell)$ where
  \begin{itemize}
    \item $S$ and $T$ are disjoint sets (of \emph{places} and \emph{transitions}),
    \item $F: (S \mathord\times T \mathrel\cup T \mathord\times S) \rightarrow \IN$
      (the \emph{flow relation}) 
      such that  $\forall t \mathbin\in T.\,  \exists s\mathbin\in S.~ F(s, t) > 0$,
    \item $M_0 : S \rightarrow \IN$ (the \emph{initial marking}), and
    \item $\ell: T \rightarrow Act$ (the \emph{labelling function}).
  \end{itemize}
\end{definition}

\noindent
Petri nets are usually depicted by drawing the places as circles and the transitions as boxes, containing
their label. For $x,y \mathbin\in S\cup T$ there are $F(x,y)$ arrows (\emph{arcs}) from $x$ to $y$.
When a Petri net represents a distributed system, a global state of this system is given as a
\emph{marking}, a multiset of places, depicted by placing $M(s)$ dots (\emph{tokens}) in each place
$s$.  The initial state is $M_0$.  The behaviour of a Petri net is defined by the possible moves between
markings $M$ and $M'$, which take place when a transition \emph{fires}.
In that case, the transition $t$ consumes $F(s,t)$ tokens from each place $s$.
Naturally, this can happen only if $M$ makes these tokens available in the first place. Next,
the transition produces $F(t,s)$ tokens in each place $s$. \df{firing} formalises this notion of behaviour.

A {\em multiset} over a set $X$ is a function $A\!:X \rightarrow \IN$, i.e.\ $A\in \powermultiset{X}\!\!$.
Object $x \in X$ is an \emph{element of} $A$ iff $A(x) > 0$.
The \emph{empty} multiset, without elements, is denoted $\emptyset$.
For multisets $A$ and $B$ over $X$ I write $A \leq B$ iff \mbox{$A(x) \leq B(x)$} for all $x \mathbin\in X$;
$A + B$ denotes the multiset over $X$ with $(A + B)(x):=A(x)+B(x)$,
$A \cap B$ is given by $(A \cap B)(x):= \min(A(x),B(x))$ and
$A - B$ is given by $(A - B)(x):=A(x)\monus B(x)=\mbox{max}(A(x)-B(x),0)$.

\begin{definition}{preset}
  Let $N\!=\!(S, T, F, M_0, \ell)$ be a Petri net and $t\in T$.
The multisets $\precond{t},~\postcond{t}: S \rightarrow \IN$ are given by $\precond{t}(s)=F(s,t)$ and
$\postcond{t}(s)=F(t,s)$ for all $s \in S$.
The elements of $\precond{t}$ and $\postcond{t}$ are
called \emph{pre-} and \emph{postplaces} of $t$, respectively.
\end{definition}

\begin{definition}{firing}
  Let $N \mathbin= (S, T, F, M_0,\ell)$ be a Petri net,
  $t \in T$ and $M, M' \in \IN^S\!$.
  Transition $t$ is \emph{enabled} under $M$ iff $\precond{t}\leq M$.
  In that case $M\goesto{t}_N M'$, where $M' = (M - \mbox{$^\bullet t$}) + t^\bullet$. 
\end{definition}

\noindent
A marking $M\in \IN^S$ is \emph{reachable} in a Petri net $(S, T, F, M_0,\ell)$
iff there are transitions $t_i\in T$ and markings $M_i\in \powermultiset S$
for $i\mathbin=1,\dots,k$, such that $M_k=M$ and $M_{i{-\!}1} \!\goesto{t_{i}}_N M_{i}$ for $i\mathbin=1,\dots,k$.

\begin{definitionq}{structural conflict}{\cite{GGS11}}
  A Petri net $N = (S, T, F, M_0, \ell)$ is a \emph{structural conflict net} if
  for all $t, u\in T$ with $\precond{t} \cap \precond{u} \neq \emptyset$
  and for all reachable markings $M$, one has $\precond{t} + \precond{u} \not\leq M$.
\end{definitionq}
\noindent
Here I restrict myself to structural conflict nets, henceforth simply called \emph{nets}, a class of
Petri nets containing the \emph{safe} Petri nets that are normally used to give semantics to process algebras.
\begin{definition}{net2lts}
Given a net $N = (S, T, F, M_0, \ell)$, its associated LTS $(\IP,  \Tr, \source,\target,\ell)$
is given by $\IP := \powermultiset{S}$, $\Tr:=\{(M,t)\in\powermultiset{S}\times T \mid \precond{t}\leq M\}$,
$\source(M,t) := M$, $\target(M,t) := (M - \precond{t}) + \postcond{t}$ and $\ell(M,t) := \ell(t)$.
The net $N$ maps to the state $M_0$ in this LTS\@.
\end{definition}
\noindent
A \emph{completeness criterion} on a net is a completeness criterion on its associated LTS\@.
Now $N \models^{CC} \phi$ is defined to hold iff $M_0 \models^{CC} \phi$ in the associated LTS\@.

\subsection{CCS}\label{sec:CCS}
\mbox{}\\[2ex]
\newcommand{\ca}{a}
\noindent
The \emph{Calculus of Communicating Systems} (CCS) \cite{Mi90ccs} is parametrised with sets ${\K}$ of \emph{agent identifiers} and $\A$ of \emph{names};
each $X\in\K$ comes with a defining equation \plat{$X \stackrel{{\it def}}{=} P$} with $P$ being a CCS expression as defined below.
\href{https://en.wikipedia.org/wiki/List_of_mathematical_symbols_by_subject#Set_operations}{\raisebox{0pt}[10pt]{$Act := \A\dcup\bar\A\dcup \{\tau\}$}}
is the set of {\em actions}, where $\tau$ is a special \emph{internal action}
and $\bar{\A} := \{ \bar{\ca} \mid \ca \in \A\}$ is the set of \emph{co-names}.
Complementation is extended to $\bar\A$ by setting $\bar{\bar{\mbox{$\ca$}}}=\ca$.
Below, $\ca$ ranges over $\A\cup\bar\A$, $\alpha$ over $Act$, and $X,Y$ over $\K$.
A \emph{relabelling} is a function $f\!:\A\mathbin\rightarrow \A$; it extends to $Act$ by
$f(\bar{\ca})\mathbin=\overline{f(\ca)}$ and $f(\tau):=\tau$.
The set $\cT_{\rm CCS}$ of CCS expressions or \emph{processes} is the smallest set including:
\begin{center}
\begin{tabular}{@{}lll@{}}
$\sum_{i\in I}\alpha_i.P_i$ & for $I$ an index set, $\alpha_i\mathbin\in Act$ and $P_i\mathbin\in\cT_{\rm CCS}$ & \emph{guarded choice} \\
$P|Q$ & for $P,Q\mathbin\in\cT_{\rm CCS}$ & \emph{parallel composition}\\
$P\backslash L$ & for $L\subseteq\A$ and $P\mathbin\in\cT_{\rm CCS}$ & \emph{restriction} \\
$P[f]$ & for $f$ a relabelling and $P\mathbin\in\cT_{\rm CCS}$ & \emph{relabelling} \\
$X$ & for $X\in\K$ & \emph{agent identifier}\\
\end{tabular}
\end{center}
The process $\sum_{i\in \{1,2\}}\alpha_i.P_i$ is often written as $\alpha_1.P_1 + \alpha_2.P_2$,
$\sum_{i\in \{1\}}\alpha_i.P_i$ as $\alpha_1.P_1$ and $\sum_{i\in \emptyset}\alpha_i.P_i$ as ${\bf 0}$.
The semantics of CCS is given by the transition relation
$\mathord\rightarrow \subseteq \cT_{\rm CCS}\times Act \color{black} \times \Pow(\Ce) \color{black} \times\cT_{\rm CCS}$,
where transitions \plat{$P\goto[C]{\alpha}Q$} are derived from the rules of \autoref{tab:CCS}.
\begin{table*}[t]
\caption{Structural operational semantics of CCS}
\label{tab:CCS}
\normalsize
\begin{center}
  \newcommand\mylabel\weg
  \renewcommand\eta\alpha
  \renewcommand\ell\alpha
  \renewcommand\strut{\rule{0pt}{12pt}}
\framebox{$\begin{array}{c@{\quad}c@{\qquad}c}
&\sum_{i\in I}\alpha_i.P_i \goto[\{\varepsilon\}]{\alpha_j} P_j\makebox[20pt][l]{~~~~($j\in I$)} \\[3ex]
\displaystyle\frac{P\goto[C]{\eta} P'}{P|Q \strut\goto[\Left\cdot C]{\eta} P'|Q} \mylabel{Par-l}&
\displaystyle\frac{P\goto[C]{\ca} P' ,~ Q \goto[D]{\bar{\ca}} Q'}{P|Q \strut\goto[\Left\cdot C \;\cup\; \R\cdot D]{\tau} P'| Q'} \mylabel{Comm}&
\displaystyle\frac{Q\goto[D]{\eta} Q'}{P|Q \strut\goto[\R\cdot D]{\eta} P|Q'} \mylabel{Par-r}\\[4ex]
\displaystyle\frac{P \goto[C]{\ell} P'}{P\backslash L \strut\goto[C]{\ell}P'\backslash L}~~(\ell,\bar{\ell}\not\in L) \mylabel{Res}&
\displaystyle\frac{P \goto[C]{\ell} P'}{P[f] \strut\goto[C]{f(\ell)} P'[f]} \mylabel{Rel} &
\displaystyle\frac{P \goto[C]{\alpha} P'}{X\strut\goto[C]{\alpha}P'}~~(X \stackrel{{\it def}}{=} P) \mylabel{Rec}
\end{array}$}
\end{center}
\end{table*}
Ignoring the labels {\color{black}$C\in \Pow(\Ce)$} for now, such a transition indicates that process
$P$ can perform the action $\alpha\in Act$ and transform into process $Q$.
The process $\sum_{i\in I}\alpha_i.P_i$ performs one of the actions $\alpha_j$ for $j\in I$ and subsequently acts as $P_j$.
The parallel composition $P|Q$ executes an action from $P$, an action from $Q$, or a synchronisation
between complementary actions $c$ and $\bar{c}$ performed by $P$ and $Q$, resulting in an internal action $\tau$. 
The restriction operator $P \backslash L$ inhibits execution of the actions from $L$ and their complements. 
The relabelling $P[f]$ acts like process $P$ with all labels $\alpha$ replaced by $f(\alpha)$.
Finally, the rule for agent identifiers says that an agent $X$ has the same transitions as the body $P$ of its defining equation.
The standard version of CCS \cite{Mi90ccs} features a \emph{choice} operator $\sum_{i\in I}P_i$; here
I use the fragment of CCS that merely features guarded choice.

The second label of a transition indicates the set of (parallel) \emph{components} involved in
executing this transition. The set $\Ce$ of components is defined as $\{\Left,\R\}^*$, that is, the set
of strings over the indicators $\Left$eft and $\R$ight, with $\varepsilon\mathbin\in\Ce$ denoting the empty string
and $\textsc{d}{\cdot} C := \{\textsc{d}\sigma \mid \sigma\mathbin\in C\}$ for $\textsc{d}\mathbin\in\{\Left,\R\}$ and
$C\mathbin\subseteq \Ce\!$.

\begin{example}{CCS transitions}
The process $P\mathbin{:=}(X|\bar a.{\bf 0})|\bar a.b.{\bf 0}$ with $X\mathbin{\stackrel{{\it def}}{=}} a.X$ has as outgoing transitions
$P \mathbin{\goto[\{\Left\Left\}]a} P$,
$~P \mathbin{\goto[\{\Left\Left,\Left\R\}]\tau} (X|{\bf 0})|\bar a.b.{\bf 0}$,
$~P \mathbin{\goto[\{\Left\R\}]{\bar a}} (X|{\bf 0})|\bar a.b.{\bf 0}$,
$~P \mathbin{\goto[\{\Left\Left,\R\}]\tau} (X|\bar a.{\bf 0})|b.{\bf 0}$ and
$P \goto[\{\R\}]{\bar a} (X|\bar a.{\bf 0})|b.{\bf 0}$.
\end{example}
\noindent
These components stem from Victor Dyseryn [personal communication, 2017] and were
introduced in \cite{vG19c}. They were not part of the standard semantics of CCS \cite{Mi90ccs},
which can be retrieved by ignoring them.

\begin{definition}{CCS2LTS}
The LTS of CCS is $(\cT_{\mbox{\tiny CCS}}, \Tr, \source,\target,\ell)$, with
$\Tr$ the set of derivable transitions $P\mathbin{\goto[C\!\!]{\!\!\alpha}}Q$,
$\ell(P\mathbin{\goto[C\!\!]{\!\!\alpha\!}} Q) \mathbin=\alpha$,
$\source(P\mathbin{\goto[C\!\!]{\!\!\alpha\!}} Q) \mathbin=P$ and $\target(P\mathbin{\goto[C\!\!]{\!\!\alpha\!}} Q)\mathbin=Q$.
Employing\linebreak this interpretation of CCS, one can pronounce judgements $P\models^{CC}\phi$ for CCS processes $P$.
\end{definition}

\subsection{Labelled Transition Systems with Concurrency}\label{sec:LTSC}

\begin{definition}{LTSC}
  An \emph{LTS with concurrency} (LTSC) is a tuple $(\IP, \Tr, \source,\target,\ell,\aconc)$
  consisting of a LTS $(\IP, \Tr, \source,\target,\ell)$ and a \emph{concurrency relation}
  ${\aconc} \subseteq \Tr \times \Tr$, such that:
  \begin{equation}\label{irreflexivity}
  \mbox{$t \naconc t$ for all $t \in\Tr$,}
  \end{equation}
  \begin{equation}\label{closure}\begin{minipage}{5.4in}{
  if $t\in\Tr$ and $\pi$ is a path from $\source(t)$ to $s\in \IP$ such that $t \aconc v$ for
  all transitions $v$ occurring in $\pi$, then there is a $u\in\Tr$ such that $\source(u)=s$,
  $\ell(u)=\ell(t)$ and $t \naconc u$.}
  \end{minipage}\end{equation}
\end{definition}
\noindent
Informally, $t\aconc v$ means that the transition $v$ does not interfere with $t$, in the sense that
it does not affect any resources that are needed by $t$, so that in a state where $t$ and $v$ are
both possible, after doing $v$ one can still do a future variant $u$ of $t$.
Write $t\conc v$ for $t\aconc v \wedge v \aconc t$.

LTSCs were introduced in \cite{vG19}, although there the model is more general
on various counts; I do not need this generality here.

The LTS associated with CCS can be turned into an LTSC by defining $(P\goto[C]\alpha P') \aconc (Q\goto[D]\beta Q')$
iff $C\cap D = \emptyset$, that is, two transitions are concurrent iff they stem from disjoint sets
of components \cite{GH19,vG19c}.
In this LTSC, and many others, including the ones associated to nets below,
$\aconc$ is symmetric, and thus the same as $\conc$.

\begin{example}{CCS transitions concurrency}
Let the 5 transitions from \ex{CCS transitions} be $t$, $u$, $v$, $w$ and $x$, respectively.
Then $t \nconc w$ because these transitions share the component $\Left\Left$. Yet $v \conc w$.
\end{example}

\noindent
The LTS associated with a net can be turned into an LTSC by defining $(M,t) \aconc (M',u)$ iff
$\precond{t} \cap \precond{u} = \emptyset$, i.e., the two LTS-transitions stem from net-transitions
that have no preplaces in common.
Naturally, any LTSC can be turned into a LTS, and further into a Kripke structure, by forgetting $\aconc$.

\section{Progress, Justness and Fairness}\label{sec:completeness criteria}

With the above definitions one can pronounce judgements $\D\models^{CC}\phi$ for distributed systems
$\D$ given as a net or a CCS expression, for instance. Through the translations of
Definitions~\ref{df:net2lts} or~\ref{df:CCS2LTS} one renders $\D$ as a state $P$ in an LTS\@.
The completeness criterion $CC$ is given as a set of paths on that LTS\@.
Then, using \df{DV translation}, $P$ is seen as a state in a Kripke structure, and $CC$ as a set of
paths on that Kripke structure. Here it is well-defined when $P\models^{CC}\phi$ holds, and this verdict
applies to the judgement $\D\models^{CC}\phi$.

The one thing left to explain is where the completeness criterion $CC$ comes from.
In this section I define completeness criteria
$CC \in \{{\it SF}(\Tsk),{\it WF}(\Tsk),J, {\it Pr},\top \mid \Tsk \mathbin\in \Pow(\Pow(\Tr))\}$
on LTSs $(\IP, \Tr, \source,\target,\ell)$, to be used in
judgements $P \models^{CC} \phi$, for $P \mathbin\in \IP$ and $\phi$ an {LTL} formula. These
criteria are called \emph{strong fairness} (\emph{SF}), \emph{weak fairness} (\emph{WF}), both
parametrised with a set $\Tsk\subseteq \Pow(\Tr)$ of \emph{tasks}, \emph{justness} ($J$),
\emph{progress} (\emph{Pr}) and the \emph{trivial} completeness criterion ($\top$).
Justness is merely defined on LTSCs.
I confine myself to criteria that hold finite paths ending within a transition to be incomplete.

Reading \ex{beer}, one could find it unfair that Bart might never get a beer.
Strong and weak \emph{fairness} are completeness criteria that postulate that Bart will get a beer,
namely by ruling out as incomplete the infinite paths in which he does not.
They are formalised by introducing a set $\Tsk$ of \emph{tasks},\pagebreak[3] each being a set of
transitions (in an LTS or Kripke structure). 
\begin{definitionq}{fairness}{\cite{GH19}}
A task $T \in \Tsk$ is \emph{enabled} in a state $s$ iff $s$ has an outgoing transition from $T$.
It is \emph{perpetually enabled} on a path $\pi$ iff it is enabled in every state of $\pi$.
It is \emph{relentlessly enabled} on $\pi$, if each suffix of $\pi$ contains a state
in which it is enabled.\footnote{This is the case if the task is enabled in infinitely many states
of $\pi$, in a state that occurs infinitely often in $\pi$, or in the last state of a finite $\pi$.}
It \emph{occurs} in $\pi$ if $\pi$ contains a transition $t\in T$.

A path $\pi$ is \emph{weakly fair} if, for every suffix $\pi'$ of $\pi$,
each task that is perpetually enabled on $\pi'$, occurs in $\pi'$.
It is \emph{strongly fair} if, for every suffix $\pi'$ of $\pi$,
each task that is relentlessly  enabled on $\pi'$, occurs~in~$\pi'$.
\end{definitionq}
\noindent
As completeness criteria, these notions take only the fair paths to be complete.
In \ex{beer} it suffices to have a task ``Bart gets a beer'', consisting of the three transitions
leading to the $B$ state. Now in any path in which Bart never gets a beer this task is perpetually
enabled, yet never taken. Hence weak fairness suffices to rule out such paths.
One has $\D \models^{{\it WF}(\Tsk)} {\bf F}B$.

\emph{Local fairness} \cite{GH19} allows the tasks $\Tsk$ to be declared on an ad hoc basis for the
application at hand. On this basis one can call it unfair if Bart doesn't get a beer,
without requiring that Cameron should get a beer as well. \emph{Global fairness}, on the other hand,
distils the tasks of an LTS in a systematic way out of the structure of a formalism, such as
pseudocode, process algebra or Petri nets, that gave rise to the LTS\@.
A classification of many ways to do this, and thus of many notions of strong and weak fairness,
appears in \cite{GH19}. In \emph{fairness of directions} \cite{Fr86}, for instance, each
transition in an LTS is assumed to stem from a particular \emph{direction}, or \emph{instruction},
in the pseudocode that generated the LTS; now each direction represents a task, consisting of all
transitions derived from that direction.

In \cite{GH19} the assumption that a system will never stop when there are
transitions to proceed is called \emph{progress}. In \ex{Bart alone} it takes a progress
assumption to conclude that Bart will get his beer. Progress fits the default completeness criterion
introduced before, i.e., $\models^{\it Pr}$ is the same as $\models$. Not (even) assuming progress
can be formalised by the trivial completeness criterion $\top$ that declares all paths to be
complete. Naturally, $\E \not\models^\top {\bf F}B$.

Completeness criterion $D$ is called \emph{stronger} than criterion $C$ if it rules out
more paths as incomplete. So $\top$ is the weakest of all criteria, and, for any given collection
$\Tsk$, strong fairness is stronger than weak fairness. When assuming that each transition occurs in
at least one task---which can be ensured by incorporating a default task consisting of all
transitions---progress is weaker than weak fairness.

\emph{Justness} \cite{GH19} is a strong form of progress, defined on LTSCs.
\begin{definition}{justness}
A path $\pi$ is \emph{just} if for each transition $t$ whose source state $s := \source(t)$ occurs in $\pi$,
the (or any) suffix of $\pi$ starting at $s$ contains a transition $u$ with $t \naconc u$.
\end{definition}
\begin{example}{CCS justness}
The infinite path $\pi$ that only ever takes transition $t$ in \ex{CCS transitions}/\ref{ex:CCS transitions concurrency} is unjust.
Namely with transition $v$ in the r\^ole of the $t$ from \df{justness}, $\pi$ contains no transition $y$ with $v \nconc y$.
\end{example}
\noindent
Informally, the only reason for an enabled transition not to occur, is that one of its resources is
eventually used for some other transition. In \ex{Bart separated} for instance, the orders of Alice
and Cameron are clearly concurrent with the one of Bart, in the sense that they do not compete for
shared resources. Taking $t$ to be the transition in which Bart gets his beer, any path in which
$t$ does not occur is unjust. Thus $\F \models^J {\bf F}B$.

For most choices of $\Tsk$ found in the literature, weak fairness is a strictly stronger
completeness criterion than justness. In \ex{beer}, for instance, the path in which Bart does not
get a beer is just. Namely, any transition $u$ giving Alice or Cameron a beer competes for the same
resource as the transition $t$ giving Bart a beer, namely the attention of the bartender.
Thus $t \naconc u$, and consequently $\D \not\models^J {\bf F}B$.

\section{Blocking Actions}\label{sec:blocking}

I now present \emph{reactive} temporal logic by extending the ternary judgements $P \models^{CC} \phi$
defined above to quaternary judgements $P \models^{CC}_B \phi$, with $B\subseteq A$ a set of
\emph{blocking} actions.
Here $A$ is the set of all observable actions of the LTS on which {LTL} is interpreted.
The intuition is that actions $b\in B$ may be blocked by the environment, but actions
$a \in A {\setminus} B$ may not. The relation $\models_B$ can be used to formalise
the assumption that the actions in $A {\setminus} B$ are not under the control of the user of the
modelled system, or that there is an agreement with the user not to block them. Either way, it is a
disclaimer on the wrapping of our temporal judgement, that it is valid merely when applying the
involved distributed system in an environment that may block actions from $B$ only. The hidden
action $\tau$ may never be blocked.

I will present the relations $\models^{CC}_B$ for each choice of $CC \neq \top$ discussed in the
previous section, and each $B\subseteq A$.  When writing $P \models^{CC}_B\phi$ the modifier $B$
adapts the default completeness criterion by declaring certain finite paths complete, and the
modifier $CC\neq \top,{\it Pr}$ adapts it by declaring some infinite paths incomplete.

Starting with $CC={\it Pr}$, \hypertarget{Bprogressing}{I call a path \emph{$B$-progressing}
iff it is either infinite or ends in a state of which all outgoing transitions have a label from $B$},
and write $s \models^{\it Pr}_B \phi$, or  $s \models_B \phi$ for short, if $\pi \models\phi$
holds for all $B$-progressing paths $\pi$ starting in $s$.
The completeness criterion \emph{$B$-progress}, which takes the $B$-progressing to be the complete ones,
says that a system may stop in a state with outgoing transitions only when they are all blocked by
the environment. Note that the standard {LTL} interpretation $\models$ is simply
$\models_\emptyset$, obtained by taking the empty set of blocking actions.

In the presence of the modifier $B$, \df{justness} is adapted as follows:
\begin{definition}{Bjustness}
\hypertarget{Bjust}{A path $\pi$ is \emph{$B$-just} if for each $t\in\Tr$ with
$\ell(t)\notin B$ and whose source state $s := \source(t)$ occurs in $\pi$,
any suffix of $\pi$ starting at $s$ contains a transition $u$ with $t \naconc u$.}
\end{definition}
\noindent
It doesn't matter whether $\ell(u)\in B$.
The completeness criterion \emph{$B$-justness} takes the $B$-just paths to be the complete ones.
Write $s \models^{\it J}_B \phi$ if $\pi \models\phi$ for all $B$-just paths $\pi$ starting in $s$.

For the remaining cases $CC={\it SF}(\Tsk)$ and $CC={\it WF}(\Tsk)$, adapt the first sentence
of \df{fairness} as follows.

\begin{definition}{Bfairness}
A task $T \in \Tsk$ is \emph{$B$-enabled} in a state $s$ iff $s$ has an outgoing transition $t\in T$
with $\ell(t)\notin B$.
\end{definition}
\noindent
Strong and weak $B$-fairness of paths is then defined as in \df{fairness}, but replacing ``enabled''
by ``$B$-enabled''.

The above completes the formal definition of the validity of temporal judgements $P \models^{CC}_B \phi$
with $\phi$ an {LTL} formula, $B \subseteq A$, and either
\begin{itemize}
\item $CC={\it Pr}$ and $P$ a state in an LTS, a Petri net or a CCS expression,
\item $CC={\it J}$ and $P$ a state in an LTSC, a Petri net or a CCS expression,
\item $CC={\it WF}(\Tsk)$ or ${\it SF}(\Tsk)$ and $P$ a state in an LTS
   $(\IP, \Tr, \source,\target,\ell)$ with $\Tsk\in\Pow(\Pow(\Tr))$,
   or $P$ a net or CCS expression with associated LTS
   $(\IP, \Tr, \source,\target,\ell)$ and $\Tsk\mathbin\in\Pow(\Pow(\Tr))$.
\end{itemize}
Namely, in case $P$ is a state in an LTS, it is also a state in the associated Kripke structure $K$.
Moreover, $B$ and $CC$ combine into a single completeness criterion ${\it CC}(B)$ on that LTS, which
translates as a completeness criterion ${\it CC}(B)$ on $K$. Now \df{validity} tells whether
$P\models^{{\it CC}(B)} \phi$ holds.

In case $CC=J$ and $P$ a state in an LTSC,
$B$ and $J$ combine into a single completeness criterion $J(B)$ on that LTSC, which is also a
completeness criterion on the associated LTS; now proceed as above.

In case $P$ is a Petri net or CCS expression, first translate it into a state in an LTS or LTSC,
using Definitions~\ref{df:net2lts} or~\ref{df:CCS2LTS}, respectively, and proceed as above.

Temporal judgements $P \models^{CC}_B \phi$, as introduced above, are not limited to the case that
$\phi$ is an LTL formula. In \cite{EPTCS322.6} I show that allowing $\phi$ to be a CTL formula instead
poses no additional complications, and I expect the same to hold for other temporal logics.

Judgements $P \models^{CC}_B \phi$ get stronger (= less likely true) when the completeness criterion
$CC$ is weaker, and the set $B$ of blocking actions larger.

\section{Translating Reactive LTL into Standard LTL}\label{sec:translating}

Here I translate judgements $s \models^{CC}_B \phi$ in reactive LTL\footnote{\emph{Reactive LTL}
refers to judgements of the form $s \models^{CC}_B \phi$ or $\D \models^{CC}_B \phi$ where $\phi$ is
an LTL formula.} into equivalent judgements
$\hat s \models \psi$ in traditional LTL, albeit with infinite conjunctions.  The price to be paid for this
is an extra dose of atomic propositions. I start with judgements $s \models^{CC}_B \phi$
interpreted in an LTS $(\IP,\Tr,\source,\target,\ell)$, as this is where the completeness criterion $CC(B)$ takes shape.
I use a slightly different translation from LTSs to Kripke structures than the one of \df{DV translation};
it inserts a state halfway along \emph{any} transition, even if it is labelled $\tau$. However,
$\tau$ will not be an atomic proposition of the resulting Kripke structure $K$, and the new halfway
states do not inherit a transition label. This change affects the bookkeeping for next-state
operators, but not in a bad way.

To translate $\models^{CC}_B$ into $\models$, I have to make provisions for finite $CC(B)$-complete
paths that are ${\it Pr}(\emptyset)$-incomplete, and for infinite $CC(B)$-incomplete
paths that are ${\it Pr}(\emptyset)$-complete. In order not to tackle these
opposite forces in the same step, I first present a translation from reactive LTL into LTL with the
trivial completeness criterion. That is, for each choice of $CC$ from \Sec{completeness criteria},
each set $B \subseteq A$ of blocking actions, and each LTL formula $\phi$, I present an LTL
formula $\phi_B^{CC}$ such that $s \models^{CC}_B \phi$ iff $s \models^\top \phi^{CC}_B$
for each $s \in \IP$.

Given a collection $\Tsk$ of tasks and a set $B$ of blocking actions, introduce for each task
$T\in\Tsk$ two atomic propositions ${\it en}_B^T$ and ${\it oc}^T$. Proposition ${\it en}_B^T$
holds in those states of $K$ that stem from states of $s \in \IP$ in which the task $T$ is $B$-enabled,
i.e., that have an outgoing transition $t\in T$ with $\ell(t)\notin B$. Additionally, it holds in
those states of $K$ that stem from transitions $t \in \Tr$ such that $T$ is $B$-enabled in both
$\source(t)$ and $\target(t)$. Proposition ${\it oc}^T$ holds in those states of $K$ that stem from
a transition $t \in T$; this is where the task \emph{occurs}. Now the formula
$${\it WF}(\Tsk)_B := \bigwedge_{T \in \Tsk} {\bf G}({\bf G}{\it en}_B^T \Rightarrow {\bf F}{\it oc}^T)$$
holds for a path $\pi$ of $K$ exactly when $\pi$ is weakly $B$-fair.
Hence the formula \plat{${\it WF}(\Tsk)_B \Rightarrow \phi$} says that $\phi$ holds on 
weakly $B$-fair paths, and one has $s \models^{\plat{$\scriptstyle{\it WF}(\Tsk)$}}_B \phi$ iff
$s \models^\top {\it  WF}(\Tsk)_B \Rightarrow \phi$.\pagebreak[3]
Likewise,\vspace{-4pt}
$${\it SF}(\Tsk)_B := \bigwedge_{T \in \Tsk} {\bf G}({\bf GF}{\it en}_B^T \Rightarrow {\bf F}{\it oc}^T)\vspace{-2pt}$$
holds for a path of $K$ iff it is strongly $B$-fair, so that
$s \models^{\plat{$\scriptstyle{\it SF}(\Tsk)$}}_B \phi$ iff $s \models^\top {\it SF}(\Tsk)_B \Rightarrow \phi$.
The formulas ${\bf G}({\bf G}{\it en} \Rightarrow {\bf F}{\it oc})$ and
${\bf G}({\bf GF}{\it en} \Rightarrow {\bf F}{\it oc})$ stem from \cite{GPSS80}.
In the literature one sometimes find the equivalent forms ${\bf FG}{\it en} \Rightarrow {\bf GF}{\it oc}$ and
${\bf GF}{\it en} \Rightarrow {\bf GF}{\it oc}$.

Progress, i.e., the case $CC={\it Pr}$, can be dealt with in the same way, by recognising it as weak
or strong fairness involving a single task, spanning all transitions.

To deal with $B$-justness, introduce atomic propositions ${\it en}^t$ and $\sharp t$ for each transition
$t\in\Tr$ with $\ell(t)\notin B$. Proposition ${\it en}^t$ holds in the state $\source(t)$ only,
whereas $\sharp t$ holds in all states of $K$ that stem from a transition $u \in \Tr$ with $t \naconc u$.
Now $$J_B := \bigwedge_{t \in \Tr,~\ell(t)\notin B}  {\bf G}({\it en}^t \Rightarrow {\bf F} \sharp t)\vspace{-4pt}$$
holds for a path of $K$ iff it is $B$-just. Consequently, 
$s \models^J_B \phi$ iff $s \models^\top J_B \Rightarrow \phi$.

It remains to translate $\models^\top$ into $\models$.
To this end, I transform $K$ into \plat{$\widehat K$} by adding a self-loop at each of its states.
I also introduce a fresh atomic proposition ${\it tr}$, for \emph{transition},
and use it to label all states in $\widehat K$ that stem from a transition from $\Tr$.
For each finite path $\pi$ in $K$ let $\pi^\infty$ be the infinite path in $\widehat K$ obtained by
repeating the last state of $\pi$ infinitely often.
In case $\pi$ is infinite, let $\pi^\infty:=\pi$.
Let $Z$ be the completeness criterion on $\widehat K$ that declares each path of the form
$\pi^\infty$ complete. These are exactly the infinite paths without a subsequence $s s t$ with $s\neq t$.
So $\_^\infty$ is a bijection between the paths of $K$ and the $Z$-complete paths of $\widehat K$.
Let $\Q$ be the transformation on LTL formula, defined by
\[\begin{array}{r@{~:=~}l@{\qquad}r@{~:=~}l@{\qquad}r@{~:=~}l}
  \Q(p) & p &
  \Q(\neg\phi) & \neg \Q(\phi) &
  \Q(\phi \wedge \psi) & \Q(\phi) \wedge \Q(\psi) \\
  \Q({\bf F}\phi) & {\bf F} \Q(\phi) &
  \Q({\bf G}\phi) & {\bf G} \Q(\phi) &
  \Q(\phi{\bf U}\psi) & \Q(\phi) {\bf U} \Q(\psi)\\
  \Q({\bf X}\phi) & \multicolumn{5}{@{}l@{}}{\big({\it tr} \Rightarrow {\bf X}(\neg{\it tr} \wedge \Q(\phi))\big) \wedge
                                             \big(\neg{\it tr} \Rightarrow {\bf X}({\it tr} \wedge \Q(\phi))\big).}
\end{array}\]
A trivial induction on $\phi$ shows that $\pi \models \phi$ holds in $K$ iff $\pi^\infty \models \Q(\phi)$ holds in $\widehat K$.
This implies that $s \models^\top \phi$ holds in $K$ iff $s \models^Z \phi$ holds in $\widehat K$; I
will write the latter as $\hat s \models^Z \phi$.
$Z$-completeness can be stated as\vspace{-1pt}
\[\Z := {\bf G}\big({\it tr} \Rightarrow ({\bf G}{\it tr} \vee {\bf X}(\neg{\it tr}))\big) \wedge
        {\bf G}\big(\neg{\it tr} \Rightarrow ({\bf G}(\neg{\it tr}) \vee {\bf X}{\it tr})\big).\vspace{-1pt}\]
Hence $s \models^\top \phi$ iff $\hat s \models \Z \Rightarrow \phi$.

The above translation from reactive LTL into standard LTL may give the impression that reactive LTL
is not more expressive than standard LTL\@. This conclusion is not really valid, due to the addition
to the formalism of fresh atomic propositions. It is for instance widely accepted that LTL is not
more expressive than CTL\@. Yet, if one introduces an atomic proposition $p_\phi$ for each CTL
formula $\phi$, one that is declared to hold for all states that satisfy $\phi$,
one trivially obtains $s \models \phi$ iff $s \models p_\phi$. This would suggest that CTL can be
faithfully translated into LTL, even without using any of the modal operators of LTL\@.

\section{Safety Properties}\label{sec:safety}

A \emph{safety property} is a temporal formula $\phi$ that holds for a path $\pi$ iff it holds for
all finite prefixes of $\pi$. In that case $\D \models \phi$ iff $\pi \models \phi$ for all finite paths
$\pi$ starting in the initial state of $\D$.
Such a property can be thought to say that nothing bad will ever happen \cite{Lam77}.
The intuition is that a bad thing must be observable in a finite prefix of a run,
so that $\D \models \neg\phi$ iff $\pi \models \neg\phi$ for some finite path
$\pi$ starting in the initial state of $\D$.

\begin{proposition}{safety fragment}
The fragment of LTL given by the grammar
\[\phi,\psi ::= p \mid \neg p \mid \phi \wedge \psi  \mid \phi \vee \psi \mid {\bf Y}\phi \mid  {\bf G}\phi \mid
           \psi {\bf W} \phi \]
describes only safety properties. Here {\bf Y} is the dual of {\bf X}, and $\psi {\bf W} \phi$
abbreviates $(\psi {\bf U} \phi) \vee {\bf G}\psi$.
\end{proposition}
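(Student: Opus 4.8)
The plan is to restate the safety requirement in a form suited to structural induction and then induct over the grammar. For a path $\pi = s_0 s_1\cdots$ and a valid position $j$ write $\pi^{(j)}$ for the suffix $s_j s_{j+1}\cdots$ (so $\pi^{(j)} = \pi{\upharpoonright}\rho$ for $\rho$ the prefix ending in $s_j$) and $\pi|_k$ for the finite prefix $s_0\cdots s_k$. The goal is to prove, for every $\phi$ in the fragment and every path $\pi$, the statement
\[ \mathsf{S}(\phi):\qquad \pi\models\phi \iff \pi|_k\models\phi \ \mbox{ for every valid position } k, \]
which by the preceding definition is exactly the claim that $\phi$ denotes a safety property. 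From the left-to-right direction of $\mathsf{S}(\chi)$ one immediately obtains a \emph{monotonicity lemma}: if $\rho\le\rho'$ are finite prefixes of a common path and $\rho'\models\chi$, then $\rho\models\chi$ (since $\rho$ is among the finite prefixes of $\rho'$). The single structural fact driving the modal cases is that the finite prefixes of a suffix $\pi^{(j)}$ are exactly the $j$-suffixes of the finite prefixes of $\pi$, i.e.\ $(\pi|_k)^{(j)} = (\pi^{(j)})|_{k-j}$ for $j\le k$.

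The base cases $p$ and $\neg p$ hold because satisfaction of an atom or its negation depends only on the first state, shared by $\pi$ and all its finite prefixes. For $\phi\wedge\psi$ the equivalence is immediate, as both $\models$ and the universal quantifier over prefixes distribute over $\wedge$. For $\phi\vee\psi$ the forward direction follows from $\mathsf{S}(\phi)$, $\mathsf{S}(\psi)$, while the converse I would argue contrapositively: if $\pi\not\models\phi$ and $\pi\not\models\psi$, then some finite prefix fails $\phi$ and some (comparable) finite prefix fails $\psi$, and the longer of the two fails both by monotonicity, witnessing a finite prefix that fails $\phi\vee\psi$. The operator ${\bf Y}$ is handled directly from its semantics, the length-$0$ prefix satisfying ${\bf Y}\phi$ vacuously; this is precisely where the \emph{weak} next-state operator is needed, since the strong ${\bf X}$ fails on the one-state prefix and so is not prefix-closed. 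For ${\bf G}\phi$, using the interchange identity, $\pi\models{\bf G}\phi$ unfolds to ``$\pi^{(j)}\models\phi$ for all $j$'', each of which by $\mathsf{S}(\phi)$ reads ``$(\pi^{(j)})|_m\models\phi$ for all $m$'', and reindexing these finite paths matches ``$\pi|_k\models{\bf G}\phi$ for all $k$'' exactly.

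The main obstacle is the weak-until case $\psi{\bf W}\phi = (\psi{\bf U}\phi)\vee{\bf G}\psi$, because $\psi{\bf U}\phi$ on its own is \emph{not} safety and hence cannot be routed through the $\vee$-clause; ${\bf W}$ must be analysed as a whole. I would spell out, at the level of suffixes, that $\pi\not\models\psi{\bf W}\phi$ forces a least position $m$ with $\pi^{(m)}\not\models\psi$, together with $\pi^{(j)}\not\models\phi$ for all $j\le m$ and $\pi^{(i)}\models\psi$ for $i<m$. Applying $\mathsf{S}(\psi)$ to $\pi^{(m)}\not\models\psi$ and $\mathsf{S}(\phi)$ to each $\pi^{(j)}\not\models\phi$ yields, for each, a failing finite prefix; since $m$ is finite there are only finitely many such demands, so by monotonicity one can pick a single position $K\ge m$ past all of them, and $\pi|_K$ then fails both $\psi{\bf U}\phi$ and ${\bf G}\psi$, hence fails $\psi{\bf W}\phi$, contradicting the converse hypothesis. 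The forward direction is the easier book-keeping of transporting an existing ${\bf U}$- or ${\bf G}$-witness down to each prefix via $\mathsf{S}(\phi)$, $\mathsf{S}(\psi)$. The delicate points throughout are the degenerate one-state prefix (which motivates ${\bf Y}$ over ${\bf X}$) and the finiteness argument that merges finitely many prefix-witnesses into one for the ${\bf W}$ case.
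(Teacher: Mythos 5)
Your proposal is correct and follows essentially the same route as the paper's proof: a structural induction in which each case uses exactly the paper's two monotonicity facts (satisfaction of a safety subformula passes down to shorter finite prefixes, failure passes up to longer ones), the prefix--suffix interchange for ${\bf G}$, and, for the crucial ${\bf W}$ case, the contrapositive argument via a least position where $\psi$ fails, collecting the finitely many failing finite prefixes for $\phi$ and $\psi$ and merging them into a single long prefix that falsifies $\psi{\bf W}\phi$. The only differences are cosmetic: you phrase the $\vee$ case contrapositively where the paper argues directly, and you use index notation for suffixes instead of the paper's $\pi{\upharpoonright}\rho$.
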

\begin{proof}
Let $\phi$ and $\psi$ be safety properties.
I show that also $\phi \vee \psi$ is a safety property.

Let $\pi\models \phi \vee \psi$. Then either $\pi \models \phi$ or $\pi\models \psi$; for reasons of
symmetry I may assume the former. Let $\pi'$ be a finite prefix of $\pi$. Since $\phi$ is a safety
property, $\pi' \models \phi$. Hence $\pi' \models \phi \vee \psi$, which needed to be shown.

Now let $\pi$ be a path such that $\pi' \models \phi \vee \psi$ for all finite prefixes $\pi'$ of $\pi$.
In case $\pi' \models \phi$ for all finite prefixes $\pi'$ of $\pi$, it follows that $\pi \models \phi$,
since $\phi$ is a safety property, and hence $\pi\models \phi \vee \psi$.
So assume $\pi$ has a finite prefix $\pi''$ for which $\pi''\not\models \phi$.
Since $\phi$ is a safety property, it follows that $\pi'\not\models \phi$ for each finite path $\pi'$
with $\pi'' \leq \pi' \leq \pi$. So $\pi'\models \psi$ for all such $\pi'$.
As $\pi''\models \psi$, and $\psi$ is a safety property, one also has $\pi'\models \psi$ for all
prefixes $\pi'$ of $\pi''$. Thus $\pi'\models \psi$ for \emph{all} prefixes $\pi'$ of $\pi$.
As $\psi$ is a safety property, it follows that $\pi\models\psi$, so $\pi\models \phi \vee \psi$.

That $p$ and $\neg p$ are safety properties, for $p \in AP$, and that the safety properties are
closed under conjunction, is trivial.

Let $\phi$ be a safety property. I show that ${\bf Y}\phi$ and ${\bf G}\phi$ are safety properties.

Let $\pi \models {\bf Y}\phi$. Then either $|\pi|=0$ or $\pi_{+1} \models \phi$.
Let $\pi'$ be a finite prefix of $\pi$.
Then either $|\pi'|=0$, so that trivially $\pi' \models {\bf Y}\phi$, or $\pi'_{+1}$ is a finite
prefix of $\pi_{+1}$. Since $\phi$ is a safety property, $\pi'_{+1} \models \phi$, and thus $\pi' \models {\bf Y}\phi$. 

Now let $\pi$ be a path such that $\pi' \models {\bf Y}\phi$ for all finite prefixes $\pi'$ of $\pi$.
In case $|\pi|=0$, trivially $\pi \models {\bf Y}\phi$. So assume $|\pi|>0$.
Then $\pi'_{+1} \models \phi$ for all finite prefixes $\pi'$ of $\pi$ with $|\pi'|>0$, that is,
$\rho \models \phi$ for all finite prefixes $\rho$ of $\pi_{+1}$.
Since $\phi$ is a safety property it follows that $\pi_{+1}\models \phi$, and hence $\pi \models {\bf Y}\phi$.

Let $\pi \models {\bf G}\phi$. Then $\pi{\upharpoonright}\rho \models \phi$ for each finite prefix $\rho$ of $\pi$.
As $\phi$ is a safety property, $\rho' \models \phi$ for each finite prefix $\rho'$ of $\pi{\upharpoonright}\rho$,
for each finite prefix $\rho$ of $\pi$, that is,
$\pi'{\upharpoonright}\rho \models \phi$ for each pair of finite prefixes $(\rho,\pi')$ with $\rho \leq \pi' \leq \pi$. 
Thus $\pi' \models  {\bf G}\phi$ for each finite prefix $\pi'$ of $\pi$.

Now let $\pi$ be a path such that $\pi' \models {\bf G}\phi$ for all finite prefixes $\pi'$ of $\pi$.
Then $\pi'{\upharpoonright}\rho \models \phi$ for each pair of finite prefixes $(\rho,\pi')$ with $\rho \leq \pi' \leq \pi$,
that is, $\rho' \models \phi$ for each finite prefix $\rho'$ of $\pi{\upharpoonright}\rho$,
for each finite prefix $\rho$ of $\pi$.
As $\phi$ is a safety property, $\pi{\upharpoonright}\rho \models \phi$ for each finite prefix $\rho$ of $\pi$.
Thus $\pi \models {\bf G}\phi$.

Finally, let $\phi$ and $\psi$ be safety properties.
I show that $\psi {\bf W} \phi$ is a safety property.

Let $\pi \models \psi {\bf W} \phi = (\psi {\bf U} \phi) \vee {\bf G}\psi$. One possibility is that
$\pi \models {\bf G}\psi$. Then, as shown above, for each finite prefix $\pi'$ of $\pi$ one has 
$\pi' \models  {\bf G}\psi$, and thus $\pi' \models \psi {\bf W} \phi$.
The other possibility is that $\pi \models\psi {\bf U} \phi$. Then there is a finite prefix $\rho$ of $\pi$
such that $\pi{\upharpoonright}\rho\models \phi$, and for each $\zeta < \rho$ one has $\pi{\upharpoonright}\zeta\models \psi$.
Now let $\pi'$ be a finite prefix of $\pi$.

First assume that $\pi' < \rho$.
Then for each $\zeta\leq \pi'$ the path $\pi'{\upharpoonright}\zeta$ is a finite prefix of $\pi{\upharpoonright}\zeta$.
Since $\psi$ is a safety property, this implies $\pi'{\upharpoonright}\zeta\models \psi$.
Thus $\pi'\models {\bf G}\psi$ and hence $\pi' \models \psi {\bf W} \phi$.

Next assume that $\rho \leq \pi'$.
Then  $\pi'{\upharpoonright}\rho$ is a finite prefix of $\pi{\upharpoonright}\rho$.
Since $\phi$ is a safety property, $\pi'{\upharpoonright}\rho\models \phi$.
For each $\zeta < \rho$, $\pi'{\upharpoonright}\zeta$ is a finite prefix of $\pi{\upharpoonright}\zeta$.
Thus $\pi'{\upharpoonright}\zeta\models\psi$, as $\psi$ is a safety property.
It follows that $\pi'\models \psi {\bf U} \phi$ and hence $\pi' \models \psi {\bf W} \phi$.

Now let $\pi \not \models \psi {\bf W} \phi$. Then $\pi \not \models{\bf G}\psi$,
so there is a finite prefix $\rho$ of $\pi$ with $\pi{\upharpoonright}\rho \not \models\psi$,
and, choosing $\rho$ as short as possible, $\pi{\upharpoonright}\zeta \models\psi$ for all $\zeta<\rho$.
Since $\pi \not \models \psi {\bf U} \phi$, one has $\pi{\upharpoonright}\zeta \not\models\phi$ for each $\zeta\leq\rho$.%
\footnote{This argument shows that {\bf U} and {\bf W} are almost duals:
$\neg(\psi {\bf W} \phi) \equiv (\neg\phi){\bf U} (\neg \psi \wedge \neg\phi)$, and thus,
using that $\psi{\bf U}\phi \equiv (\psi{\bf W}\phi) \wedge {\bf F} \phi$,
also $\neg(\psi {\bf U} \phi) \equiv (\neg\phi){\bf W} (\neg \psi \wedge \neg\phi)$.
}
As $\phi$ is a safety property, for each $\zeta \leq \rho$ there exists
a finite prefix $\zeta'$ of $\pi{\upharpoonright}\zeta$ with $\zeta'\not\models\phi$;
or in other words, for each $\zeta \leq \rho$ there is a finite
$\zeta \leq \pi_\zeta \leq \pi$ with $\pi_\zeta{\upharpoonright}\zeta\not\models\phi$.
As $\psi$ is a safety property, there exists
a finite prefix $\zeta'$ of $\pi{\upharpoonright}\rho$ with $\zeta'\not\models\psi$;
or in other words, a finite $\rho \leq \bar\pi \leq \pi$ with $\bar\pi{\upharpoonright}\rho\not\models\psi$.
Now let $\pi'\leq \pi$ be the longest of the finite paths $\bar\pi$ and $\pi_\zeta$ for each $\zeta \leq \rho$.
Since $\psi$ and $\phi$ and safety properties, $\pi'{\upharpoonright}\rho\not\models\psi$, and
$\pi'{\upharpoonright}\zeta\not\models\psi$ for each $\zeta \leq \rho$.
It follows that $\pi' \not \models \psi {\bf W} \phi$.
\end{proof}

\noindent
For safety properties $\phi$, the reactive part of reactive temporal logic is irrelevant,
as one has $\D \models^{CC}_B \phi$ iff $\D \models \phi$, for all completeness criteria $CC$ and
all $B \subseteq A$. Namely, both hold iff $\pi \models \phi$ for all finite paths
$\pi$ starting in the initial state of $\D$. This hinges on the requirement of \emph{feasibility}
imposed on completeness criteria \cite{AFK88,GH19}: any finite partial run can be extended to a
complete run; or any finite path must be a prefix of a complete path.

\addtocontents{toc}{\protect\vspace{-4pt}}
\part{\texorpdfstring{\!}{}Formalising Mutual Exclusion and Fair Scheduling in Reactive LTL\texorpdfstring{\!\!\!}{}}
\label{requirements}

Here I recall the mutual exclusion problem as posed by Dijkstra \cite{Dijk65}, and the related
notion of a fair scheduler \cite{GH15a}. Employing reactive LTL, I formulate requirements that tell
exactly what does and does not count as a mutual exclusion protocol, and as a fair scheduler.
Since my requirements are parametrised by completeness criteria, which are progress, justness or
fairness assumptions, I obtain a hierarchy of quality criteria for mutual exclusion protocols and
fair schedulers, where a weaker completeness criterion characterises a higher quality protocol.
When allowing (strong or) weak fairness as parameter in my requirements, an intuitively unsatisfactory
mutual exclusion protocol or fair scheduler, which I call the \emph{gatekeeper}, meets all requirements.
This indicates that weak fairness is too strong an assumption to be used in these parameters.

\section{The Mutual Exclusion Problem and its History}\label{sec:history}

The mutual exclusion problem was presented by Dijkstra in \cite{Dijk65} and formulated as follows:
{\addtolength\leftmargini{-0.1in}
\begin{quote}
   ``To begin, consider $N$ computers, each engaged in a
process which, for our aims, can be regarded as cyclic. In
each of the cycles a so-called ``critical section'' occurs and
the computers have to be programmed in such a way that
at any moment only one of these $N$ cyclic processes is in
its critical section. In order to effectuate this mutual
exclusion of critical-section execution the computers can
communicate with each other via a common store. Writing
a word into or nondestructively reading a word from this
store are undividable operations; i.e., when two or more
computers try to communicate (either for reading or for
writing) simultaneously with the same common location,
these communications will take place one after the other,
but in an unknown order.''
\end{quote}}
\noindent
Dijkstra proceeds to formulate a number of requirements
that a solution to this problem must satisfy, and then
presents a solution that satisfies those requirements. 
The most central of these are:
\begin{itemize}
\item (\emph{Mutex}) ``no two computers can be in their critical section simultaneously'', and
\item (\emph{Deadlock-freedom}) if at least one computer intends to enter its critical section,
then at least one ``will be allowed to enter its critical section in due time''.
\pagebreak[3]
\end{itemize}
Two other important requirements formulated by Dijkstra are
\begin{itemize}
\item (\emph{Speed independence}) ``Nothing may be assumed about the relative speeds of the $N$
  computers'',
\item and (\emph{Optionality}) ``If any of the computers is stopped well outside its
critical section, this is not allowed to lead to potential blocking of the others.''
\end{itemize}
A crucial assumption is that each computer, in each cycle, spends only a finite amount of time in its
critical section. This is necessary for the correctness of any mutual exclusion protocol.

For the purpose of the last requirement one can partition each cycle into a \emph{critical section}, a
\emph{noncritical section} (in which the process starts), an \emph{entry protocol} between the
noncritical and the critical section, during which a process prepares for entry in negotiation with the
competing processes, and an \emph{exit protocol}, that comes right after the critical section and
before return to the noncritical section. Now ``well outside its critical section'' means in the
noncritical section. 
Optionality can equivalently be stated as admitting the possibility that a process chooses to
remain forever in its noncritical section, without applying for entry in the critical section ever again.

Knuth \cite{Knuth66} proposes a strengthening of the deadlock-freedom requirement, namely
\begin{itemize}
\item (\emph{Starvation-freedom}) If a computer intends to enter its critical section,
then it will be allowed to enter in due time.
\end{itemize}
He also presents a solution that is shown to satisfy this requirement, as well as Dijkstra's requirements.%
\footnote{However, Knuth's solution satisfies starvation-freedom, and even deadlock-freedom, only
  when making a fairness assumption. In fact, all mutual exclusion protocols, including the ones of
  \cite{EWD35,bakery,Pet81,Szy88} discussed below, need a fairness assumption to solve
  the problem as stated by Dijkstra above in a starvation-free way.
  This will be discussed in Part~\ref{impossibility results} of this paper.}
Henceforth I define a correct solution of the mutual exclusion problem as one that satisfies both
mutex and starvation-freedom, as formulated above, as well as optionality. I speak of
``speed-independent mutual exclusion'' when also insisting on the requirement of speed independence.

The special case of the mutual exclusion problem for two processes ($N=2$) was presented by Dijkstra in
\cite{EWD35}, two years prior to \cite{Dijk65}. There Dijkstra presented a solution found by
T.J. Dekker in 1959, and shows that it satisfies all requirements of \cite{Dijk65}.
Although not explicitly stated in \cite{EWD35}, the arguments given therein imply straightforwardly
that Dekker's solution also satisfies Knuth's starvation-freedom requirement above.

Peterson \cite{Pet81} presented a considerable simplification of Dekker's algorithm that satisfies
the same correctness requirements.
Many other mutual exclusion protocols appear in the literature, the most prominent being
Lamport's bakery algorithm \cite{bakery} and Szyma\'nski's mutual exclusion algorithm \cite{Szy88}.
These guarantee some additional correctness criteria besides the ones discussed above.

\section{Fair Schedulers}\label{sec:FS}

In \cite{GH15b} a \emph{fair scheduler} is defined as
{\addtolength\leftmargini{-0.1in}
\begin{quote}
``a reactive system with two input channels: one on which it can receive requests $r_1$ from its
  environment and one on which it can receive requests $r_2$. We allow the scheduler to be too busy
  shortly after receiving a request $r_i$ to accept another request $r_i$ on the same channel.
  However, the system will always return to a state where it remains ready to accept the next
  request $r_i$ until $r_i$ arrives. In case no request arrives it remains ready forever. The
  environment is under no obligation to issue requests, or to ever stop issuing requests.  Hence for
  any numbers $n_1$ and $n_2\in\IN\cup\{\infty\}$ there is at least one run of the system in which
  exactly that many requests of type $r_1$ and $r_2$ are received.

  Every request $r_i$ asks for a task $t_i$ to be executed.  The crucial property of the fair
  scheduler is that it will eventually grant any such request. Thus, we require that in any run of
  the system each occurrence of $r_i$ will be followed by an occurrence of $t_i$.''

``We require that in any partial run of the scheduler there may not be more occurrences of $t_i$
  than of $r_i$, for $i=1,2$.

  The last requirement is that between each two occurrences of $t_i$ and $t_j$ for $i,j\in\{1,2\}$
  an intermittent activity $e$ is scheduled.''
\end{quote}}
\noindent
\makebox[0pt][l]{\raisebox{36ex}[0pt][0pt]{\small FS\hspace{1pt}1}}%
\makebox[0pt][l]{\raisebox{18ex}[0pt][0pt]{\small FS\hspace{1pt}2}}%
\makebox[0pt][l]{\raisebox{10ex}[0pt][0pt]{\small FS\hspace{1pt}3}}%
\makebox[0pt][l]{\raisebox{4ex}[0pt][0pt]{\small FS\hspace{1pt}4}}%
This fair scheduler serves two clients, but the concept generalises smoothly to $N$ clients.

The intended applications of fair schedulers are for instance in operating systems, where multiple
application processes compete for processing on a single core, or radio broadcasting stations, where 
the station manager needs to schedule multiple parties competing for airtime. In such cases
each applicant must get a turn eventually. The event $e$ signals the end of the time slot allocated
to an application process on the single core, or to a broadcast on the radio station.

Fair schedulers occur (in suitable variations) in many distributed systems.  Examples are
\emph{First in First out}\footnote{Also known as First Come First Served (FCFS)},
\emph{Round Robin}, and
\emph{Fair Queueing} 
scheduling algorithms\footnote{\url{http://en.wikipedia.org/wiki/Scheduling_(computing)}}
as used in network routers~\cite{rfc970,Nagle87} and operating systems~\cite{Kleinrock64}, or the
\emph{Completely Fair Scheduler},\footnote{\url{http://en.wikipedia.org/wiki/Completely_Fair_Scheduler}}
which is the default scheduler of the Linux kernel since version 2.6.23.

Each action $r_i$, $t_i$ and $e$ can be seen as a communication
between the fair scheduler and one of its clients. In a reactive system such communications will take place
only if both the fair scheduler and its client are ready for it. Requirement FS\hspace{1pt}1 of a fair
scheduler quoted above effectively shifts the responsibility for executing $r_i$ to the client.
The actions $t_i$ and $e$, on the other hand, are seen as the responsibility of the fair scheduler.
We do not consider the possibility that the fair scheduler fails to execute $t_i$ merely because the
client does not collaborate. Hence \cite{GH15b} assumes that the client cannot prevent the actions
$t_i$ and $e$ from occurring. It is furthermore assumed that executing the actions $r_i$, $t_i$
and $e$ takes a finite amount of time only.

A fair scheduler closely resembles a mutual exclusion protocol.
However, its goal is not to achieve mutual exclusion. In most applications, mutual exclusion can be
taken for granted, as it is physically impossible to allocate the single core to multiple applications
at the same time, or the (single frequency) radio sender to multiple simultaneous broadcasts.
Instead, its goal is to ensure that no applicant is passed over forever.

It is not hard to obtain a fair scheduler from a mutual exclusion protocol.
For suppose we have a mutual exclusion protocol $M$, serving two processes $P_i$ ($i=1,2$).
I instantiate the noncritical section of Process $P_i$ as patiently awaiting the request $r_i$.
As soon as this request arrives, $P_i$ leaves the noncritical section and starts the entry protocol
to get access to the critical section. Starvation-freedom guarantees that
$P_i$ will reach its critical section. Now the critical section consists of scheduling task $t_i$,
followed by the intermittent activity $e$. Trivially, the composition of the two processes $P_i$, in
combination with protocol $M$, constitutes a fair scheduler, in that it meets the above four requirements.

One cannot quite construct a mutual exclusion protocol from a fair scheduler, due to the fact that
in a mutual exclusion protocol leaving the critical section is controlled by the client process.
For this purpose one would need to adapt the assumption that the client of a fair scheduler cannot block
the intermittent activity $e$ into the assumption that the client can postpone this action, but for
a finite amount of time only. In this setting one can build a mutual exclusion protocol, serving two
processes $P_i$ ($i=1,2$), from a fair scheduler $F$. Process $i$ simply issues request $r_i$ at
$F$ as soon as it has left the noncritical section, and when $F$ communicates the action $t_i$,
Process $i$ enters its critical section. Upon leaving its critical section, which is assumed to
happen after a finite amount of time, it participates in the synchronisation $e$ with
$F$. Trivially, this yields a correct mutual exclusion protocol.

\section{Formalising the Requirements for Fair Schedulers in Reactive LTL}\label{sec:formalising FS}

The main reason fair schedulers were defined in \cite{GH15b} was to serve as an example
of a realistic class of systems of which no representative can be correctly specified in CCS, or
similar process algebras, or in Petri nets. Proving this impossibility result necessitated a precise
formalisation of the four requirements quoted in \Sec{FS}. Through the provided translations of CCS
and Petri nets into LTSs, a fair scheduler rendered in CCS or Petri nets can be seen as a state $F$
in an LTS over the set $\{r_i,t_i,e \mid i=1,2\}$ of visible actions; all other actions can be
considered internal and renamed into $\tau$.

Let a \emph{partial trace} of a state $s$ in an LTS be the sequence of visible actions encountered on
a path starting in $s$ \cite{vG93}. Now the last two requirements (FS\hspace{1pt}3 and FS\hspace{1pt}4) of a fair
scheduler are simple properties that should be satisfied by all partial traces $\sigma$ of state $F$:
\begin{trivlist}
\item[\hspace{\labelsep}\small (FS\hspace{1pt}3)] ~$\sigma$ contains no more occurrences of $t_i$ than of $r_i$, for $i=1,2$,
\item[\hspace{\labelsep}\small (FS\hspace{1pt}4)] ~$\sigma$ contains an occurrence of $e$ between each two occurrences of $t_i$ and $t_j$ for $i,j\mathbin\in\{1,2\}$.
\end{trivlist}
\noindent
FS\hspace{1pt}4 can be conveniently rendered in {LTL}:
\begin{trivlist}\vspace{4pt}
\item[\hspace{\labelsep}(FS\hspace{1pt}4)]
~\(F \models {\bf G}\left(t_i \Rightarrow {\bf Y} 
 \big((\neg t_1 \wedge \neg t_2) {\bf W} e\big)\right)\)
\vspace{3pt}
\end{trivlist}
\noindent
for $i\in\{1,2\}$.
Since FS\hspace{1pt}4 is a safety property, it makes no difference whether and how $\models$ is
annotated with $B$ and $CC$. In \cite{Lam83}, Lamport argues against the use of the next-state
operator {\bf X}, as it is incompatible with abstraction from irrelevant details in system descriptions.
When following this advice, the weak next-state operator {\bf Y} in FS\hspace{1pt}4
can be replaced by $t_i {\bf W}$; on Kripke structures distilled from LTSs the meaning is the same.

Unfortunately, FS\hspace{1pt}3 cannot be formulated in {LTL}, due to the need to keep count of the
difference in the number of $r_i$ and $t_i$ actions encountered on a path. However, one could strengthen FS\hspace{1pt}3 into
\begin{trivlist}
\item[\hspace{\labelsep}\small (FS\hspace{1pt}3$'$)]
\begin{tabular}{l}$\sigma$ contains an occurrence of $r_i$ between each two occurrences of $t_i$,\\ and
  prior to the first occurrence of $t_i$, for $i\in\{1,2\}$.
\end{tabular}
\end{trivlist}
\noindent
This would restrict the class of acceptable fair schedulers, but keep the most interesting examples.
Consequently, the impossibility result from \cite{GH15b} applies to this modified class as well.
FS\hspace{1pt}3$'$ can be rendered in {LTL} in the same style as FS\hspace{1pt}4:
\begin{trivlist}\vspace{4pt}
\item[\hspace{\labelsep}(FS\hspace{1pt}3$'$)]
 ~\(F \models  {\big(}(\neg t_i) {\bf W} r_i {\big)}
  \wedge {\bf G}\left(t_i \Rightarrow
   {\bf Y} \big((\neg t_i) {\bf W} r_i\big) \right)\)
\vspace{3pt}
\end{trivlist}
\noindent
for $i\in\{1,2\}$.

Requirement FS\hspace{1pt}2 involves a quantification over all complete runs of the system, and thus depends
on the completeness criterion $CC$ employed. It can be formalised as
\begin{trivlist}\vspace{4pt}
\item[\hspace{\labelsep}(FS\hspace{1pt}2)] ~$F \models_B^{CC}{\bf G}(r_i \Rightarrow {\bf F}t_i)$
\vspace{3pt}
\end{trivlist}
\noindent
for $i\in\{1,2\}$, where $B=\{r_1,r_2\}$.
The set $B$ should contain $r_1$ and $r_2$, as these actions are supposed to be under the control
of the users of a fair scheduler. However, actions $t_1$, $t_2$ and $e$ should not be in $B$, as they
are under the control of the scheduler itself.
In \cite{GH15b}, the completeness criterion employed is justness, so the above formula with $CC:=J$
captures the requirement on the fair schedulers that are shown in \cite{GH15b} not to exist in CCS
or Petri nets. However, keeping $CC$ a variable allows one to pose the question under which
completeness criterion a fair scheduler \emph{can} be rendered in CCS\@. Naturally, it needs to be
a stronger criterion than justness. In \cite{GH15b} it is shown that weak fairness suffices.

FS\hspace{1pt}2 is a good example of a requirement that can \emph{not} be rendered correctly in standard LTL\@.
Writing $F \models{\bf G}(r_i \Rightarrow {\bf F}t_i)$ would rule out the complete runs of $F$
that end because the user of $F$ never supplies the input $r_j \in B$.  The CCS process\vspace{-3pt}
$$F \stackrel{{\it def}}{=} r_1.r_2.t_1.e.t_2.e.F$$ for instance
satisfies this formula, due to its unique infinite path, as well as FS\hspace{1pt}3 and 4; yet it
does not satisfy Requirement FS\hspace{1pt}2.
Namely, the path consisting of the $r_1$-transition only is complete, since it ends in a state
of which the only outgoing transition has the label $r_2\in B$. It models a (complete) run that
can occur when the environment never issues a request $r_2$, as allowed by FS\hspace{1pt}1.
Yet on this path $r_1$ is not followed by $t_1$.

Requirement FS\hspace{1pt}1 is by far the hardest to formalise. In \cite{GH15b} two formalisations are shown to
be equivalent: one involving a coinductive definition of $B$-just paths that exploits the syntax of
CCS, and the other requiring that Requirements FS\hspace{1pt}2--4 are preserved under putting an input
interface around Process~$F$. The latter demands that also
$$\widehat F:= (I_1\,|\,F[f]\,|\,I_2)\backslash \{c_1,c_2\}$$ should satisfy FS\hspace{1pt}2--4;\vspace{2pt}
here $f$ is a relabelling with $f(r_i)=c_i$, $f(t_{i})=t_{i}$ and $f(e)=e$ for $i=1,2$,
and \plat{$I_{i}\stackrel{\it def}{=} r_i.\bar{c_i}.I_{i}$} for $i\in\{1,2\}$.
A similar interface for Petri nets occurs in \cite{KW97}.

A formalisation of FS\hspace{1pt}1 on Petri nets also appears in \cite{GH15b}: each complete path $\pi$ with
only finitely many occurrences of $r_i$ should contain a state (= marking) $M$, such that there is a
transition $v$ with $\ell(v)=r_i$ and $\precond{v}\leq M$, and for each transition $u$ that occurs
in $\pi$ past $M$ one has $\precond{v} \cap \precond{u} = \emptyset$.

When discussing proposals for fair schedulers by others, FS\hspace{1pt}1 is
the requirement that is most often violated, and explaining why is not always easy.

In reactive {LTL}, this requirement is formalised as
\begin{trivlist}\vspace{4pt}
\item[\hspace{\labelsep}(FS\hspace{1pt}1)] ~$F \models_{B\setminus\{r_i\}}^J {\bf GF} r_i$
\vspace{3pt}
\end{trivlist}
\phantomsection\label{FS1}
\noindent
for $i\mathbin\in\{1,2\}$, or $F \models_{B\setminus\{r_i\}}^{\it CC} {\bf GF} r_i$ if one wants to discuss the
completeness criterion $CC$ as a parameter. The surprising element in this temporal judgement is the subscript
${B{\setminus}\{r_i\}} = \{r_{3-i}\}$, which contrasts with the assumption that requests are under
the control of the environment.
FS\hspace{1pt}1 says that, although we know that there is no guarantee that user $i$ of $F$ will ever issue
request $r_i$, under the assumption that the user \emph{does} want to make such a request, making
the request should certainly succeed. This means that the protocol itself does not sit in the way of making this request.

The combination of Requirements FS\hspace{1pt}1 and 2, which use different sets of blocking actions as a
parameter, is enabled by reactive {LTL} as presented here.

The following examples, taken from \cite{GH15b}, show that all the above requirements are necessary for the result from
\cite{GH15b} that fair schedulers cannot be rendered in CCS\@.
\begin{itemize}
\item The CCS process $F_1|F_2$ with {$F_i \stackrel{{\it def}}{=} r_i.t_i.e.F_i$} satisfies 
  FS\hspace{1pt}1, FS\hspace{1pt}2 and FS\hspace{1pt}3$'$. In FS\hspace{1pt}1 and 2 one needs to take ${\it CC} := J$, as progress is not a strong
  enough assumption here.
\item The process $E_1|G|E_2$ with {$E_i \mathbin{\stackrel{{\it def}}{=}} r_i.E_i$} and
  \plat{$G \stackrel{{\it def}}{=} t_1.e.t_2.e.G$} satisfies FS\hspace{1pt}1, 2 and 4,
  again with ${\it CC} \mathbin{:=} J$.\vspace{2pt}
\item The process $E_1|E_2$ satisfies FS\hspace{1pt}1, 3$'$ and 4, again with ${\it CC} \mathbin{:=} J$ in FS\hspace{1pt}1.
\item The process $F_0$ with {$F_0 \stackrel{{\it def}}{=} r_1.t_1.e.F_0 + r_2.t_2.e.F_0$} satisfies FS\hspace{1pt}2--4.
  Here FS\hspace{1pt}2 merely needs ${\it CC} \mathbin{:=} {\it Pr}$, that is, the assumption of progress.
  Furthermore, it satisfies FS\hspace{1pt}1 with ${\it CC} := {\it SF}(\Tsk)$, as long as
  $\textsc{r}_1,\textsc{r}_2 \in \Tsk$. Here $\textsc{r}_i$ is the set of transitions with label $r_i$.
\end{itemize}

\noindent
  The process $X$ given by\hfill $X \stackrel{{\it def}}{=} r_1.Y + r_2.Z$,\hfill
  \plat{$Y \stackrel{{\it def}}{=} r_2.t_1.e.Z + t_1.(r_2.e.Z + e.X)$}\hfill and\vspace{3pt}\linebreak
  \plat{$Z \stackrel{{\it def}}{=} r_1.t_2.e.Y + t_2.(r_1.e.Y + e.X)$}, the \emph{gatekeeper},
  is depicted
{\makeatletter
\let\par\@@par
\par\parshape0
\everypar{}\begin{wrapfigure}[14]{r}{0.297\textwidth}
\vspace{-3ex}
 \input{Gatekeeper1}
 \centerline{\raisebox{1ex}{\box\graph}}
\end{wrapfigure}
\noindent
 on the right.
The grey shadows represent copies of the states at the opposite end of the diagram, so the
transitions on the far right and bottom loop around.
This process satisfies FS\hspace{1pt}3$'$ and 4, FS\hspace{1pt}2 with ${\it CC} \mathbin{:=} {\it Pr}$,
  and FS\hspace{1pt}1 with ${\it CC} := {\it WF}(\Tsk)$, thereby improving Process $F_0$, and
  constituting the best CCS approximation of a fair scheduler seen so far.
  Yet, intuitively FS\hspace{1pt}1 is not ensured at all, meaning that weak
  fairness is too strong an assumption. Nothing really prevents all the choices between $r_2$ and
  any other action $a$ to be made in favour of $a$.
\par}

\section{Formalising Requirements for Mutual Exclusion in Reactive LTL}\label{sec:formalising mutex}

Define a process $i$ participating in a mutual exclusion protocol to cycle through the stages
\emph{noncritical section}, \emph{entry protocol}, \emph{critical section}, and \emph{exit protocol},
in that order, as explained in \Sec{history}. Modelled as an LTS, its visible actions will be
$\en$, $\lni$, $\ec$ and $\lc$, of entering and leaving its (non)critical section.
Put $\lni$ in $B$ to make leaving the noncritical section a blocking action.
The environment blocking it is my way of allowing the client process to stay in its noncritical
section forever. This is the manner in which the requirement \emph{optionality} is captured in reactive temporal logic.
On the other hand, $\ec$ should not be in $B$, for one does not consider the starvation-freedom property of a
mutual exclusion protocol to be violated simply because the client process refuses to enter the
critical section when allowed by the protocol. Likewise, $\en$ is not in $B$. Although exiting the
critical section is in fact under control of the client process, it is assumed that it
will not stay in the critical section forever. In the models of this paper this can be simply
achieved by leaving $\lc$ outside $B$. Hence $B := \{\lni \mid i=1,\dots,N\}$.
\newcommand{\act}{\,{\it act}_i}

\hypertarget{ME1}{%
My first requirement on mutual exclusion protocols $P$ simply says that the actions $\en$, $\lni$,
$\ec$ and $\lc$ have to occur in the right order:
\begin{trivlist}
\vspace{4pt}
\item[\hspace{1pt} (\MEA)] ~$P \models {\big(}(\neg \act) {\bf W} \lni {\big)}
  \begin{array}[t]{@{}l@{}}
  \mbox{} \wedge {\bf G}\left(\lni \Rightarrow
   {\bf Y} \big((\neg \act) {\bf W} \ec\big)\right)
  \wedge {\bf G}\left(\ec \Rightarrow
   {\bf Y} \big((\neg \act) {\bf W} \lc\big)\right)
  \\ \mbox{}
  \wedge {\bf G}\left(\lc \Rightarrow
   {\bf Y} \big((\neg \act) {\bf W} \en\big)\right)
  \wedge {\bf G}\left(\en \Rightarrow
   {\bf Y} \big((\neg \act) {\bf W} \lni\big)\right)
\end{array}$
\end{trivlist}
\noindent
for $i=1,\dots,N$. Here $\act := (\lni \vee \ec \vee \lc \vee \en)$.}

The second is a formalisation of \textit{mutex}, saying that only one process can be in its critical
section at the same time:
\hypertarget{ME}{%
\begin{trivlist}
\vspace{3pt}
\item[\hspace{5pt}(\MEB)]
~$P \models {\bf G}\left(\ec \Rightarrow \big((\neg \textcolor{red}{\textit{ec}_j}) {\bf W} \lc\big)\right)$
\vspace{2pt}
\end{trivlist}
\noindent
for all $i,j=1,\dots,N$ with $i\neq j$.
Both {\MEA} and {\MEB} are safety properties, and thus unaffected by changing $\models$ into $\models^{CC}$ or $\models_B^{CC}$.
}

The starvation-freedom requirement of \Sec{history} can be formalised as
\begin{trivlist}
\vspace{4pt}
\item[\hspace{5pt}(\MEC)] ~$P \models_B^{CC} {\bf G}(\lni \Rightarrow {\bf F}\ec)$
\vspace{3pt}
\end{trivlist}
\noindent
for $i=1,\dots,N$.
Here the choice of a completeness criterion is important.
Finally, the following requirements are similar to starvation-freedom, and state that from each section in
the cycle of a process $i$, the next section will in fact be reached.
In regards to reaching the end of the noncritical section, this should be guaranteed only when
assuming that the process wants to leave its noncritical section; hence $\lni$ is excepted from $B$.
\begin{trivlist}
\vspace{4pt}
\item[\hspace{5pt}(\MED)] ~$P \models_B^{CC} {\bf G}(\ec \Rightarrow {\bf F}\lc)$
\item[\hspace{5pt}(\MEE)] ~$P \models_B^{CC} {\bf G}(\lc \Rightarrow {\bf F}\en)$
\item[\hspace{5pt}(\MEF)] ~$P \models_{B\setminus\mbox{\scriptsize\{\lni\}}}^{CC} {\bf F}\lni
   \wedge {\bf G}(\en \Rightarrow {\bf F}\lni)$
\end{trivlist}
\noindent
for $i=1,\dots,N$.

The requirement \emph{speed independence} is automatically satisfied for models of mutual
exclusion protocols rendered in any of the formalisms discussed so far, as these formalisms
lack the expressiveness to make anything dependent on speed.

The following examples show that none of the above requirements are redundant.
\begin{itemize}
\item The CCS process $F_1|F_2| \cdots|F_N$ with \plat{$F_i \stackrel{{\it def}}{=}
  \lni.\ec.\lc.\en.F_i$} satisfies all requirements, with ${\it CC} := J$, except for \hyperlink{ME}{\MEB}.\vspace{2pt}
\item The process $R_1|R_2| \cdots|R_N$ with \plat{$R_i \stackrel{{\it def}}{=} \lni.{\bf 0}$}
  satisfies all requirements except for \hyperlink{ME}{\MEC[]}.\vspace{2pt}
\item For $N\mathbin=2$, process \plat{$H \stackrel{{\it def}}{=}$}
   {\color{blue}\it ln$_1$}.{\color{red}\it ec$_1$}.{\color{blue}\it ln$_2$}.{\color{red}\it lc$_1$}.%
   {\color{blue}\it en$_1$}.{\color{red}\it ec$_2$}.{\color{red}\it lc$_2$}.{\color{blue}\it en$_2$}.$H$
   \!+\! {\color{blue}\it ln$_2$}.{\color{red}\it ec$_2$}.{\color{blue}\it ln$_1$}.{\color{red}\it lc$_2$}.%
   {\color{blue}\it en$_2$}.{\color{red}\it ec$_1$}.{\color{red}\it lc$_1$}.{\color{blue}\it en$_1$}.$H$
   satisfies all requirements but \hyperlink{ME}{\MED[]}.
   The case $N>2$ is only notationally more cumbersome.
   Similarly, one finds examples failing only on \hyperlink{ME}{\MEE[]}, or on the
   second conjunct of \hyperlink{ME}{\MEF[]}.\vspace{2pt}
\item The process ${\bf 0}$ satisfies all requirements except for the first conjunct of \hyperlink{ME}{\MEF[]}.\vspace{2pt}
\item In case $N\mathbin=1$, the process
   \plat{$W \stackrel{{\it def}}{=} \mbox{\color{red}\it lc$_1$}.
   \mbox{\color{red}\it ec$_1$}.\mbox{\color{red}\it lc$_1$}.\mbox{\color{blue}\it en$_1$}.
   \mbox{\color{blue}\it ln$_1$}.W$} satisfies all requirements but \hyperlink{ME1}{\MEA}.
\end{itemize}
\noindent
  The process $X$, a gatekeeper variant, given by 
  \plat{$X \stackrel{{\it def}}{=} \lni[1].Y + \lni[2].Z$},\\[1pt]
  \plat{$Y \stackrel{{\it def}}{=} \lni[2].\ec[1].\lc[1].\en[1].Z + \ec[1].(\lni[2].\lc[1].\en[1].Z +
  \lc[1].(\lni[2].\en[1].Z + \en[1].X))$}\\[2pt]
  \plat{$Z\hspace{-.35pt} \stackrel{{\it def}}{=} \lni[1].\ec[2].\lc[2].\en[2].Y + \ec[2].(\lni[1].\lc[2].\en[2].Y +
  \lc[2].(\lni[1].\en[2].Y + \en[2].X))$}\vspace{2pt}
{\makeatletter
\let\par\@@par
\par\parshape0
\everypar{}\begin{wrapfigure}[12]{r}{0.383\textwidth}
 \vspace{-2ex}
 \input{gatekeeper2}
  \centerline{\raisebox{1ex}{\box\graph}}
 \end{wrapfigure}
\noindent
is depicted on the right.
It satisfies \hyperlink{ME1}{\MEA}, \hyperlink{ME}{\MEB}, \hyperlink{ME}{\MEC},
\hyperlink{ME}{\MED} and  \hyperlink{ME}{\MEE} with ${\it CC} \mathbin{:=} {\it Pr}$ and
\hyperlink{ME}{\MEF[]} with ${\it CC} \mathbin{:=} {\it WF}(\Tsk)$,
where $\textsc{ln}_1,\textsc{ln}_2 \in \Tsk$.
It can be seen as a mediator that synchronises, on the actions $\lni$, $\ec$, $\lc$ and $\en$,
with the actual processes that need to exclusively enter their critical sections.
Yet, it would not be commonly accepted as a valid mutual exclusion protocol, since nothing prevents
it to never choose $\lni[2]$.\linebreak[4] This means that merely requiring weak
fairness in \hyperlink{ME}{\MEF[]} makes this requirement unacceptably weak.
The problem with this protocol is that it ensures starvation-freedom by making it hard for processes
to leave their noncritical sections.
\par}

\section{State-oriented Requirements for Mutual Exclusion}\label{sec:state-oriented}

Some readers may prefer a state-oriented view of mutual exclusion over the action-oriented view of
\Sec{formalising mutex}. In such a view, the mutex requirement (\hyperlink{ME}{\MEB} in \Sec{formalising mutex})
simply says that different processes $i$ and $j$ cannot \emph{be} in the critical section at the
same time, rather than encoding this in terms of the actions of entering and leaving the critical section.
This section translates the requirements on mutual exclusion from the action-oriented view of
\Sec{formalising mutex} to a state-oriented view.

Let's model the protocol with a Kripke structure that features the atomic predicates $C_i$ and $I_i$, for $i=1,2$.
Predicate $C_i$ holds when Process $i$ is in its critical section, and $I_i$ when Process $i$ intends to
enter its critical section, but isn't there yet. Predicate $I_i$ should thus hold when Process $i$
has left the noncritical section and is executing its entry protocol. In this presentation one can
lump together the exit protocol and the noncritical section of process $i$; these comprise the
states satisfying $\neg(I_i \vee C_i)$.
Now Requirements \hyperlink{ME}{\MEA--\MEF[]} can be reformulated as follows, for $i,j=1,\dots,N$.
\begin{trivlist}
\vspace{4pt}
\item[\hspace{1pt} (\hyperlink{ME1}{\MEA})] \(\quad\!\! P \models 
  \begin{array}[t]{@{}l@{}}
  {\bf G}\neg(C_i \wedge I_i)
  \wedge \neg(I_i \vee C_i)
  \wedge {\bf G}\left(I_i \Rightarrow (I_i {\bf W} C_i\right))
  \wedge {\bf G}\left(C_i \Rightarrow (C_i {\bf W} \neg(I_i \vee C_i))\right)
  \\ \mbox{}
  \wedge {\bf G}\left(\neg(I_i \vee C_i) \Rightarrow (\neg(I_i \vee C_i) {\bf W} I_i)\right)
\end{array}\)
\item[\hspace{5pt}(\hyperlink{ME}{\MEB})]
~~\quad $P \models {\bf G}\left(\neg(C_i \wedge C_j)\right)$ \qquad\qquad\qquad\; for all $j\neq i$
\item[\hspace{5pt}(\hyperlink{ME}{\MEC})] ~$P \models_B^{CC} {\bf G}(I_i \Rightarrow {\bf F}C_i)$
\item[\hspace{5pt}(\hyperlink{ME}{\MED})] ~$P \models_B^{CC} {\bf G}(C_i \Rightarrow {\bf F}\neg(I_i \vee C_i))$
\item[\hspace{5pt}(\hyperlink{ME}{\MEF})] ~$P \models_{B\setminus\{I_i\}}^{CC} {\bf G}(\neg(I_i \vee C_i) \Rightarrow {\bf F}I_i)$
\vspace{3pt}
\end{trivlist}

{\makeatletter
\let\par\@@par
\par\parshape0
\everypar{}\begin{wrapfigure}[19]{r}{0.35\textwidth}
 \vspace{-13ex}
 \tiny
 \input{gatekeeper}
 \centerline{\raisebox{1ex}{\box\graph}}
\end{wrapfigure}
\noindent
Here $B$ can be rendered as a set of atomic predicates $p$, and refers to all those ``blocking transitions''
that go from a state where $p$ does not hold to one where $p$ holds, for some $p\in B$.
So here $B=\{I_i \mid i=1,\dots,N\}$.

In this setting, the gatekeeper is depicted on the right.
It satisfies \hyperlink{ME1}{\MEA}, \hyperlink{ME}{\MEB}, \hyperlink{ME}{\MEC}
and \hyperlink{ME}{\MED} above, with $CC=Pr$,
but {\MEF} only with $CC={\it WF}$.
\par}



\section{A Hierarchy of Quality Criteria for Mutual Exclusion Protocols}\label{sec:hierarchy}

Formalising the quality criteria for fair schedulers as FS\hspace{1pt}1--4, one sees that, unlike 
FS\hspace{1pt}3 and 4, Requirements FS\hspace{1pt}1 and 2 are parametrised by the choice
of a completeness criterion $CC$. In each of FS\hspace{1pt}1 and FS\hspace{1pt}2, CC can be
instantiated with either $\top$, ${\it Pr}$, $J$, ${\it WF}(\Tsk)$ or ${\it SF}(\Tsk)$
for a suitable collection of tasks $\Tsk$. When seeing ${\it WF}(\Tsk)$ or ${\it SF}(\Tsk)$ as
single choices, allowing them to utilise the most appropriate choice of $\Tsk$,
this yields a hierarchy of $5\times 5 =25$ different quality criteria for fair schedulers, partially
depicted in Figure~\ref{hierarchy}.
\begin{figure}[hbt]
\input{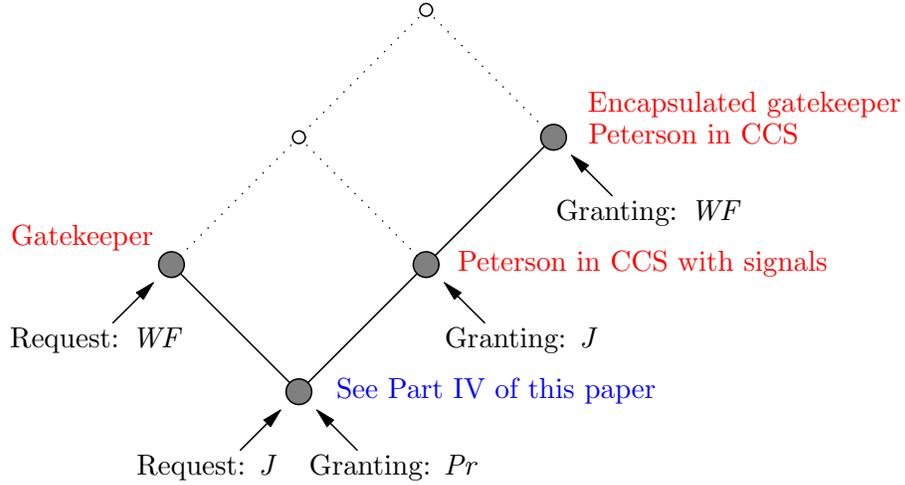}
\centerline{\raisebox{1ex}{\box\graph}}
\vspace{-3pt}
\caption{\it A hierarchy of quality criteria for fair schedulers and mutual exclusion protocols}
\label{hierarchy}
\end{figure}
Here ``Request'' indicates the completeness criterion used in Requirement FS\hspace{1pt}1
and ``Granting'' the one taken in FS\hspace{1pt}2. Note that quality criteria encountered further up
in the figure employ stronger fairness assumptions, and thus yield weaker, or less impressive, fair schedulers.

I have not rendered all 25 quality criteria in Figure~\ref{hierarchy}, as many are irrelevant. Since
no meaningful liveness property holds when merely assuming the trivial completeness criterion
$\top$, one can safely discard it from consideration; there can exist no fair scheduler satisfying
FS\hspace{1pt}1 or 2 with $CC=\top$.  Likewise, one can forget about the possibility ``Request: {\it Pr}''.
As an infinite run such as $(r_2 t_2 e)^\infty$ in which a request $r_1$ is never received,
and consequently a request-granting action $t_1$ never occurs, should be a complete run of the
system, progress is not strong enough an assumption to ensure that when user $1$ wants to issue
request $r_1$ it will actually succeed. The least one should assume here is justness.

At the other end of the hierarchy I have dropped the choice {\it SF}. The reason is that there turn
out to be completely satisfactory solutions merely assuming weak fairness in either dimension; so
the choice of strong fairness makes the fair scheduler unnecessary weak.

The same hierarchy of quality criteria applies to mutual exclusion protocols.
Now ``Request'' indicates the completeness criterion used in Requirement \hyperlink{ME}{\MEF[]}
and ``Granting'' the one taken in \hyperlink{ME}{\MEC[]}. 
The latter concerns the starvation-freedom property of mutual exclusion; it indicates how hard it is to
reach the critical section after a process' interest in doing this has been expressed.
The former indicates how hard it is to express such an interest in the first place.
Again, the choice ``Request: {\it Pr}'' can be discarded, as no mutual exclusion protocol can meet
this requirement. This is due to the infinite run $(\lni[2] \ec[2] \lc[2] \en[2])^\infty$,
in which process $1$ never requests access to the critical section. When merely assuming progress,
one cannot tell whether this is because process 1 does not want to leave its noncritical section,
or because it wants to but doesn't succeed, as always another action is chosen.

In principle, there are two more dimensions in classifying the quality criteria for mutual exclusion
protocols, namely the choice of a completeness criterion for Requirements \hyperlink{ME}{\MED[] and \MEE[]}.
These indicate how hard it is to leave the critical section after entering, and to enter the
noncritical section after leaving the critical one, respectively. As both tasks are really easy to
accomplish, these two dimensions are not indicated in Figure~\ref{hierarchy}.
Nevertheless, they should not be forgotten in the forthcoming analysis.
There are no further dimensions for \hyperlink{ME1}{{\MEA} and \MEB}, as these are safety properties.

When allowing weak fairness in the ``request'' dimension, the gatekeeper, described for fair schedulers
in \Sec{formalising FS} and for mutual exclusion in \Sec{formalising mutex}, is a good solution.
It merely requires progress in the ``granting'' dimension, and for mutual exclusion also in
\hyperlink{ME}{\MED[] and \MEE[]}.
As most researchers in the area of mutual exclusion would agree that nevertheless the gatekeeper is
not an acceptable protocol, we have evidence that weak fairness in the ``request'' dimension is too
strong an assumption.

\section{An Input Interface for Implementing LN}\label{sec:interface}

In \cite[Section~13]{GH15b} an input interface is proposed that can be put around any potential fair
scheduler expressed in CCS---it was recalled in \Sec{formalising FS}. It was shown that a process $F$ satisfies
Requirements FS\hspace{1pt}1--4 iff the \emph{encapsulated} process $\widehat F$---the result of
putting $F$ in this interface---satisfies FS\hspace{1pt}2--4. Here I propose a similar
interface for mutual exclusion protocols, restricting attention to the case of $N=2$ parties.

\begin{definition}{input interface}
For any expression $P$, let $\widehat P := (I_1 \mid P[f] \mid I_2){\setminus} \{c_1,c_2,d_1,d_2\}$
where \plat{$I_i \stackrel{{\it def}}{=} \lni.\bar{c_i}.\bar{d_i}.\en. I_i$} for $i\in\{1,2\}$ and $f$
is a relabelling with $f(\lni)=c_i$ and $f(\en)=d_i$ for $i\in\{1,2\}$,
such that $f(a)=a$ for all other actions $a$ occurring in $P$.
\end{definition}

\begin{observation}{input interface}
Suppose that $P$ satisfies \hyperlink{ME1}{\MEA}, and criterion $CC$ is at least as strong as justness.
Then $\widehat P$ satisfies \hyperlink{ME1}{\MEA} as well as \hyperlink{ME}{\MEF[J]}, and
\begin{itemize}
\item $\widehat P$ satisfies \hyperlink{ME}{\MEB} iff $P$ satisfies \hyperlink{ME}{\MEB}.
\item $\widehat P$ satisfies \hyperlink{ME}{\MEC} iff $P$ satisfies \hyperlink{ME}{{\MEF} and \MEC},
\item $\widehat P$ satisfies \hyperlink{ME}{\MED} iff $P$ satisfies \hyperlink{ME}{\MED}.
\item $\widehat P$ satisfies \hyperlink{ME}{\MEE} iff $P$ satisfies \hyperlink{ME}{\MEE}.
\end{itemize}
\end{observation}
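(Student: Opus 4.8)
The plan is to analyse the transition system of $\widehat P$ in lockstep with that of $P$, and then transfer each requirement across a path correspondence, paying close attention to how the interface rewrites the blocking status of the actions $\lni$ and $\en$. First I would record the shape of every run of $\widehat P$. Because $P$ satisfies ME\,1, within each component $I_i$ the only enabled action initially and after each $\en$ is $\lni$, after $\lni$ it is $\bar{c_i}$, after $\bar{c_i}$ it is $\bar{d_i}$, and after $\bar{d_i}$ it is $\en$; moreover the synchronisations $\bar{c_i}/c_i$ and $\bar{d_i}/d_i$ keep $I_i$ and process $i$ of $P[f]$ in lockstep. Consequently each cycle of process $i$ factors as: the observable $\lni$ of $I_i$; the hidden synchronisation $\bar{c_i}/c_i$ realising $P$'s own $\lni$; the unchanged observable actions $\ec$ and $\lc$ of $P$ (during which $I_i$ is parked at $\bar{d_i}$); the hidden synchronisation $\bar{d_i}/d_i$ realising $P$'s own $\en$; and finally the observable $\en$ of $I_i$. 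This yields a projection from paths of $\widehat P$ to paths of $P$ that preserves the observable $\ec$- and $\lc$-trace exactly, replaces $P$'s $\lni$ by ``interface-$\lni$ followed by a $\tau$'', and $P$'s $\en$ by ``a $\tau$ followed by interface-$\en$''.

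The safety claims are then immediate: ME\,1 for $\widehat P$ follows by combining the ordering enforced by the $I_i$ with the ordering of $\ec,\lc$ inherited from $P$; and since the $\ec,\lc$-trace is preserved by the projection and both ME\,1 and ME\,2 mention only $\ec,\lc$ (which the interface never reorders), $\widehat P \models$ ME\,2 iff $P \models$ ME\,2. For ME\,6$^J$ I would argue directly: the transition $\lni$ of $I_i$ is concurrent with every transition not belonging to component $I_i$, and whenever $I_i$ is in its initial state (at the start, and immediately after every $\en$) it is the unique enabled $I_i$-transition; since $\lni \notin B{\setminus}\{\lni\}$, \df{Bjustness} forces it to be taken, giving both ${\bf F}\lni$ and ${\bf G}(\en\Rightarrow{\bf F}\lni)$. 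The same style of argument — an interface transition that is eventually the unique enabled $I_i$-action and concurrent with the rest — shows that, because $CC$ is at least as strong as justness, the $\tau$-synchronisations and the observable $\en$ of $I_i$ are forced once their predecessors have occurred. This is what makes ME\,4$^{CC}$ and ME\,5$^{CC}$ for $\widehat P$ equivalent to the corresponding requirements for $P$: the segment from $\ec$ to $\lc$ lies entirely inside $P[f]$ with $I_i$ parked, so ME\,4 transfers verbatim, while ME\,5 for $\widehat P$ splits into ``$P$ reaches its own $\en$'' (which is $P$'s ME\,5) followed by the justness-forced interface $\en$.

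The decisive claim is the one for ME\,3, and here the blocking-set bookkeeping is the heart of the matter. The key observation is that the interface rewrites the single blockable observable action $\lni$ of $P$ into an observable, blockable interface-$\lni$ followed by a hidden, hence \emph{unblockable}, synchronisation $\bar{c_i}/c_i$ that performs $P$'s own $\lni$. Thus, once the interface-$\lni$ has occurred in $\widehat P$ (under $B=\{\lni\}$), $P$'s own $\lni$ is no longer blockable — precisely the regime of $P$'s ME\,6, whose blocking set is $B{\setminus}\{\lni\}$. I would therefore show that reaching $\ec$ after an interface-$\lni$ in $\widehat P$ decomposes into (i) $P$ performing its own $\lni$, governed by ME\,6$^{CC}$ of $P$ under the unblockable regime just described, and (ii) $P$ progressing from $\lni$ to $\ec$, governed by ME\,3$^{CC}$ of $P$ under $B=\{\lni\}$. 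For the forward direction I lift a complete path of $P$ through the projection and use $\widehat P \models$ ME\,3 together with the fact that $\ec$ can only be reached past the $\bar{c_i}/c_i$ synchronisation, thereby recovering both ME\,6 and ME\,3 of $P$; for the backward direction I combine ME\,6 (to force $P$'s $\lni$, i.e.\ the synchronisation the interface is waiting for) with ME\,3 (to reach $\ec$). The main obstacle is proving that the projection induces a tight correspondence between $CC(B)$-complete paths of $\widehat P$ and the appropriately-$CC$-complete paths of $P$, i.e.\ that $B$-justness (and the stronger criteria) of a path of $\widehat P$ corresponds, segment by segment, to $(B{\setminus}\{\lni\})$-justness around each noncritical section and $B$-justness thereafter; verifying this interplay between the concurrency relation $\aconc$ on $\widehat P$ (inherited from its components) and that on $P$ is the technical core on which all four equivalences rest.
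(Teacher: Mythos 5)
The paper never proves this statement: it is labelled an \emph{Observation} and left without proof, the only supporting text being the informal discussion of the encapsulated gatekeeper that follows it and the appeal to the analogous fair-scheduler result of \cite{GH15b}. So there is no proof of record to compare yours against; judged on its own terms, your plan is sound and is essentially the argument the paper leaves implicit. The lockstep factorisation of runs of $\widehat P$ (interface $\lni$, then the hidden $\bar{c_i}/c_i$ synchronisation realising $P$'s own $\lni$, then $P$'s $\ec$ and $\lc$ with $I_i$ parked at $\bar{d_i}$, then the hidden $\bar{d_i}/d_i$ synchronisation realising $P$'s $\en$, then interface $\en[i]$), the induced projection and lifting of paths, and above all the observation that encapsulation turns $P$'s blockable $\lni$ into an unblockable $\tau$---which is exactly the blocking regime $B\setminus\{\lni\}$ of $P$'s ME\hspace{1pt}6---are precisely the needed ingredients, and your decomposition of $\widehat P$'s ME\hspace{1pt}3 into $P$'s ME\hspace{1pt}6 followed by $P$'s ME\hspace{1pt}3 is the right account of the second bullet.

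One sentence in your second paragraph is false as stated, although your later, more detailed claims do not actually rely on it: the $\tau$-synchronisations are \emph{not} forced by justness ``once their predecessors have occurred''. A synchronisation needs both partners, and any transition of $P[f]$ that uses the $P$-side resources of the pending synchronisation interferes with it in the sense of \df{Bjustness}, so a just path may defer the synchronisation forever. The encapsulated gatekeeper exhibits exactly this: it satisfies ME\hspace{1pt}6$^J$ yet needs weak fairness for ME\hspace{1pt}3, because the gatekeeper component may keep serving the other client, and each such serving step interferes with the pending $\bar{c_i}/c_i$ synchronisation. Only the purely local interface actions $\lni$ and $\en[i]$ are justness-forced, since no transition outside component $I_i$ can interfere with them. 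Whether $\bar{c_i}/c_i$ occurs is precisely what $P$'s ME\hspace{1pt}6$^{CC}$ governs, and whether $\bar{d_i}/d_i$ occurs is what $P$'s ME\hspace{1pt}5$^{CC}$ governs; this is why ME\hspace{1pt}6 is a genuine, non-redundant ingredient of the second bullet, and your ME\hspace{1pt}3 and ME\hspace{1pt}5 paragraphs do attribute these steps correctly. The remaining work you identify---the segment-by-segment correspondence of $CC(B)$-complete paths under projection and lifting, including the bookkeeping of $\aconc$ across parallel composition and restriction---is indeed the technical core, and a full write-up would also have to relate the tasks of ${\it WF}(\Tsk)$ or ${\it SF}(\Tsk)$ on $\widehat P$ to those on $P$ when $CC$ is strictly stronger than justness.
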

\noindent
Let the \emph{encapsulated gatekeeper} $H$ be the result of putting this input interface
around the gatekeeper for mutual exclusion. It can be described as follows, and its labelled
transition system is depicted in Figure~\ref{encapsulated}.
Here I added subscripts to $\tau$-action to indicate their origins.
\[\begin{array}{l}
  H := (I_1\,|\,X\,|\,I_2)\backslash \{c_1,c_2,d_1,d_2\}\\
  I_i \stackrel{{\it def}}{=} \lni.\bar{c_i}.d_i.\en. I_i \qquad\qquad \mbox{for}~ i\in\{1,2\},\\
  X \stackrel{{\it def}}{=} c_1.Y + c_2.Z,\\
  Y \stackrel{{\it def}}{=} c_2.\ec[1].\lc[1].\bar{d_1}.Z + \ec[1].(c_2.\lc[1].\bar{d_1}.Z +
  \lc[1].(c_2.\bar{d_1}.Z + \bar{d_1}.X))\\
  Z\hspace{-.35pt} \stackrel{{\it def}}{=} c_1.\ec[2].\lc[2].\bar{d_2}.Y + \ec[2].(c_1.\lc[2].\bar{d_2}.Y +
  \lc[2].(c_1.\bar{d_2}.Y + \bar{d_2}.X))
\end{array}\]

{\makeatletter
\let\par\@@par
\par\parshape0
\everypar{}\begin{wrapfigure}[20]{r}{0.52\textwidth}
 \vspace{-1.5ex}
 \input{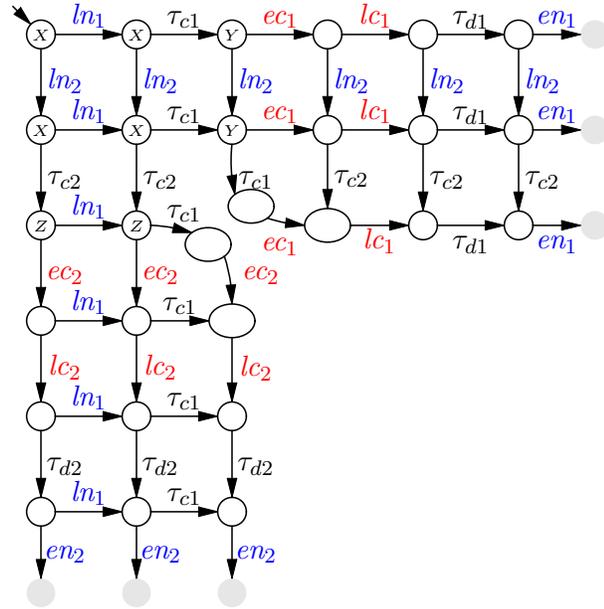}
  \centerline{\raisebox{1ex}{\box\graph}}
 \caption{\it Encapsulated gatekeeper}
 \label{encapsulated}
 \end{wrapfigure}
\noindent
This mutual exclusion protocol satisfies \hyperlink{ME}{\MEF[J]},
for as soon as $\en$ has occurred (and in the initial state)
Process $I_i$ is in its initial state $\lni.\bar{c_i}.d_i.\en. I_i$, and nothing stands in the way of the
action $\lni$. In other words, justness is a strong enough assumption for $\lni$ to occur.
Clearly, the protocol also satisfies \hyperlink{ME1}{\MEA} and \hyperlink{ME1}{\MEB}, as well as
\hyperlink{ME}{\MED[Pr]} and \hyperlink{ME}{\MED[Pr]}.
The only downside is that it takes weak fairness to achieve \mbox{\hyperlink{ME}{\MEC[]}},
starvation-freedom. This assumption is needed to assure that
the synchronisation between actions $\bar{c_i}$ and $c_i$ will actually occur.\\
\mbox{}\hspace{5pt} Intuitively, the encapsulated gatekeeper is an equally unacceptable mutual exclusion protocol
as the gatekeeper, for the input interface ought to make no difference.
This shows that weak fairness in any dimension of Figure~\ref{hierarchy} is too strong an assumption.
However, due to the impossibility result of \cite{GH15b}, the two remaining entries of Figure~\ref{hierarchy}
cannot be realised in CCS\@. Theoretically, that result leaves open the possibility of achieving
justness in both the dimensions ``request'' and ``granting'', at the expense of assuming weak
fairness for \hyperlink{ME}{\MED[] or \MEE[]}. I do not think this is actually possible, and even if it were,
a solution that requires weak fairness to escape the critical section, or to enter the noncritical
one, appears equally unacceptable as the (encapsulated) gatekeeper.
\par}

\addtocontents{toc}{\protect\vspace{-4pt}}
\part{Impossibility Results for Peterson's Mutual Exclusion Algorithm}\label{impossibility results}

Here I recall three impossibility results for mutual exclusion protocols that have been shown or
claimed earlier, and illustrate or substantiate them for Peterson's mutual exclusion protocol.
I could have equally well done this for another mutual exclusion protocol, such as
Lamport's bakery algorithm \cite{bakery}, Aravind's mutual exclusion algorithm \cite{Ar11}
or the \emph{round-robin} scheduler.\footnote{As a mutual exclusion protocol, the round-robin is a 
  central scheduler that grants access to the critical section to $N$ processes numbered $1-N$
  by cycling through all competing processes in the order $1-N$. Each time it is the turn of
  Process $i$, the round-robin scheduler checks whether Process $i$ wants to enter the critical
  section, and if so, grants access. When Process $i$ leaves the critical section, or if it didn't
  want to enter, it will be the turn of Process $i{+}1$ mod $N$.

  When confronted with the claim that under some natural assumptions no correct mutual exclusion
  protocols exist, some people reply that in that case one could always use a round-robin scheduler,
  as if this somehow constitutes an exception.}
My reason for choosing Peterson's protocol in this paper is that it is (one of) the simplest of all
mutual exclusion protocols.

The first impossibility result stems from \cite{Vogler02,KW97,GH15b}, and says that in Petri nets,
and in CCS and similar process algebras, it is not possible to model a mutual exclusion protocol
in such a way that it is correct without making an assumption as strong as weak fairness.
  In \cite{Vogler02} this is shown for finite Petri nets, and in \cite{KW97} for a class of
  Petri nets that interact with their environment through an interface of a particular shape,
  similar to the one of \Sec{interface}. In \cite{GH15b} the same is shown for all structural
  conflict nets (and thus for all safe nets), as well as for the process algebra CCS, with strong
  hints on how the result extends to many similar process algebras. For the latter result, either
  the concurrency relation between CCS transitions defined in \Sec{LTSC}, or directly the resulting
  concept of a just path, needs to be seen as an integral part of CCS\@.
In Sections~\ref{sec:Peterson Petri} and~\ref{sec:Peterson CCS} I will illustrate this impossibility
result for a rendering of Peterson's protocol as a Petri net and as a CCS expression, respectively. 
In \cite{GH15b,GH19} moreover the point of view is defended that assuming (strong or weak) fairness is typically
unwarranted, in the sense that there is no reason to assume that reality will behave in a fair way.
From this point of view, a model of a mutual exclusion that hinges on a fairness assumption
can be seen as incorrect or unsatisfactory. This makes the above into a real impossibility result.

The second impossibility result was claimed in \cite{vG18c}, but unaccompanied by written evidence.
It blames the first impossibility result above on the combination of two assumptions or protocol
features, which I here call \emph{atomicity} and \emph{speed independence}.
Atomicity, or rather the special case of atomicity that is relevant for the second impossibility result,
will be formally defined as (\ref{atomicity}) in \Sec{atomicity}. It can be seen as an assumption on the behaviour
of the hardware on which an implementation of a mutual exclusion protocol will be running.
Atomicity is explicitly assumed in the original paper of Dijkstra where the mutual exclusion problem
was presented \cite{Dijk65}, and implicitly in many other papers on mutual exclusion,
but not in the work of Lamport \cite{bakery}.
Speed independence can either be seen as an assumption on the underlying hardware, or as a feature
of a mutual exclusion protocol. The assumption stems from Dijkstra \cite{Dijk65} and was quoted in \Sec{history}.
The claims of \cite{vG18c} employ a rather strict interpretation of speed independence, illustrated
in \ex{speed independence}.

In a setting where solutions based on the assumption of weak fairness are rejected, as well as
solutions that are merely probabilistically correct, \cite{vG18c} claims that when assuming atomicity,
speed-independent mutual exclusion is impossible. This means that when assuming atomicity and speed
independence, there is no mutual exclusion protocol satisfying \hyperlink{ME1}{\MEA}--\hyperlink{ME}{\MEF[]}
with $CC=J$. The assumption of speed independence is built in in CCS and Petri nets, in the sense
that any correct mutual exclusion protocol formalised therein is automatically speed independent.
This is because these models lack the expressiveness to make anything dependent on speed.
When taking the concurrency relation between CCS or net transitions defined in \Sec{LTSC} as an
integral part of semantics of CCS or Petri nets, also the assumption of atomicity is built in in
these frameworks. This makes the first impossibility result a special case of the second.
The latter can be seen as a generalisation of the former that is not dependent on a particular
modelling framework. In \Sec{impossible} I will substantiate the above claim of \cite{vG18c} for the special
case of Peterson's protocol.

The third impossibility result, also claimed in \cite{vG18c}, says that when dropping the assumption
of atomicity, but keeping speed independence, there still exists no mutual exclusion protocol
satisfying \hyperlink{ME}{\MEC[Pr]}. That is, the assumption of progress is not
strong enough to obtain starvation-freedom of any speed-independent mutual exclusion protocol.
I will substantiate this claim for the special case of Peterson's protocol in \Sec{Peterson}.
In Part~\ref{timeouts} I aim for a formalisation of Peterson's protocol that satisfies
\hyperlink{ME}{\MEC[Pr]}. It follows that there I will have to drop speed independence.

\section{Peterson's Mutual Exclusion Protocol}\label{sec:Peterson}

A pseudocode rendering of Peterson's protocol is depicted in \autoref{fig:peterson}.
The two processes, here called {\procA} and {\procB}, use three shared variables: $\rA$, $\rB$ and $\tu$. 
The Boolean variable $\rA$ can be
written by Process $\procA$ and read by Process $\procB$, whereas $\rB$ can be
written by $\procB$ and read by $\procA$. By setting $\rA$ to $\tr$, Process
$\procA$ signals to Process $\procB$ that it wants to enter the critical
section. The variable $\tu$ can be written and read by both processes. 
Its carefully designed  functionality guarantees  mutual exclusion as well as deadlock-freedom. 
Both $\rA$ and $\rB$ are initialised with\/ $\fa$ and $\tu$ with $\it A$.
\begin{figure}[t]
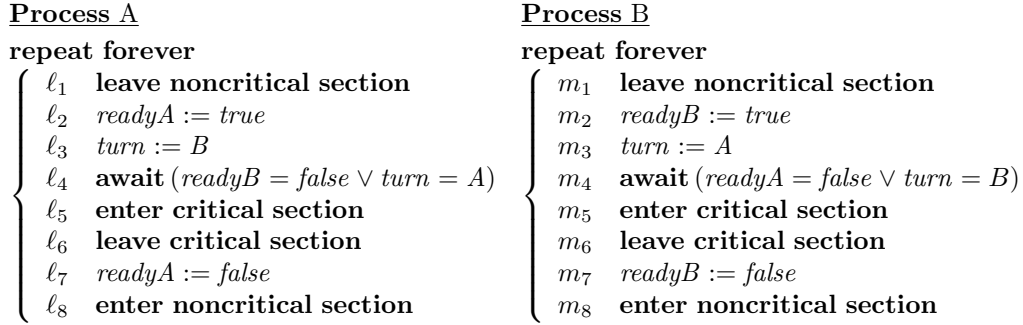

\centering
\small$
\begin{array}{@{}l@{}}
\underline{\bf Process~\procA}\\[.5ex]
{\bf repeat~forever}\\
\left\{\begin{array}{ll}
\ell_1 & {\bf leave~noncritical~section}\\
\ell_2 & \it \rA := \tr	\\
\ell_3 & \it \tu := B \\
\ell_4 & {\bf await}\,(\it \rB = \fa \vee \tu = A) \\
\ell_5 & {\bf enter~critical~section}\\
\ell_6 & {\bf leave~critical~section}\\
\ell_7 & \it \rA := \fa\\
\ell_8 & {\bf enter~noncritical~section}\\
\end{array}\right.
\end{array}
~~~~~~
\begin{array}{@{}l@{}}
\underline{\bf Process~\procB}\\[.5ex]
{\bf repeat~forever}\\
\left\{\begin{array}{ll}
m_1 & {\bf leave~noncritical~section}\\
m_2 & \it \rB := \tr	\\
m_3 & \it \tu := A \\
m_4 & {\bf await}\,(\rA = \fa \vee \tu = B) \\
m_5 & {\bf enter~critical~section}\\
m_6 & {\bf leave~critical~section}\\
m_7 & \it \rB := \fa \\
m_8 & {\bf enter~noncritical~section}\\
\end{array}\right.
\end{array}$
\caption{\it Peterson's algorithm (pseudocode)}
\label{fig:peterson}
\end{figure}

\begin{figure}[ht]
{\scriptsize
\input{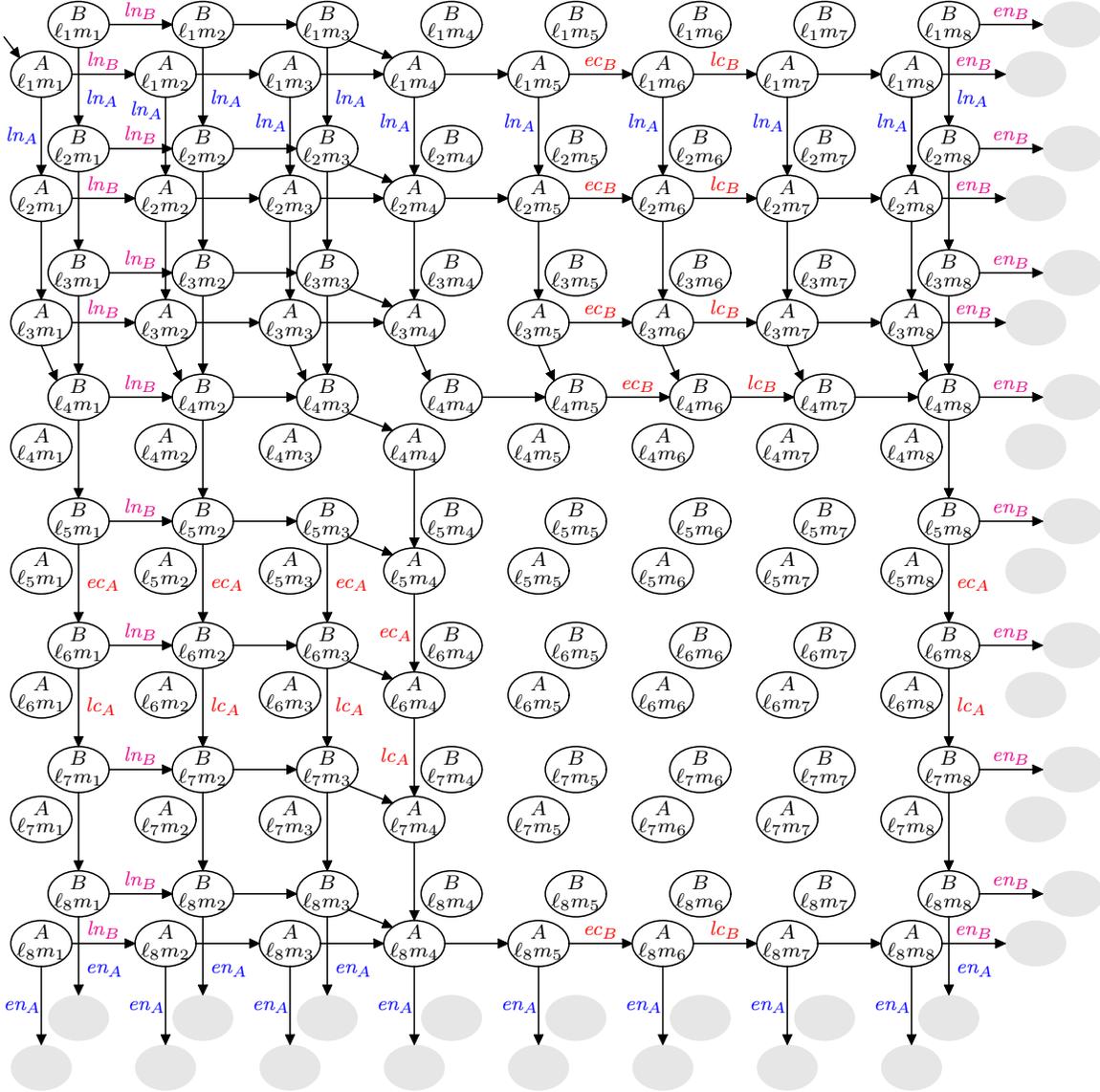}
\centerline{\box\graph}
}
\caption{\it LTS of Peterson's mutual exclusion algorithm}
\label{Peterson}
\end{figure}

Figure~\ref{Peterson} presents a labelled transition system for this protocol.
The name of a state \plat{\scriptsize\lm ijT} where $T$ is $A$ or $B$ indicates that Process $\procA$ is in
State $\ell_i$, Process $\procB$ in State $m_j$ and the variable $turn$ has value $T$,
with the convention that Instruction $\ell_i$ leads from State $\ell_i$ to State $\ell_{i+1}$.
This completely determines the values of the variables $\rA$ and $\rB$.
The actions
$\lnA$,
$\lnB$,
$\ecA$,
$\ecB$,
$\lcA$,
$\lcB$,
$\enA$ and
$\enB$
are visible; they correspond to the instructions $\ell_1$, $m_1$,  $\ell_5$, $m_5$,
$\ell_6$, $m_6$,  $\ell_8$ and $m_8$, respectively. All other transitions, unlabelled in
Figure~\ref{Peterson}, are labelled $\tau$. When assuming speed independence, none of the paths in 
Figure~\ref{Peterson} can be ruled out due to timing considerations. This makes 
Figure~\ref{Peterson} an adequate rendering of Peterson's protocol.

From the pseudocode in Figure~\ref{fig:peterson} one sees immediately that Peterson's protocol
satisfies \hyperlink{ME1}{\MEA} (all visible actions occur in the right order) and
\hyperlink{ME}{\MEF[J]} (nothing stands in the way of a process leaving its noncritical section).
By inspecting the LTS in Figure~\ref{Peterson} one sees that it moreover satisfies
\hyperlink{ME}{\MEB} (the mutual exclusion property) and \hyperlink{ME}{\MED[Pr]} (assuming progress
and willingness to do so is enough to ensure that a process will leave its critical section
after entering). One obtains \hyperlink{ME}{\MEE[J]} (assuming justness suffices to ensure that a process
always enters its noncritical section after leaving its critical section) by combining the code and the LTS\@.
The LTS shows that assuming progress suffices to ensure that $\lcB$ is always followed by $m_7$.
The code shows that assuming justness suffices to ensure that $m_7$ is always followed by $\lnB$.
Of course the same applies to Process A\@.

More problematic is Requirement \hyperlink{ME}{\MEC[]}, starvation-freedom.
Thanks to symmetry, I may restrict attention to \hyperlink{ME}{\MEC[]} for Process A\@.
Will Instruction $\ell_1 = \lnA$ always be followed by $\ell_5 = \ecA$?
The LTS shows that $\ell_2$ is always followed by $\ell_5=\ecA$, even when
merely assuming progress.  However, it is less clear whether $\ell_1=\lnA$ is always followed by $\ell_2$.
The only\footnote{when considering two paths essentially the same if they differ merely on a finite prefix}
progressing path $\pi_P$ on which $\ell_1$ is not followed by $\ell_2$ visits state
\plat{\scriptsize\lm 24A} infinitely often and always takes the transition going right.
This path witnesses that progress is not strong enough an assumption to ensure \hyperlink{ME}{\MEC[]},
whereas weak fairness is, provided all the transitions stemming from Instruction $\ell_2$ form a task.%
\footnote{Formally, $\mbox{\it Peterson} \models_B^{{\it WF}(\Tsk)} {\bf G}(\lni \Rightarrow {\bf F}\ec)$
when $\textsc{l}_2, \textsc{m}_2 \in \Tsk$, where $\textsc{l}_2$ (resp.\ $\textsc{m}_2$) contains
all transitions stemming from instruction $\ell_2$ (resp.\ $m_2$).
Namely,  $\pi_P$ fails to be weakly fair, for it has a suffix on which
task $\textsc{l}_2$ is perpetually enabled but never occurs.}
Whether justness is a strong enough assumption for \hyperlink{ME}{\MEC[]} depends solely on
the question whether $\pi_P$ is just.

To answer that question one can interpret the LTS of Figure~\ref{Peterson} as an LTSC, by 
investigating an appropriate concurrency relation $\aconc$ on the transitions. Whether two
transitions are concurrent ought to depend solely on the instructions $\ell_i$ and $m_j$ from
Figure~\ref{fig:peterson} that gave rise to these transitions, that is, whether $\ell_i \conc m_j$
for $i,j =1,\dots,8$. Assuming that the three variables $\rA$, $\rB$ and $\tu$ are stored in
independent stores or registers, the only pairs that may violate $\ell_i \conc m_j$ are
$\ell_3 \nconc m_3$, 
$\ell_2 \nconc m_4$, 
$\ell_3 \nconc m_4$, 
$\ell_7 \nconc m_4$,
$\ell_4 \nconc m_2$, 
$\ell_4 \nconc m_3$ and
$\ell_4 \nconc m_7$,
for these instruction compete for access to the same register.\footnote{The verdicts
$\ell_1 \conc m_j$,
$\ell_8 \conc m_j$,
$\ell_i \conc m_1$ and
$\ell_i \conc m_8$ were already used implicitly in the above derivation of \hyperlink{ME}{\MEF[J]}
and \hyperlink{ME}{\MEE[J]} from the pseudocode.} The only pair out of these 7 that affects the
justness of $\pi_P$ is $\ell_2 \nconc m_4$. Considering that in State \plat{\scriptsize\lm 24A} it
is possible to perform Instruction $m_4$, moving to State \plat{\scriptsize\lm 25A},
yet in State \plat{\scriptsize\lm 34A}, thus after performing $\ell_2$, it is no longer possible
to perform Instruction $m_4$, one surely has $m_4 \naconc \ell_2$, that is, Instruction $m_4$ is
affected by $\ell_2$---compare (\ref{closure}). On the other hand, one cannot derive from 
Figure~\ref{Peterson} alone whether $\ell_2 \naconc m_4$. In case one decides that $\ell_2 \aconc m_4$
then $\pi_P$ is not $B$-just by \df{Bjustness}; as nothing blocks the execution of Instruction
$\ell_2$, it must eventually occur. In this case \hyperlink{ME}{\MEC[J]} holds. However, in case one
decides that $\ell_2 \naconc m_4$, then $\pi_P$ is just and \hyperlink{ME}{\MEC[J]} does not hold.

Sections~\ref{sec:Peterson Petri}--\ref{sec:Peterson CCS}
examine whether $\ell_2 \aconc m_4$ holds in renderings of this protocol as a Petri net and in CCS\@.
In \Sec{atomicity} I will reflect on whether $\ell_2 \aconc m_4$ holds, and thus on whether
Peterson's protocol satisfies \hyperlink{ME}{\MEC[J]}, based on a classification as to
what could happen if two processes try to access the same register at the same time, one writing and one reading.

\section{Verifications of Starvation-Freedom Merely Assuming Progress}\label{sec:EG}

Above, I argued that in any formalisation $P$ of Peterson's algorithm that is consistent with the LTS of 
Figure~\ref{Peterson}---including any speed-independent formalisation---one has
$P\not\models_B^{\it Pr} {\bf G}(\lni \Rightarrow {\bf F}\ec)$, that is,
Requirement \hyperlink{ME}{\MEC[Pr]} does not hold, or, the formalisation $P$ does
not satisfy starvation-freedom when merely assuming progress. The path $\pi_P$ from \Sec{Peterson}
constitutes a counterexample. In Part~\ref{timeouts} of this paper I will bypass this verdict
by proposing a formalisation of Peterson's algorithm that is not consistent with the LTS of Figure~\ref{Peterson}.

In \cite{Walker89}, Peterson's algorithm was formalised in a way that is consistent with
Figure~\ref{Peterson}, yet starvation-freedom was proven, even by automatic means,
while assuming no more than progress.
The contradiction with the above is only apparent, because the starvation-freedom property obtained
by \cite{Walker89} can be stated as \plat{$P \not\models_B^{\it Pr} {\bf G}(\ell_2 \Rightarrow {\bf F}\ecA)$},
and symmetrically $P \not\models_B^{\it Pr} {\bf G}(m_2 \Rightarrow {\bf F}\ecB)$,
that is, once a process expresses its intention to enter its critical section, by executing
Instruction $\ell_2$ (or $m_2$), then it will surely reach its critical section.%
\footnote{In the literature \cite{SGG12,Raynal13} it is frequently claimed that once Process $\procA$ expresses
  its intention to enter its critical section, by executing Instruction $\ell_2$,
  Process $\procB$ will enter the critical section at most once before $\procA$ does (and the same
  with $\procA$ and $\procB$ reversed, of course).
  A counterexample is provided by the path that turns right as much as possible from the state
  \raisebox{1pt}[0pt][0pt]{\scriptsize\lm 35A}. Here Process $\procA$ has just executed $\ell_2$, but Process $\procB$
  enters the critical section twice before $\procA$ does. (By the definitions in \cite{Raynal13},
  in state \raisebox{1pt}[0pt][0pt]{\scriptsize\lm 35A} the two processes are \emph{competing}, and $\procA$ loses the
  competition twice.)
}
The\linebreak same can be said for the verification of starvation-freedom of Peterson's algorithm in
\cite{Valmari96}, and, in essence, for the verification of starvation-freedom of Dekker's algorithm in
\cite{EsparzaBruns96}.

It can be (and has been) debated which is the better formalisation of starvation-freedom.
Since the greatest hurdle in the protocol is Instruction $\ell_2$, it seems unfair to say that a
process is only deemed interested in entering the critical section when this hurdle is taken.%
\footnote{Suppose I promise all of you \$1000, if only you express interest in getting it, by
  filling in Form 316F. Then I implement this promise by making it impossible to fill in Form 316F.
  In that case you might argue against the claim that the Requirement
  $~~{\bf G}(\mbox{``has interest''} \Rightarrow {\bf F}\mbox{``receive \$1000''})~~$ can be verified.
  The argument would be that filling in Form 316F is an inadequate formalisation of having interest
  in getting this prize.}
Another tactic is to consider a form of Peterson's protocol in which Processes $\procA$ and
$\procB$ are merely interfaces that interact with the real clients that compete for access to the
critical section by synchronising on the actions
$\lnA$,
$\lnB$,
$\ecA$,
$\ecB$,
$\lcA$,
$\lcB$,
$\enA$ and
$\enB$.
In such a case, the real intent of Client $\procA$ to enter its critical section is expressed in a
message to Interface $\procA$ that occurs strictly before Interface $\procA$ executes Instruction $\ell_2$.

Such arguments are less necessary now that I have formalised \hyperlink{ME}{\MEF[]} as an
additional requirement. Suppose that one would redefine the action $\lnA$ that appears in the
antecedent of the starvation-freedom requirement \hyperlink{ME}{\MEC[]} as the occurrence of
instruction $\color{blue} \ell_2$ from Peterson's algorithm, thus turning \hyperlink{ME}{\MEC[]} for
Process $\procA$ into $P \models_B^{CC} {\bf G}({\color{blue}\ell_2} \Rightarrow {\bf F}\ecA)$,
then \hyperlink{ME}{\MEF[]} becomes
\(P \models_{B\setminus\{{\color{blue}\ell_2}\}}^{CC}
  {\bf F}{\color{blue}\ell_2} \wedge {\bf G}(\enA \Rightarrow {\bf F}{\color{blue}\ell_2})\).
Either way, it is required that $\ell_1$ will be followed by $\ell_2$; this
requirement is either part of \hyperlink{ME}{\MEF[]} or of \hyperlink{ME}{\MEC[]}.
In case it takes weak fairness to perform Instruction $\ell_2$, either \hyperlink{ME}{\MEC[]}
or \hyperlink{ME}{\MEF[]} will hold only for $CC={\it WF}$, and depending on one's modelling
preferences one can choose which one.

In Sections~\ref{sec:Peterson Petri} and~\ref{sec:Peterson CCS} I will show that in
formalisations of Peterson's algorithm as a Petri net or a CCS expression, the path $\pi_P$ from
\Sec{Peterson} is just, implying that it takes weak fairness to perform Instruction
$\ell_2$. This implies that those formulations can in terms of Figure~\ref{hierarchy} be situated
either at the coordinates $({\it WF},{\it Pr})$ or $({\it J},{\it WF})$, depending on one's
preferred modelling of intent to enter the critical section. Either way, such a rendering of
Peterson scores no better than the (encapsulated) gatekeeper.

\section{Modelling Peterson's Protocol as a Petri Net}\label{sec:Peterson Petri}

Figure~\ref{PetersonPN} shows a rendering of Peterson's protocol as a Petri net.
There is one place for each of the local states $\ell_1$--$\ell_8$ and $m_1$--$m_8$ of Processes
$\procA$ and $\procB$ and two for each of the Boolean variables $\rA$, $\rB$ and $\tu$.
There is one transition for each of the instructions $\ell_1$--$\ell_8$ and $m_1$--$m_8$,
except that $\ell_3$, $m_3$, $\ell_4$ and $m_4$ yield two transitions each.
For $\ell_3$ and $m_3$ this is to deal with each of the possible values of $\tu$ before the
assignment is executed; for $\ell_4$ the two transitions are for reading that  $\tu=A$, and for
reading that $\rB=\fa$.

\begin{figure}[ht]
\input{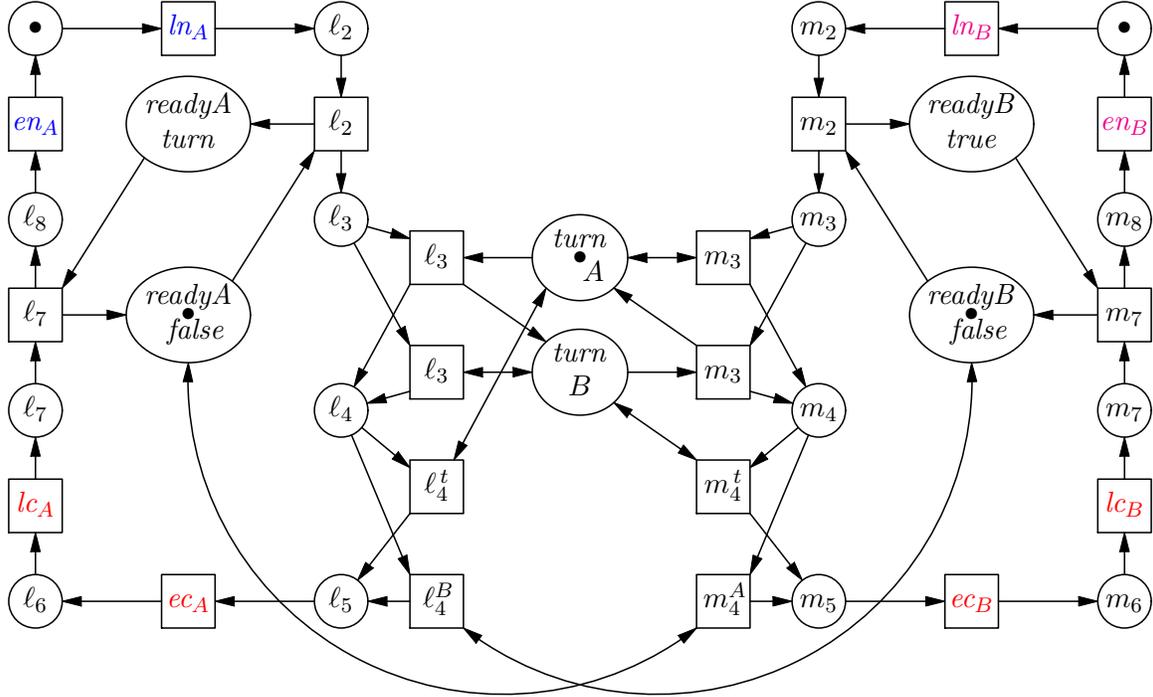}
\centerline{\box\graph}
\caption{\it Petri net representation of Peterson's mutual exclusion algorithm}
\label{PetersonPN}
\end{figure}

The transitions $\ell_2$ and $m_4^A$ are not concurrent, because they compete for the same token on
the place $\rA=\fa$. For this reason the run $\pi_P$, in which one token is stuck in place $\ell_2$
while the other four tokens keep moving around---with $m_4^A$ executed infinitely many times---is just.
Consequently, this Petri net does not satisfy requirement \hyperlink{ME}{\MEC[J]}.

\section{Modelling Peterson's Protocol in CCS}\label{sec:Peterson CCS}

In order to model Peterson's mutual exclusion protocol in CCS, I use the names
$\enA$,
$\enB$,
$\lnA$,
$\lnB$,
$\ecA$,
$\ecB$,
$\lcA$ and
$\lcB$
for Processes {\procA} and {\procB} entering and leaving their (non)critical section.
Following \cite{EPTCS255.2}, I describe a simple shared memory system in CCS, using the name
\plat{$\ass{x}{v}$} for the assignment of value $v$ to the variable $x$, and 
$\noti{x}{v}$ for noticing or notifying that the variable $x$ has the value $v$.
The action $\overline{\ass{x}{v}}$ communicates the assignment $x:=v$ to the shared memory,
whereas $\ass{x}{v}$ is the action of the shared memory of accepting this communication.
Likewise, $\overline{\noti{x}{v}}$ is a notification by the shared memory that $x$ equals $v$; it synchronises
with the complementary action $\noti{x}{v}$ of noticing that $x=v$.

The Processes {\procA} and {\procB} can be modelled~as
\[\begin{array}{@{}l@{~\stackrel{{\it def}}{=}~}l@{}}
\procA  & \lnA\mathbin.\overline{\ass{\rA}{\tr}}\mathbin.\overline{\ass{\tu}{B}}\mathbin.
  (\noti{\rB}{\fa}\ +\ \noti{\tu}{A})\mathbin.\ecA\mathbin.
   \lcA\mathbin.\overline{\ass{\rA}{\fa}}\mathbin.\enA\mathbin.\procA\ ,\\
\procB  & \lnB\mathbin.\overline{\ass{\rB}{\tr}}\mathbin.\overline{\ass{\tu}{A}}\mathbin.
  (\noti{\rA}{\fa}\ +\ \noti{\tu}{B})\mathbin.\ecB\mathbin.
   \lcB\mathbin.\overline{\ass{\rB}{\fa}}\mathbin.\enB\mathbin.\procB\ ,
\end{array}\]

\noindent where $(a+b).P$ is a shorthand for $a.P+b.P$.
This CCS rendering naturally captures the \textbf{await} statement, requiring Process {\procA} to wait
at Instruction $\ell_4$ until it can read that $\rB = \fa$ or $\tu = A$.
We use two agent identifiers for each Boolean variable $x$, one for each value:
\[\begin{array}{r@{\ }c@{\ }l}
  x^\tr &\stackrel{\it def}{=}& \ass{x}{\tr}\mathbin.x^\tr \ +\ \ass{x}{\fa}\mathbin.x^\fa \ +\ \overline{\noti{x}{\tr}}\mathbin.x^\tr\ ,\\
  x^\fa &\stackrel{\it def}{=}& \ass{x}{\tr}\mathbin.x^\tr \ +\  \ass{x}{\fa}\mathbin.x^\fa \ +\  \overline{\noti{x}{\fa}}\mathbin.x^\fa\ .\end{array}\]
Likewise we have, for instance,
\[\Tu{A}~\stackrel{{\it def}}{=}~\ass{\tu}{A}\mathbin.\Tu{A}\ +
\ass{\tu}{B}\mathbin.\Tu{B} \ +\ \overline{\noti{\tu}{A}}\mathbin.\Tu{A}.
\]
Peterson's mutual exclusion algorithm (PME) is the parallel composition of all these processes,
restricting all the communications
\[ (\procA \mid \procB \mid \RA{\fa} \mid \RB{\fa} \mid \Tu{A}) \backslash L\ ,\]
where $L$ is the set of all names except
$\enA$,
$\enB$,
$\lnA$,
$\lnB$,
$\ecA$,
$\ecB$,
$\lcA$ and
$\lcB$ \cite{EPTCS255.2}.

The LTS of the above CCS expression PME is exactly as displayed in Figure~\ref{Peterson}.
By the interpretation of CCS as an LTSC, defined in \Sec{LTSC}, one obtains $\ell_2 \naconc m_4$,
where I use the names $\ell_2$ and $m_4$ of the underlying instructions from
Figure~\ref{fig:peterson} to denote the two outgoing $\tau$-transitions from the state
\plat{\scriptsize\lm 24A}. In fact, this could have been concluded without studying the above CCS
rendering of Peterson's protocol, as \Sec{LTSC} remarks that in the LTSC of CCS the relation
$\aconc$ is symmetric, while \Sec{Peterson} concludes that $m_4 \naconc \ell_2$. Thus, the CCS
rendering of Peterson's algorithm does not satisfy the correctness criterion \hyperlink{ME}{\MEC[J]}.

\section{What Happens if Processes Try to Read and Write Simultaneously}\label{sec:atomicity}

A program instruction like $\ell_2$, $\ell_3$, $\ell_4$ or $\ell_7$ that reads or writes a value
$\tr$, $\fa$, $A$ or $B$ from or to a register $\rA$, $\rB$ or $\tu$ cannot be executed instantaneously,
and is thus assumed to occur during an interval of real time. Hence it may happen that Processes
$\procA$ and $\procB$ try to access the same register during overlapping periods of time.
In such a case is it common to assume that the register is \emph{safe}, meaning that
\begin{center}\begin{minipage}{5.3in}\vspace{3pt}\it
A read operation not concurrent with any write operation
returns the value written by the latest write operation,
provided the last two write operations did not overlap.\vspace{3pt}
\end{minipage}\end{center}
This assumption stems from \cite{Lam86b}, although overlapping writes were not considered there.
``No assumption is made about the value obtained by a read that overlaps a write, except that it must obtain one
  of the possible values of the register.'' \cite{Lam86b}\footnote{This is not really a restriction,
    for one can always follow a read action $r:=x$ of a variable $x$, where $r$ is a local register,
    by a default assignment $r:= v_0$ in case the read yields a value that is out of range.}\;
  In the same spirit, one may assume that two overlapping writes may put any of its possible values
  in the register, in the sense that subsequent reads will return that value.
I will assume safety in this sense of the Boolean registers $\rA$, $\rB$ and $\tu$.
In an architecture where safe registers are not available, and cannot be simulated, implementing a
correct mutual exclusion protocol appears to be hopeless.\footnote{This statement can be
  strengthened by considering its contrapositive: if one has a correct mutual exclusion protocol,
  safe registers can be simulated. Namely, when confronted with a weak memory, where it takes time
  for writes to propagate after the instruction has been encountered, one could put each read and
  write instruction in a critical section, together with an appropriate memory barrier or fence
  instruction, to ensure propagation of the write before any read occurs. This ought to yield safety.}

For the fate of Peterson's algorithm it matters what happens if one process wants to start writing
a register when another is busy reading it. There appear to be only three (or five) possibilities.
\begin{enumerate}
\item The register cannot handle a read and a write at the same time; as the read started first, the
  writing process will need to await the termination of the read action before the write can commence.
  \label{atomicity}
\item The register cannot handle a read and a write at the same time, but the write takes precedence
  and occurs when scheduled. This aborts the read action, which can restart after the write has terminated.
  \label{nonblocking read}
\item The read and write proceed as scheduled, thus overlapping in time.
  \label{overlap}
\end{enumerate}
A fourth possibility could be that reads and writes are instantaneous after all, so that overlap can
be avoided without postponing either. I deem this unrealistic and do not consider this option here.
A potential fifth possibility could be a variation of (\ref{nonblocking read}), in which the read merely is interrupted, and resumes
after the write is finished. In that case, as with option (\ref{overlap}), it seems reasonable to assume that the
read can return any value of the register.

In Dijkstra's original formulation of the mutual exclusion problem \cite{Dijk65}, possibility (\ref{atomicity})
above---\emph{atomicity}---was assumed---see the quote in \Sec{history}. Lamport, on the other hand,
assumes (\ref{overlap}) \cite{bakery}.
On his webpage \href{https://lamport.azurewebsites.net/pubs/pubs.html#bakery}{\small https://lamport.azurewebsites.net/pubs/pubs.html\#bakery}
Lamport takes the position that assuming atomicity ``cannot really be said to solve the mutual exclusion problem'',
as it assumes ``lower-level mutual exclusion''. As possibility (\ref{overlap}) adds the complication of
arbitrary register values returned by reads that overlap a write, he implicitly takes the position
that solving the mutual exclusion problem under assumption (\ref{overlap}) is more challenging; and this is
exactly what his bakery algorithm does.

Here I argue that atomicity is the more challenging assumption. The objection
that assuming reads and writes to be atomic amounts to assuming ``lower-level mutual exclusion'' is
based on the idea that securing the mutex property of a mutual exclusion protocol is the main
challenge. However, the real challenge is doing this in a starvation-free way, and this feature is
not inherited from the lower level.  By assuming atomicity one obtains $\ell_2 \naconc m_4$, that is,
transition $\ell_2$ is affected by $m_4$, and consequently Peterson's algorithm fails the
correctness requirement \hyperlink{ME}{\MEC[J]}.  In \Sec{impossible} below I will argue that this is not
merely a result of the way I choose to model things in this paper, but actual evidence of the
incorrectness of Peterson's algorithm, or any other mutual exclusion protocol for that matter,
provided one consistently assumes atomicity, as well as speed independence.

Assuming (\ref{nonblocking read}) instead yields $\ell_2 \aconc m_4$, that is, the write
$\ell_2$ is in no way affected by the read $m_4$. This means that nothing can prevent Process $\procA$
from executing $\ell_2$. This makes Peterson's algorithm correct, in the sense that it satisfies
\hyperlink{ME}{\MEC[J]}.

Assuming (\ref{overlap}) also yields $\ell_2 \aconc m_4$. Also this would make the algorithm correct,
provided that it is robust against the effects of overlapping reads and writes.
For Peterson's algorithm this is not the case; overlapping reads and writes can cause a violation
of the mutex property \hyperlink{ME}{\MEB}---see \Sec{Peterson overlap}.
However, various other mutual exclusion protocols, including the ones of \cite{bakery} and \cite{Ar11},
are robust against the effects of overlapping reads and writes and do satisfy
\hyperlink{ME}{\MEC[J]} when assuming  (\ref{overlap}).\footnote{Szyma\'nski's algorithm~\cite{Szy88}
was designed specifically for this robustness, but fails to achieve it \cite{LS21}.}

In CCS, and in Petri nets, $\aconc$ is symmetric, and one has $\ell \nconc m$ whenever $\ell$ and
$m$ are read or write instructions on the same register, at least when employing the concurrency
relation $\conc$ of \Sec{LTSC}. This amounts to assuming atomicity.

\section{Is Peterson's Protocol Resistant Against Overlapping Reads and Writes?}\label{sec:Peterson overlap}

Those mutual exclusion protocols that were designed to be robust under overlapping reads and writes,
avoid overlapping writes altogether, either by making sure that each variable can be written by only
one process (although it can be read by others) \cite{bakery,Szy88}, or by putting writes to the
same variable right before or after \cite{Ar11} the critical section, within the part of a process'
cycle that is made mutually exclusive.
Any protocol that doesn't take this precaution, including Peterson's, is regarded with suspicion by
those that make an effort to avoid ill effects due to reads overlapping with writes.
Nevertheless, until May 2021 no examples were known (to me at least) that overlapping reads and
writes actually cause any problem for Peterson's protocol. I personally believed that it was robust
under overlap, reasoning as follows [unpublished notes].
{\addtolength\leftmargini{-0.1in}
\begin{quote}\it 
Two overlapping write actions to the same register may produce any value of that register.
In Peterson's algorithm, the only register that can be written by both processes contains the
variable $\tu$. It is a Boolean register, whose values are $A$ and $B$.
The only write actions to this register are $\ell_3$ and $m_3$. When these overlap in time, any of
the register values, that is $A$ or $B$, may result.
However, $\ell_3$ tries to write the value $B$, and $m_3$ the value $A$. So if the result of a
simultaneous write is $A$, one can just as well assume that $\ell_3$ occurred before $m_3$, and if it
is $B$, that $m_3$ occurred before $\ell_3$. Thus the effects of overlapping writes are no
different than those of atomic writes, and hence harmless.

Peterson's algorithm has six cases of a read overlapping with a write, and thanks to symmetry it
suffices to study three of them. First consider the overlap of the write $\ell_2$ with the read $m_4$.
Here the overlapping read can yield any register value, that is, $\tr$ or $\fa$.
One should not ignore the possibility that Process $\procB$, while cycling around, performs
  multiple $m_4$-reads of $\rA$ that may return any sequence of $\tr$ and $\fa$ during a single
  write action $\ell_2$ of Process $\procA$.
However, any read by Process $\procB$ that returns $\rA=\tr$ does not help to pass the
{\bf await}-statement in $m_4$, and is equivalent to no read of $\rA$ being carried out.
So all reads that matter return $\rA=\fa$, and can just as well be thought of as occurring prior to $\ell_2$.
A similar argument applies to the overlap of the write $\ell_3$ with the read $m_4$, and of the
write $\ell_7$ with the read $m_4$; also here any resulting behaviour can already be generated without assuming an overlap.
\end{quote}}
\noindent
I leave it as a puzzle to the reader to find the fallacy in this argument.\footnote{On request of
  2 referees I divulge this fallacy at \url{http://theory.stanford.edu/~rvg/PMEfallacy.html}---not here
  to avoid spoilers. Alternatively, one can find out by comparing with the counterexample in~\cite{Spr21}.}
A run of Peterson's algorithm that violates the mutex property \hyperlink{ME}{\MEB}
was found in 2021 by means of the model checker mCRL2 by Myrthe Spronck \cite{Spr21}.
This involved implementing safe registers, as described  in \Sec{atomicity}, as mCRL2 processes.

\section{The Impossibility of Mutual Exclusion\texorpdfstring{\\}{} when Assuming Atomicity and Speed Independence}\label{sec:impossible}

In this section I argue that when assuming atomicity and speed independence,
Peterson's algorithm is not correct, in the sense that it fails the requirement of starvation-freedom.
The argument is that the the run corresponding with the path $\pi_P$ from \Sec{Peterson} can actually occur.
To see why, let me illustrate the form of speed independence needed for this argument by means of a
simple example in CCS\@.
\begin{example}{speed independence}
  Consider the process $(X|Y){\setminus}\{c\}$ with $X \stackrel{{\it def}}{=} a.{\bf 0} + \bar{c}.X$ and
  $Y \stackrel{{\it def}}{=} c.d.e.Y$, where none of the actions $a,d,e$ is blocked by the
  environment, that is, the environment is continuously eager to partake in these actions.
  The question is whether action $a$ is guaranteed to occur.

  Here one may argue that when Process $X$ reaches the state $a.{\bf 0} + \bar{c}.X$
  at a time when its environment, which is the Process $Y$, is not yet ready to engage in the
  synchronisation on $c$, it will proceed by executing $a$.
  If, on the other hand, both options $a$ and $\bar c$ are available, it cannot be excluded that
  $\bar c$ is
  chosen, as no priority mechanism is at work here.

  Now after execution of $\bar{c}|c$, Process $X$ again faces a choice between $a$ and $\bar c$, but
  Process $Y$ first has to execute the actions $d$ and $e$. During the time that $Y$ is busy with
  $d$ and $e$, for Process $X$ it feels like $\bar c$ is blocked, and it will do $a$.

  A strict interpretation of speed independence, which I employ here, says that actions $d$ and
  $e$ may be executed so fast that from the perspective of Process $X$ one can just as well assume
  that no actions $d$ and $e$ were scheduled at all.\vspace{1pt} Thus the answer to the above question will not
  change upon replacing Process $Y$ by \plat{$Y' \stackrel{{\it def}}{=} c.Y'$}.   However, for the
  process $(X|Y'){\setminus}\{c\}$ there is really no reason to assume that $a$ will ever occur.

  In CCS and related process algebras this form of speed independence is built in: one sees $d$ and
  $e$ as $\tau$-transitions, as for the purpose of answering the above question they can be regarded
  as unobservable, and then applies the law $a.\tau.P = a.P$ \cite{Mi90ccs}.
\end{example}

A run of Peterson's protocol, or any implementation thereof in a setting where atomicity and the above
form of speed independence may be assumed, can visit the state \plat{\scriptsize\lm 24A}
in which Process $\procA$ wants to write on register $\rA$, and that register has a choice between
being written or being read first, by Process $\procB$. One cannot exclude that the read action wins this
race, which allows Process $\procB$ to enter the critical section. During the execution of $m_4$,
reading $\rA$, Process $\procA$ has to wait. Afterwards, Process $\procB$ might execute actions
$m_5$--$m_3$ so fast that from the perspective of Process $\procA$ no time elapses at all. This
brings us again in State \plat{\scriptsize\lm 24A} where the same race between $\ell_2$ and $m_4$
occurs. Again it could be won by $m_4$. This behaviour can continue indefinitely.

The above argument stems from \cite{vG18c}, where it was made not just for Peterson's protocol,
but for all conceivable mutual exclusion algorithms. These include 
Lamport's bakery algorithm \cite{bakery}, Szyma\'nski's algorithm \cite{Szy88},
and the round-robin scheduler.
To obtain a correct mutual exclusion algorithm, one has to either employ register hardware in which
the assumption of atomicity---possibility (\ref{atomicity}) in \Sec{atomicity}---is not valid, or make the protocol
speed-dependent. The first option was already explored in \Sec{atomicity}; as mentioned there,
using hardware that works according to possibilities (\ref{nonblocking read}) or (\ref{overlap}) solves the problem.
The second option will be explored in Part~\ref{timeouts} of this paper.

\section{Variations of Petri nets and CCS with Non-blocking Reading}\label{sec:Peterson CCSS}

To escape from the failure of requirement \hyperlink{ME}{\MEC[J]} for Peterson's
protocol due to the assumption (\ref{atomicity}) of atomicity as well as speed independence, one can instead
assume possibility (\ref{nonblocking read}) from \Sec{atomicity} while keeping speed independence.
I refer to such an option as \emph{non-blocking reading} \cite{CDV09}, as a write cannot be postponed by a
read action on the same register. This yields $\ell_2 \aconc m_4$, thereby saving \hyperlink{ME}{\MEC[J]}.
Here I review how this can be modelled in variations of Petri nets and CCS\@.

\subsection{Read Arcs}\label{sec:read arcs}

A Petri net with \emph{read arcs} is a tuple $N=(S,T,F,R,M_0,\ell)$ as in \df{net}, but enriched with
an object $R\!: S\times T \rightarrow \IN$, such that $F(s,t)\mathbin>0 \Rightarrow R(s,t)\mathbin=0$.
An element $(s,t)$ of the multiset $R$ is called a \emph{read arc}.
Read arcs are drawn as lines without\linebreak[2] arrow heads.
For $t\mathbin\in T$, the multiset $\hat t\!: S \rightarrow \IN$ is given by $\hat t(s)\mathbin= R(s,t)$ for all $s\mathbin\in S$.
Transition $t$ is \emph{enabled} under $M$ iff $\precond{t}+ \hat{t} \leq M$.
In that case \plat{$M \goesto{t}_N M'$}, where $M' = (M - \precond{t}) + \postcond{t}$.

Thus, for a transition to be enabled, there need to be enough tokens at the other end of each of its read arcs.
However, these tokens are not consumed by the firing. Clearly, the transition relation $\goesto{t}_N$
between the markings of a net is unaffected when replacing each read arc $(s,t)$ by a loop between
$s$ and $t$; that is, by dropping $R$ and using the flow relation $F_R := F + R + R^{-1}$.
This does not apply to the concurrency relation between transitions.

The definition of a structural conflict net can be extended to Petri nets with read arcs by requiring,
for all $t, u\in T$ and all reachable markings $M$, that
\begin{center}
  if $(\precond{t}+\hat{t}\,) \cap \precond{u} \neq \emptyset$ then
  $\precond{t} + \hat{t} + \precond{u} \not\leq M$ or $\hat u \not\leq M$.
\end{center}
For such nets, one defines $t \aconc u$ iff $(\precond{t}+\hat{t}\,) \cap \precond{u} = \emptyset$.
This says that a transition $u$ affects a transition $t$ iff $u$ consumes a token that is needed to
enable $t$. The condition for a structural conflict net guarantees exactly that if $t \naconc u$ and
$u$ is enabled under a reachable marking $M$, then $t$ is not enabled under the marking $M-\precond{u}$.

As pointed out by Vogler in \cite{Vogler02}, the addition of read arcs makes Petri nets sufficiently
expressive to model mutual exclusion. When changing the loops
\expandafter\ifx\csname graph\endcsname\relax
   \csname newbox\expandafter\endcsname\csname graph\endcsname
\fi
\ifx\graphtemp\undefined
  \csname newdimen\endcsname\graphtemp
\fi
\expandafter\setbox\csname graph\endcsname
 =\vtop{\vskip 0pt\hbox{%
\pdfliteral{
q [] 0 d 1 J 1 j
0.576 w
0.576 w
57.6 -6.84 m
57.6 -10.617628 44.705801 -13.68 28.8 -13.68 c
12.894199 -13.68 0 -10.617628 0 -6.84 c
0 -3.062372 12.894199 0 28.8 0 c
44.705801 0 57.6 -3.062372 57.6 -6.84 c
S
Q
}%
    \graphtemp=.5ex
    \advance\graphtemp by 0.095in
    \rlap{\kern 0.400in\lower\graphtemp\hbox to 0pt{\hss \tiny $\rA=\fa$\hss}}%
\pdfliteral{
q [] 0 d 1 J 1 j
0.576 w
90.72 -13.68 m
110.88 -13.68 l
110.88 0 l
90.72 0 l
90.72 -13.68 l
S
Q
}%
    \graphtemp=.5ex
    \advance\graphtemp by 0.095in
    \rlap{\kern 1.400in\lower\graphtemp\hbox to 0pt{\hss $m_4^{\scriptscriptstyle A\!}$\hss}}%
\pdfliteral{
q [] 0 d 1 J 1 j
0.576 w
0.072 w
q 0 g
64.8 -8.64 m
57.6 -6.84 l
64.8 -5.04 l
64.8 -8.64 l
B Q
q 0 g
83.52 -5.04 m
90.72 -6.84 l
83.52 -8.64 l
83.52 -5.04 l
B Q
0.576 w
64.8 -6.84 m
83.52 -6.84 l
S
Q
}%
    \hbox{\vrule depth0.190in width0pt height 0pt}%
    \kern 1.540in
  }%
}%
\raisebox{10pt}[0pt][0pt]{\box\graph}
and
\expandafter\ifx\csname graph\endcsname\relax
   \csname newbox\expandafter\endcsname\csname graph\endcsname
\fi
\ifx\graphtemp\undefined
  \csname newdimen\endcsname\graphtemp
\fi
\expandafter\setbox\csname graph\endcsname
 =\vtop{\vskip 0pt\hbox{%
\pdfliteral{
q [] 0 d 1 J 1 j
0.576 w
0.576 w
110.88 -7.2 m
110.88 -10.977628 97.985801 -14.04 82.08 -14.04 c
66.174199 -14.04 53.28 -10.977628 53.28 -7.2 c
53.28 -3.422372 66.174199 -0.36 82.08 -0.36 c
97.985801 -0.36 110.88 -3.422372 110.88 -7.2 c
S
Q
}%
    \graphtemp=.5ex
    \advance\graphtemp by 0.100in
    \rlap{\kern 1.140in\lower\graphtemp\hbox to 0pt{\hss \tiny $\rB=\fa$\hss}}%
\pdfliteral{
q [] 0 d 1 J 1 j
0.576 w
0 -14.4 m
20.16 -14.4 l
20.16 0 l
0 0 l
0 -14.4 l
S
Q
}%
    \graphtemp=.5ex
    \advance\graphtemp by 0.100in
    \rlap{\kern 0.140in\lower\graphtemp\hbox to 0pt{\hss $\ell_4^{\scriptscriptstyle B\!}$\hss}}%
\pdfliteral{
q [] 0 d 1 J 1 j
0.576 w
0.072 w
q 0 g
46.08 -5.4 m
53.28 -7.2 l
46.08 -9 l
46.08 -5.4 l
B Q
q 0 g
27.36 -9 m
20.16 -7.2 l
27.36 -5.4 l
27.36 -9 l
B Q
0.576 w
46.08 -7.2 m
27.36 -7.2 l
S
Q
}%
    \hbox{\vrule depth0.200in width0pt height 0pt}%
    \kern 1.540in
  }%
}%
\raisebox{10pt}[0pt][0pt]{\box\graph}
in Figure~\ref{PetersonPN} into read arcs, one obtains $\ell_2 \aconc m_4^{\scriptscriptstyle B\!}$ and
$m_2 \aconc \ell_4^{\scriptscriptstyle A\!}$.
So the resulting net satisfies \hyperlink{ME1}{\MEA}--\hyperlink{ME}{\MEF[]} with $CC=J$, and correctly solves
the mutual exclusion problem. Two different solutions as Petri nets with read arcs are given in
\cite[Figures 8 and 9]{Vogler02}, the one in Figure~9 being a round-robin scheduler.\vspace{-1pt}

\subsection{Broadcast Communication}\label{sec:broadcast}

A process algebraic solution was presented in \cite[Section~5]{GH15a}, using an extension ABC
of CCS with broadcast communication.
The most obvious distinction between broadcast communication and CCS-style handshaking
communication is that the former allows multiple recipients of a message and the latter exactly one.
This feature of broadcast communication was not exploited in the solution of \cite{GH15a}.
A more subtle feature of broadcast communication is that the transmission of a broadcast occurs
regardless of whether anyone is listening. Thus a broadcast can be used to model a write that cannot
be blocked or postponed because the receiving register is busy being read. This yields an asymmetric
concurrency relation, in which a broadcast transition is not affected by a competing transition from
a receiver of the broadcast, whereas the competing transition \emph{is} affected by the broadcast.

Nevertheless, the semantics of ABC says that the broadcast \emph{will} be received by all processes that
are in a state with an outgoing receive transition. This allows one to make receipt of a broadcast
reliable, by giving the receiving process an outgoing receive action in each of its states. This
feature of the semantics of ABC, which is essential for modelling mutual exclusion, is somewhat
debatable, as one could argue that a process that is engaged in its own broadcast transmission
through a transition between two states that each have an outgoing receive transition, is temporary
too busy to hear an incoming message.

The solution proposed in \cite{GH15a} is not a mutual exclusion protocol, but a fair
scheduler, which can be converted into a mutual exclusion protocol in the manner of \Sec{FS}.
In fact, it is a variant of the encapsulated gatekeeper, where broadcast communication is used to
merely assume justness for all requirements. In the same spirit one can model Peterson's protocol in
ABC in such a way that \hyperlink{ME}{\MEC[J]} is satisfied. It suffers to
interpret $\overline{\ass{x}{v}}$ and $\ass{x}{v}$ as broadcast transmit and receive actions.
The LTSC semantics of ABC \cite{vG19} then yields $\ell_2 \aconc m_4$.

\subsection{Signals}\label{sec:signals}

A different process algebraic solution, arguably less debatable, was proposed in \cite{CDV09,EPTCS255.2}.
In \cite{CDV09}, processes $P$ are equipped with the possibility to perform actions that do not change
their state, and that, in synchronisation with another parallel process $Q$, describe information on
the state of $P$ that is read by $Q$ in a non-blocking way. In \cite{EPTCS255.2}, following \cite{Be88b},
such actions are called \emph{signals}.
The communication between, say, a traffic light, emitting the signal \emph{red},
and a car, coming to a halt, is binary, just like handshaking communication in CCS\@. The difference
is that the concurrency relation between transitions again becomes asymmetric, because the car is
affected by the traffic light, but the traffic light is not affected by the car. A car stopping for
a red light in no way blocks or postpones the action of the traffic light of turning green.

In \cite{Be88b,EPTCS255.2} the emission of a signal is modelled as a predicate on states, whereas
the receipt of such an emission is modelled as a transition. In \cite{CDV09,Bou18} the emission of a
signal is modelled as a transition instead. An advantage of the former approach is that it stresses
the semantic difference between signal emissions and handshaking actions, and emphasises that signal
emissions cannot possibly cause a state-change. An advantage of the latter approach is that
communication via signals can be treated in the same way as a CCS handshaking communication, thereby
simplifying the process algebra.  Technically, the two approaches are equivalent.

{\makeatletter
\let\par\@@par
\par\parshape0
\everypar{}\begin{wrapfigure}[6]{r}{0.5\textwidth}
 \vspace{-2.5ex}%
  \input{Boolean}%
  \centerline{\raisebox{1ex}{\box\graph}}%
   \vspace{-15ex}
  \end{wrapfigure}
In CCS with signals \cite{EPTCS255.2}, modelling signal emissions as transitions \cite{Bou18,vG19},
a Boolean variable $x$, such as $\rA$ in Peterson's protocol, has the exact same LTS as in the CCS
model of \Sec{Peterson CCS}:
However, this time the notifications $\overline{\noti{x}{v}}$ are signals emissions rather than
handshaking actions. The definition of the concurrency relation $\aconc$ on CCS transitions from
\Sec{LTSC} is in \cite{vG19} extended to CCS with signals in such a way that with the above way of
modelling variables, and the same processes $\procA$ and $\procB$ as in \Sec{Peterson CCS},
one obtains $\ell_2 \aconc m_4$. Here the action $m_4$ of reading the register is like a car reading a
traffic light, and does not inhibit the write action $\ell_2$ on the same register.
\par}

This shows that Peterson's algorithm can be correctly modelled in CCS with signals.
Earlier, Dekker's algorithm was correctly modelled in the process algebra PAFAS with non-blocking
reading \cite{CDV09}, and same was done for Peterson's algorithm in \cite{EPTCS54.4}.
The latter paper also points out that non-blocking reading is not strong enough an assumption to
obtain starvation-freedom, or even deadlock-freedom as defined in \Sec{history}, for Knuth's
algorithm~\cite{Knuth66}; this requires a fairness assumption.

\subsection{Modelling Non-blocking Reading in CCS}\label{sec:Bouwman}

The rendering in \cite{EPTCS255.2} of Peterson's protocol employs an extension of CCS with signals
that arguably strictly increases the expressiveness of the language. Namely in CCS with signals one obtains
an asymmetric concurrency relation $\aconc$, which turned out to be essential for the satisfactory
modelling of Peterson; restricted to proper CCS this relation is symmetric.  Bouwman
\cite{Bou18} proposes the same modelling of Peterson's protocol, but entirely within the confines of
the existing language CCS, namely by simply declaring some of the action names of CCS to be signals.
As it is essential that the emission of a signal never causes a state-change, care has to be taken
to only use CCS-expressions in which the actions that are chosen to be seen as signal emissions
occur in self-loops only. When doing this, creating a satisfactory rendering of Peterson's
protocol within CCS is unproblematic \cite{Bou18}. Neither \cite{EPTCS255.2} nor \cite{Bou18}
mention the concurrency relation $\aconc$ at all, and use a coinductive definition of justness
instead. However, as shown in \cite{vG19}, the concept of justness common to \cite{EPTCS255.2} and \cite{Bou18} can
equivalently be obtained as in \Sec{completeness criteria} from an asymmetric concurrency relation $\aconc$.
Thus, by declaring certain CCS actions to be signals, Bouwman effectively changes the concurrency
relation between CCS transitions labelled with those actions.

In \cite{GH15b} it was stated that fair schedulers and mutual exclusion protocols cannot be rendered
correctly in CCS without imposing a fairness assumption. In making that statement, the (symmetric)
concurrency relation $\conc$ between CCS transitions defined in \Sec{LTSC}---or equivalently, the
resulting notion of justness---was seen as an integral part of the semantics of CCS\@.
In the early days of CCS, a lot of work has been done in formalising
notions of concurrency for CCS and related process algebras \cite{NPW81,GM84,BC87,Wi87a,GV87,DDM87,Ol87}.
All that work is consistent with the concurrency relation $\conc$ defined in \Sec{LTSC}.
Changing the concurrency relation, as implicitly proposed by Bouwman, alters the language CCS as
seen from the perspective of \cite{NPW81,GM84,BC87,Wi87a,GV87,DDM87,Ol87}. However, it is entirely
consistent with the interleaving semantics of CCS given by Milner \cite{Mi90ccs}.

\subsection{Modelling and Verification of Peterson's Algorithm with mCRL2}\label{sec:mCRL2}

ACP \cite{BW90,Fok00} and mCRL2 \cite{GM14} are CCS-like process algebras that fall under the scope of
the impossibility result of \cite{GH15b}. That is, when defining a concurrency relation on the
transitions of ACP or mCRL2 processes in the traditional way, consistent with the viewpoint of
\cite{NPW81,GM84,BC87,Wi87a,GV87,DDM87,Ol87}, and defining justness as in \Sec{completeness criteria}
in terms of this concurrency relation, Peterson's algorithm cannot be correctly rendered in ACP or
mCRL2 when merely assuming justness, i.e., without resorting to a fairness assumption.
Nevertheless, by the argument of \cite{Bou18}, Peterson's algorithm \emph{can} be correctly rendered
in these formalisms under the assumption of justness when using an alternative concurrency
relation, one obtained by treating certain actions as signals.

Using this insight, Bouwman, Luttik and Willemse \cite{BLW20} render Peterson's algorithm in an
instance of ACP or mCRL2 that uses much more informative actions. This could be done without
adapting those languages in any way, simply by choosing appropriate actions.  Each action labelling
a transition contains additional information on the components of the system from which this
transition stems. This information is preserved under synchronisation, when a transition of a
parallel composition is built from transitions of each of the components.  The latter requires the
general communication format of ACP or mCRL2, as in CCS synchronisation merely result in
$\tau$-transitions. A price paid for this approach is that the resulting LTSs have much fewer
$\tau$-transitions, so that there are fewer opportunities for state-space reduction by abstraction
from internal activity.

In \Sec{translating} I showed how $B$-justness, and thereby the correctness criteria for mutual
exclusion protocols, can be expressed in standard LTL enriched with a number of atomic propositions,
such as ${\it en}^t$ and $\sharp t$. Bouwman, Luttik and Willemse \cite{BLW20} show not only that
the same can be done in the modal $\mu$-calculus, but also that all the necessary atomic
propositions can be expressed in terms of the carefully chosen actions that are used in modelling
the protocol. This made it possible to model the protocol in such  way that all correctness
requirements can be checked with the existing mCRL2 toolset \cite{BGKLNVWWW19}.

\addtocontents{toc}{\protect\vspace{-4pt}}
\part{A Speed-dependent Rendering of Peterson's Protocol}\label{timeouts}

In Part~\ref{impossibility results} I showed that when assuming atomicity and speed independence,
Peterson's algorithm does not have the correctness property \hyperlink{ME}{\MEC[J]}---and
in \cite{vG18c} I argued that the same can be said for any other mutual exclusion protocol.
That is, it satisfies starvation-freedom only under a weak fairness assumption, which in my opinion
is not warranted --- if it was, even the encapsulated gatekeeper of \Sec{interface} would be an
acceptable mutual exclusion protocol.

Thus, to obtain a correct version of Peterson's algorithm (or mutual exclusion in general)
one has to either drop the assumption of atomicity, or speed-independence.
The former possibility has been elaborated in Part~\ref{impossibility results};
\Sec{Peterson} showed then when assuming non-blocking reading (option (2) from \Sec{atomicity},
resulting in the verdict $\ell_2 \aconc m_4$) instead of atomicity, Peterson's algorithm is entirely correct.
\Sec{Peterson CCSS} recalled how to model this with process algebra or Petri nets.

The latter possibility will be elaborated here. I present a speed-dependent incarnation of
Peterson's protocol that satisfies all correctness requirements, even under the assumption of atomicity.
Moreover, in the model of \Sec{Peterson timeouts}, requirement \hyperlink{ME}{\MEC[J]} can be strengthened into
\hyperlink{ME}{\MEC[Pr]}, thereby attaining the best quality criteria in the
hierarchy of Figure~\ref{hierarchy}. As pointed out in Part~\ref{impossibility results}, this is not
possible when keeping speed independence, even when dropping atomicity.

The idea is extremely simple. As explained in \Sec{Peterson}, all that is needed to obtain
starvation-freedom is the certainty that when Process~$\procA$ reaches the state $\ell_2$, it
will in fact execute the instruction $\ell_2$. The only thing that can stop
Process~$\procA$ in state $\ell_2$ from executing $\ell_2$, is the register $\rA$
being too busy being read by Process~$\procB$ to find time for being written\linebreak[3] by Process~$\procA$.
Now assume that there exists an amount $t_0$ of time, such that, if for a period of at least $t_0$,
Process~$\procA$ is in state $\ell_2$ and the register $\rA$ is available, in the sense that it
is not being read by Process~$\procB$, then in that period the write action $\ell_2$ will commence.
Now further assume that Process~$\procB$ will spend a period of at least $t_0$ in its critical or in its
noncritical section. Then Process~$\procA$ will have enough time to perform the action $\rA:=\tr$,
and starvation-freedom is ensured. This is the speed-dependent version of Peterson's protocol I propose.
In fact I cannot exclude that mutual exclusion protocols work in practice exactly because timing
constraints such as sketched above are always met.

The above solution is sufficiently clear not to need mathematical proof.
Nevertheless I proceed with an implementation of the above idea in process algebra.
The goal of this is mostly to see which process algebra we need to formalise time-dependent
reasoning such as performed above. Naturally a timed process algebra that associates real numbers to
various passages of time would be entirely equipped for this task. However, I will show that the
idea can already be formalised in a realm of untimed process algebra, in the sense that the
progress of time is not quantified.

\section{CCS with Time-outs}\label{sec:CCSt}

Following \cite{vG21,vG20b}, my process algebra will be CCS$_\rt$, which is CCS, as presented in \Sec{CCS}, but with
$\alpha$ ranging over $Act := \A \dcup \bar\A \dcup \{\tau,\rt\}$, with $\rt$ a fresh \emph{time-out action}.
Relabellings $f$ extend to this extended set of actions $Act$ by $f(\rt):=\rt$.
The interpretation of this language as an LTSC proceeds exactly as in \Sec{LTSC}.

All actions $\alpha \in Act$ are assumed to occur instantaneously.
The time-out action $\rt$ models the end of a time-consuming activity from which we abstract.
When a system arrives in a state $P$, and at that time $X$ is the set of actions allowed
(= not blocked) by the environment, there are two possibilities.
If $P$ has an outgoing transition $u$ with $\ell(u) \in X\cup \{\tau\}$,
the system immediately takes one of the outgoing transitions $u$ with $\ell(u) \in X\cup \{\tau\}$,
without spending any time in state $P$. The choice between these transitions is entirely nondeterministic.
The system cannot immediately take a transition $u$ with $\ell(u)\in A{\setminus}X$, because the
action $\ell(u)$ is blocked by the environment. Neither can it immediately take a transition $u$
with $\ell(u)=\rt$, because such transitions model the end of an activity with a finite but positive
duration that started when reaching state $P$.

In case $P$ has no outgoing transition  $u$ with $\ell(u) \in X\cup \{\tau\}$,
the system idles in state $P$ for a positive amount of time. This idling can end in two possible ways.
Either one of the time-out transitions  $P \goesto\rt Q$ occurs, or the environment spontaneously
changes the set of actions it allows into a different set $Y$ with the property that
$P \goesto a Q$ for some $a \in Y$. In the latter case a transition $u$ with $\source(u)=P$
and $\ell(u)=a \in Y$ occurs. The choice between the various ways to end a period of idling is
entirely nondeterministic.
It is possible to stay forever in state $P$ only if there are no outgoing time-out transitions.%
\footnote{The environment in which a CCS$_\rt$ process $P$ runs can be anthropomorphised as a user behind a
  switchboard who can toggle each visible action as ``blocked'' or ``allowed'' \cite[Section~5]{vG01}.
  The user can toggle switches either as an immediate response on a visible action performed by $P$,
  or at arbitrary points in time. Alternatively, such an environment can be seen as a CCS$_\rt$
  process $E$, that synchronously runs in parallel with $P$, yielding the environment/system
  composition $(E|P){\setminus}A$. As an example, take $P=\rt.c.{\bf 0}+b.{\bf 0}$ and
  $E=\rt.\bar{b}.{\bf 0}+ \bar{c}.{\bf 0}$. This environment first allows only $c$ to occur, but
  after some time allows $b$ (while blocking $c$).
  The composition $(E|P){\setminus}A$ starts by idling, and then performs a $\tau$-transition, that
  from the perspective of $P$ is either $c$ or $b$. Which one depends on which of the two
  time-outs, from $P$ or $E$, occurs first.}

A fundamental law describing the interaction between $\tau$- and $\rt$-transitions, motivated by the
above, is $\tau.P + \rt.Q = \tau.P$. It says that when faced with a choice between a $\tau$- and a
$\rt$-transition, a system will never take the $\rt$-transition. I could have devised an operational
semantics of CCS$_\rt$, featuring negative premises, that suppresses the generation of transitions
$R \goto[C]\rt Q$ when there is a transition $R \goto[D]\tau P$. However, following \cite{vG21,vG20b}, I take a different,
and simpler, approach. The operational semantics of CCS$_\rt$ is exactly like the one of CCS, and
generates such spurious transitions $R \goto[C]\rt Q$; instead, its semantics assures that these
transitions are never taken. In \cite{vG20b} a branching time semantics is proposed, and in
\cite{vG21} a linear-time semantics---the closest approximation of partial trace semantics \cite{vG01}
that yields a congruence for the operators of CCS$_\rt$. Both these semantics satisfy $\tau.P + \rt.Q = \tau.P$.

Here I achieve the same by calling a path \emph{potentially complete} when it features no
transitions $R \goto[C]\rt Q$ when there also exists a transition $R \goto[D]\tau P$.
A completeness criterion now should set apart a subset of the potentially complete paths as being complete.
So all paths containing spurious transitions $R \goto[C]\rt Q$ count as incomplete, and hence do not
contribute to the evaluation of judgements in reactive temporal logic.
In depictions of LTSs for fragments of CCS$_\rt$ I will display the spurious transitions dotted, to
emphasise that they cannot be taken.

A transition $R \goto[C]\rt Q$ also cannot be taken when there is an alternative $R \goto[D] a P$,
with $a$ an action that surely will not be blocked by the environment when the system is in state $R$.
Thus, whether or not a transition is spurious depends on the mood of the environment at the time this
transition is enabled. This dependency is encoded in the semantic equivalences of \cite{vG21} and
\cite{vG20b}. Given this, it was no extra effort to simultaneously inhibit the selection of
transitions that are spurious in any environment.

\section{Spurious Transitions and Completeness Criteria for LTSs with Time-outs}\label{sec:spurious}

In LTSs with $\rt$-transitions, it makes sense to allow judgements $P \models^{CC}_{B,E} \phi$
with $B \subseteq E \subseteq A$, where $A$ is the set of all actions except $\tau$ and $\rt$.
Here $B$ is the set of actions that can be permanently blocked by the environment, and $E$ the ones
that can be blocked for finite periods of time.
Thus, the annotations $B$ and $E$ rule out those environments in which an action from
$A{\setminus}B$ is blocked permanently, or an action from $A{\setminus}E$ is blocked temporarily.
My interest is in the cases $CC={\it Pr}$ and $CC= J$.

\begin{definition}{spurious}
A transition $u$ is \emph{$E$-spurious} if $\ell(u)=\rt$ and there exists a transition $v\in \Tr$
with $\source(v)=\source(u)$ and $\ell(v)\in (A{\setminus}E) \cup \{\tau\}$.  It is \emph{spurious}
iff it is $A$-spurious.
\end{definition}
\noindent
Note that $u$ is spurious iff it is $E$-spurious for all $E$. This is the case iff it is a
$\rt$-transition sharing its source state with a $\tau$-transition. As actions from $A{\setminus}E$ cannot be
blocked by the environment, not even temporarily, $E$-spurious transitions cannot be taken.

\begin{example}{spurious}
Consider the variant of the vending machine from \makebox[91pt]{}\linebreak\vspace{-13.2pt}
\makeatletter
\let\par\@@par
\par\parshape0
\begin{wrapfigure}[7]{r}{0.2\textwidth}
 \vspace{-5ex}
 \input{pretzelOff}
 \centerline{\raisebox{1ex}{\box\graph}}
\end{wrapfigure}
\noindent
\Sec{motivation} that turns itself off after some period of inactivity. This is modelled by the two
time-out transitions on the right. The machine also has an {\it on} transition, to be used by its
operator to start it up. The dashed\linebreak[4] $\rt$-transition is $\{c,{\it on}\}$-spurious: it cannot be taken in an
environment where a user of the machine never blocks the production of a pretzel.
\end{example}

\begin{definition}{BEprogresssing}
A path $\pi$ is \emph{potentially $E$-complete} if it contains no $E$-spurious transitions.
It is \emph{$(B,E)$-progressing} if it (a) is potentially $E$-complete, and
\hyperlink{Bprogressing}{(b)} is either infinite or ends in a state of which all outgoing
transitions have a label from $B$.
It is \emph{$(B,E)$-just} if (a) it is potentially $E$-complete,
and \hyperlink{Bjust}{(b)} for each $u\in\Tr$ with $\ell(u)\notin B$ and whose source state $s := \source(u)$
occurs in $\pi$, any suffix of $\pi$ starting at $s$ contains a transition $v$ with $u \naconc v$.

\hypertarget{quinary}{The judgement $s \models^{\it Pr}_{B,E} \phi$ (resp.\ $s \models^{\it J}_{B,E} \phi$) holds
if $\pi\models\phi$ holds for all $(B,E)$-progressing (resp.\ $(B,E)$-just) paths $\pi$ starting in $s$. }
\end{definition}
\noindent
For a finite path to be complete, its last state may have outgoing transitions with labels from $B$ only,
for a run comes to an end only when all subsequent activity is permanently blocked by the environment.
In the absence of $\rt$-transitions, judgements $s \models^{\it CC}_{B,E} \phi$ are independent of $E$,
and agree with the ones defined in Part~\ref{RTL} of this paper.

\begin{example}{BEprogresssing}
  In \ex{spurious} one has ${\aconc} = \emptyset$, so $(B,E)$-just is the same as $(B,E)$-progressing.
  Let $B=\{c,{\it on}\}$. Each $(B,B)$-just path contains as many $p$- as $c$-transitions (possibly $\infty$).
  However, a $(B,A)$-just path may contain strictly more $c$-transitions.
\end{example}


In the context of the present paper, when describing properties for a given or desired process $P$,
I see no reason to combine judgements $P \models^{CC}_{B,E} \phi$ with different values of $E$.
This suggests writing \plat{$P \models^{CC}_{B,E} \phi$ as $(P,E) \models^{CC}_{B} \phi$}.
This way, the quality criteria of Sections~\ref{sec:formalising FS} and~\ref{sec:formalising mutex}
can remain unchanged, and apply to systems $(P,E)$. Here $P$ is a hypothetical fair scheduler or
mutual exclusion protocol, and $E$ the set of its actions that can be temporarily blocked by the environment.

To gauge the influence of the environment on the visible actions $\en$, $\lni$, $\ec$ and $\lc$ of a
mutual exclusion protocol, one can see the processes $i$ that compete for the critical section as
clients that communicate with the protocol through synchronisation on these actions.
As explained in \Sec{formalising mutex}, the actions $\lni$ belong in $B$ (except when formulating
requirement \hyperlink{ME}{\MEF[]}) because $\lni$ is permanently blocked in case Client $i$
chooses not to leave its noncritical section again. The actions $\lc$ belong in $E$, but not in $B$,
because the client may need some time before leaving its critical section, but is assumed to do this eventually.
As mentioned in \Sec{formalising mutex}, the actions $\ec$ and $\en$ do not belong in $B$, for we
assume the client to eventually enter the (non)critical section when allowed by the protocol.
There is a choice between putting these actions in $E$ or not. Putting them in $E$ models that the
client may delay a while before entering the (non)critical section when allowed, whereas putting them in
$A{\setminus}E$ models that when the protocol for Client $i$ is in its \emph{entry} or \emph{exit} section,
the actual client will patiently wait until granted access to the (non)critical section, and take
advantage of this opportunity as soon as it arises. Taking $E=E_l := \{\lni,\lc \mid i= 1,\dots,N\}$
appears most natural, but taking $E=A=\{\lni,\ec,\lc,\en \mid i= 1,\dots,N\}$ is a reasonable alternative.
The latter leads to stronger judgements, in the sense that when a protocol $P$ is correct when
taking $E:=A$, it is surely correct when taking $E:=E_l$. To model both options at the same time,
in Figure~\ref{PetersonSD} I will draw $E_l$-spurious transitions dashed. Those transitions cannot
be taken when choosing $E:=E_l$, but they can be taken when choosing $E:=A$.

\begin{figure}[ht]
{\tiny
\input{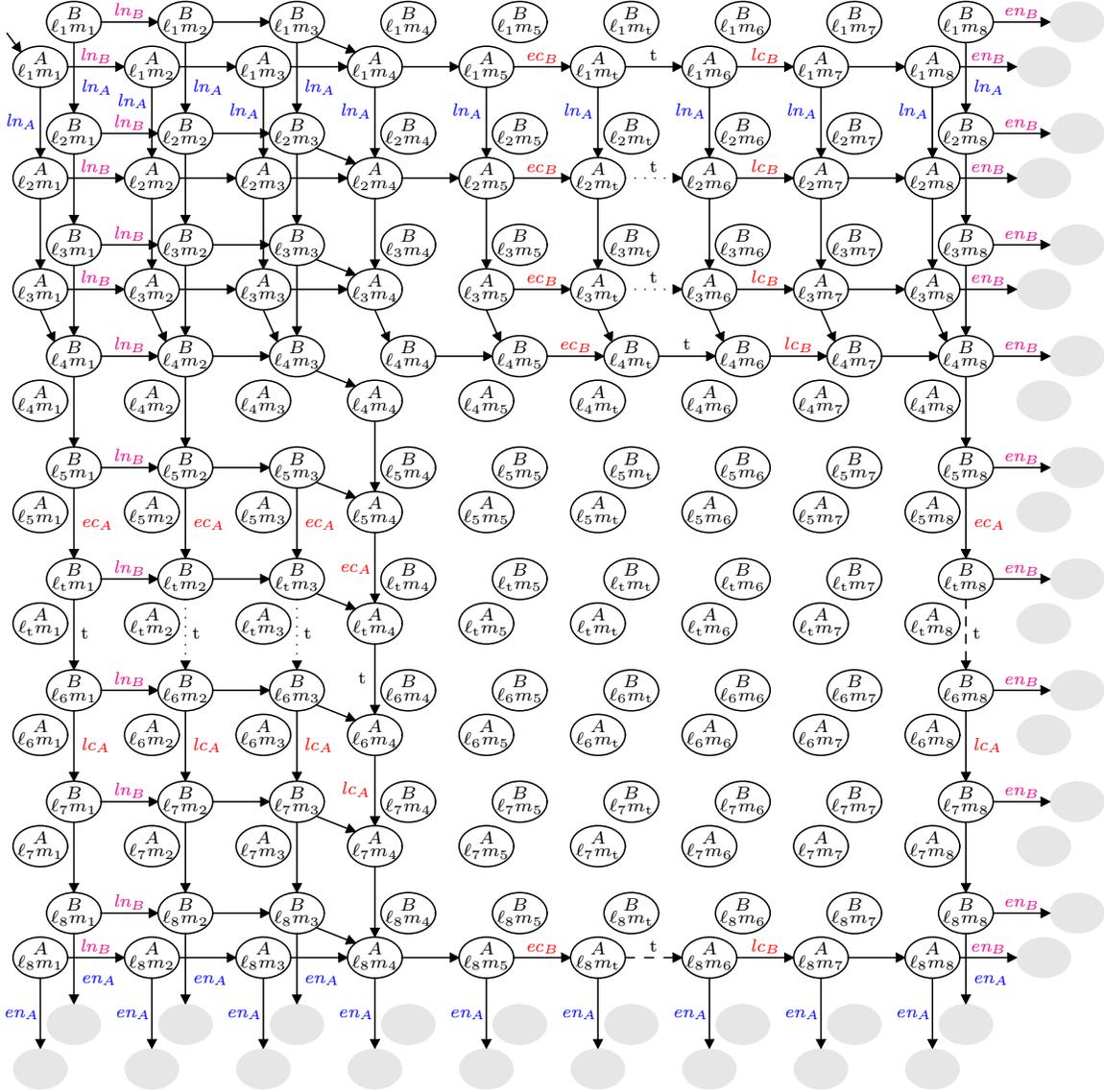}
\centerline{\box\graph}
}
\caption{\it Speed dependent LTS of Peterson's mutual exclusion algorithm}
\label{PetersonSD}
\end{figure}

\section{Modelling Peterson's Protocol in CCS with Timeouts}\label{sec:Peterson timeouts}

My model of Peterson's algorithm in CCS$_\rt$ differs from the one from \Sec{Peterson CCS} in only
one way: a $\rt$-action is inserted  between $\ec$ and $\lc$ for $i=\procA,\procB$, in the two
processes $\procA$ and $\procB$. Thus\hfill
\(
\procA \stackrel{{\it def}}{=} \lnA\mathbin.\overline{\ass{\rA}{\tr}}\mathbin.\overline{\ass{\tu}{B}}\mathbin.
  (\noti{\rB}{\fa}\ +\ \noti{\tu}{A})\mathbin.\ecA\mathbin.\rt\mathbin.
   \lcA\mathbin.\overline{\ass{\rA}{\fa}}\mathbin.\enA\mathbin.\procA.
\)\\
This models that a process spends a positive but finite amount of time in its critical section.
The LTS of the resulting CCS$_\rt$ rendering of Peterson's protocol is displayed in Figure~\ref{PetersonSD}.
Exactly as in \Sec{Peterson}/\ref{sec:Peterson CCS}, it follows that this model satisfies the
requirements
\hyperlink{ME1}{\MEA},
\hyperlink{ME}{\MEB},
\hyperlink{ME}{\MED[Pr]},
\hyperlink{ME}{\MEE[J]} and
\hyperlink{ME}{\MEF[J]}.
Additionally, it satisfies \hyperlink{ME}{\MEC[Pr]}, as follows immediately from the LTS\@.

The same result would be obtained by letting time pass in the noncritical section, instead of, or in
addition to, the critical section. It can be argued that it is not realistic to assume that assignments
like $\ell_2$ and $\ell_3$ occur instantaneously. However, this part of the modelling in CCS$_\rt$ is
merely an abstraction, and can be taken to mean that the time needed to execute such an assignment is
significantly smaller than the time a process spends in its critical and/or noncritical section.
Using CCS$_\rt$, one can also make a model in which time is spent between each two instructions.
In such a rendering one would obtain \hyperlink{ME}{\MEC[J]}, thus needing justness for starvation-freedom.

\addtocontents{toc}{\protect\vspace{10pt}}
\section{Conclusion}

This paper introduces temporal judgements of the form \hyperlink{quinary}{$\D \models^{CC}_{B,E} \phi$}, where
\begin{itemize}
\item $\D$ is a distributed system or its representation as pseudocode, a Petri net, a process in some
  process algebra, or a state in a labelled transition system or in a Kripke structure,
\item $\phi$ is a formula from a temporal logic, such as LTL or CTL,
\item ${\it CC}$ is a completeness criterion, telling which execution paths model actual system runs,
\item and $B$ and $E$ model the influence of the environment on reactive system behaviour, by stipulating
  which actions can be blocked permanently and temporary, respectively.
\end{itemize}
I call this \emph{reactive temporal logic}, or \emph{reactive LTL} when $\phi$ is an LTL formula.
Standard temporal logic has judgements $\D \models \phi$, obtained by default choices for ${\it CC}$, $B$ and $E$.

In the absence of \emph{time-out} transitions, the truth of judgements $\D \models^{CC}_{B,E} \phi$
is independent of $E$, so that $\models$ reduces to a quaternary relation. In this context
I present encodings of reactive LTL into standard LTL, at the expense of adding many
atomic propositions, and I present a fragment of LTL that only describes \emph{safety properties},
telling that nothing bad will ever happen. On this fragment, the values of ${\it CC}$ and $B$ do not matter,
so that reactive temporal judgements say no more than standard ones.

I formulate the correctness requirements of mutual exclusion protocols and fair schedules in
reactive LTL, so that it is unambiguously determined which processes present correct mutual exclusion
protocols, or fair schedules, and which do not. As some of the criteria are parametrised by the
choice of a completeness criterion, I obtain a hierarchy of correctness requirements,
where the choice of a stronger completeness criterion yields a lower quality mutual exclusion
protocol or fair scheduler. 

I formulate two assumptions that are commonly made when studying mutual exclusion, and call them
\emph{atomicity} and \emph{speed independence}. Both stem from Dijkstra's paper in which the mutual
exclusion problem was originally posed. I claim that under these assumptions correct mutual
exclusion protocols do not exist, unless one accepts the lowest quality criteria from the
above-mentioned hierarchy, namely the choice of \emph{fairness} as completeness criterion.
I substantiate this claim in detail for Peterson's mutual exclusion protocol.
I consider the choice of fairness as unwarranted, because the real world is not fair.
Moreover, when fairness would be an acceptable choice I propose a much simpler mutual exclusion
protocol---the \emph{gatekeeper}---that would also be correct, but which I expect would be rejected
by most experts, exactly because its blatant employ of fairness.

I render Peterson's protocol as a Petri net and as an expression in the process algebra CCS\@.
Since both atomicity and speed independence are build in in these formalisms, it is unavoidable
that the so formalised protocol is correct only under the assumption of fairness.

Good requirements for mutual exclusion protocols are obtained by using \emph{justness} as parameter
in the above hierarchy of quality criteria. Justness is a completeness criteria that is weaker than
fairness, and typically warranted in applications. Justness can be formalised in terms of a
concurrency relation between transitions in labelled transition systems or Petri nets.
I use Peterson's protocol to illustrate that any speed independent formalisation of mutual
exclusion (implicitly or explicitly) requires an asymmetric concurrency relation.

\emph{Progress} is a completeness criteria even weaker than justness, so that its use as parameter
in the correctness criteria specifies even higher quality mutual exclusion protocols. I claim that
such protocols do not exist when assuming speed independence, even when dropping the assumption of atomicity.
Also this claim is substantiated in detail for Peterson's protocol.

One alternative for atomicity is to allow read and write actions to overlap in time.
Assuming that two overlapping writes can write any legal value in a register, and a read overlapping
with a write may read any legal value, Lamport's bakery algorithm \cite{bakery} and
Aravind's mutual exclusion algorithm \cite{Ar11} are known to work correctly: they satisfy all my
requirements with justness as parameter. However, the algorithms of Peterson~\cite{Pet81} and
Szyma\'nski's~\cite{Szy88} do not \cite{Spr21,LS21}.

Another alternative to atomicity is to let a write action interrupt a read. This yields an entirely
correct model of Peterson's algorithm, satisfying all requirements with justness as parameter.
This can be modelled, for instance, in terms of Petri nets extended with read arcs \cite{Vogler02},
or CCS extended with signals \cite{CDV09,EPTCS255.2}. These approaches yield an asymmetric
concurrency relation.

Here I present a correct rendering of Peterson's algorithm that assumes atomicity and a symmetric
concurrency relation. It satisfies all requirements with justness as parameter, and the main
starvation-freedom requirement even with progress as parameter. Naturally, this goes at the expense
of speed independence. I formalise this in a variant of the process algebra CCS enriched with
\emph{time-out} transitions. These allow to model the passage of time in a qualitative way,
abstracting from exact durations.

\paragraph{Acknowledgements}
I am grateful to Wan Fokkink, Peter H\"ofner, Bas Luttik, Liam O'Connor, Myrthe Spronck,
Walter Vogler, Weiyou Wang and three reviewers for insightful feedback.

\bibliographystyle{eptcsalpha}
\bibliography{../../../../Biblio/abbreviations,../../../../Biblio/new,../../../../Biblio/dbase,glabbeek}
\end{document}